\def\NAT@def@citea{\def\@citea{\NAT@separator}}% Suppress spaces between citations using natbib.sty
\theoremstyle{plain}% Theorem-like structures provided by amsthm.sty
\newtheorem{theorem}{Theorem}[section]
\newtheorem{lemma}[theorem]{Lemma}
\theoremstyle{definition}
\theoremstyle{remark}
\begin{document}

%\articletype{ARTICLE TEMPLATE}% Specify the article type or omit as appropriate

\title{Arbitrary pattern formation by opaque fat robots on infinite grid}

\author{
\name{Manash Kumar Kundu \textsuperscript{a}\thanks{CONTACT Manash Kumar Kundu Email: manashkrkundu.rs@jadavpuruniversity.in}, Pritam Goswami \textsuperscript{b}\thanks{CONTACT Pritam Goswami Email: pritamgoswami.math.rs@jadavpuruniversity.in}, Satakshi Ghosh\textsuperscript{b}\thanks{CONTACT Satakshi Ghosh Email: satakshighosh.math.rs@jadavpuruniversity.in} and Buddhadeb Sau \textsuperscript{b}\thanks{CONTACT Buddhadeb Sau Email: buddhadeb.sau@jadavpuruniversity.in}}
\affil{\textsuperscript{a}Gayeshpur Government Polytechnic, Department of Science and Humanities, Kalyani, West Bengal - 741234, India; \textsuperscript{b}Jadavpur University, Department of Mathematics, Kolkata , West Bengal - 700032, India.}
}

\maketitle

\begin{abstract}
Arbitrary Pattern formation ($\mathcal{APF}$) by a swarm of mobile robots is a widely studied problem in the literature. Many works regarding $\mathcal{APF}$ have been proposed on plane and infinite grid by point robots. But in practical application, it is impossible to design point robots. In \cite{BoseAKS20}, the robots are assumed opaque fat robots but the environment is plane. To the best of our knowledge, no work till now ever considered the $\mathcal{APF}$ problem assuming opaque fat robots on infinite grid where movements are restricted. In this paper, we have provided a collisionless distributed algorithm and solved $\mathcal{APF}$ using 9 colors.

%which is either asymmetric or has at least one robot on the line of symmetry
%It has been studied under many different assumptions.
%Recently  $\mathcal{APF}$ has been studied on regular tessellation Graphs but under the assumption of full and non restricted visibility with point robots.  In real life scenario it is impossible to design a point robot with non restricted vision. Thus

\end{abstract}

\begin{keywords}
Distributed algorithm; Arbitrary pattern formation; Opaque fat robots;  Luminous robots; Asynchronous; Infinite grid.
\end{keywords}

\section{Introduction}
Nowadays, the distributed system is gaining popularity among researchers due to its many positive aspects. Designing a  centralized system and also maintaining its robustness is not at all cost-effective. But these factors can be handled easily and effectively in a distributed system. Swarm robotics is an example of such distributed system. In swarm robotics, more than one robot is considered in an environment (plane, network etc.). The robots are considered to be autonomous ( i.e they do not have any central control), homogeneous (i.e all robots execute the same algorithm) and identical (they are not indistinguishable by their physical appearance). Before the study of swarm robotics, designing a robot to do a specific task was costly as it would have needed many strong capabilities. But designing a swarm of robots is cheaper than using such robot with many capabilities as the goal now become to design the robots with minimal capabilities such that they can do the same task autonomously. Among many applications of swarm robots military operations, border surveillance, cleaning of a large surface, rescue operations, disaster management etc. are the ones that use the swarm robots vividly in present days. So, it is evident why swarm robotics has gained such popularity in the industry and among researchers in the current scenario. 

Among many problems (eg. gathering, scattering, exploration) Arbitrary Pattern Formation ($\mathcal{APF}$) is a classical problem in the field of swarm robotics. In this problem, a swarm of robots which are deployed in an environment (plane, graph etc.), need to form an already decided pattern which is given as input to the robots. The robots can move freely in the plane but in the case of a graph, they always move through an edge of the graph. There are mainly four models of robots depending upon their capabilities. These models are $\mathcal{OBLOT}$, $\mathcal{FSTA}$, $\mathcal{FCOM}$ and $\mathcal{LUMI}$. In all of these models, robots are considered to be autonomous, homogeneous, identical and anonymous (i.e the robots do not have any unique identifiers). In the $\mathcal{OBLOT}$ model, the robots are considered to be oblivious (i.e the robots do not have any persistent memory to remember any previous state) and silent (i.e the robots do not have any means of communication among themselves). In the $\mathcal{FSTA}$ model, the robots are silent but not oblivious. In the $\mathcal{FCOM}$ model, the robots are oblivious but not silent. And in the $\mathcal{LUMI}$ model, the robots are neither silent nor oblivious. There are many works of $\mathcal{APF}$ which has considered the $\mathcal{OBLOT}$ model in literature \cite{BoseAKS20,BoseKAS21,BoseAKS20grid,FlocchiniPSW08,AdhikaryKS21}. 

Activation time of the robots plays an important role to design algorithms for a swarm of robots. It is assumed that a scheduler controls the activation of robots during the execution of any algorithm. Mainly, there are three types of schedulers that have been considered in many previous works. These schedulers are $\mathcal{FSYNC}$ or a fully synchronous scheduler, $\mathcal{SSYNC}$ or a semi-synchronous scheduler and $\mathcal{ASYNC}$ or an asynchronous scheduler. In the case of a $\mathcal{FSYNC}$ scheduler, time is divided into global rounds of the same duration and each robot is activated at the beginning of each round. $\mathcal{SSYNC}$ scheduler is a more general version of the $\mathcal{FSYNC}$ scheduler. In the case of the $\mathcal{SSYNC}$ scheduler, time is divided into global rounds of the same duration as it has been done for $\mathcal{FSYNC}$ scheduler. But at the beginning of each round, the set of activated robots can be a proper subset of the set of all robots (i.e. all robots may not get activated at the beginning of each round). Now, in the case of the $\mathcal{ASYNC}$ scheduler, there is no sense of global rounds. Any robot can get activated at any time. So, the $\mathcal{ASYNC}$ scheduler is more general and realistic among all the scheduler models.  

In any model, the robots can be considered as transparent or opaque. In the case of transparent robots, a robot can see another robot even if there are other robots between them. But in the case of opaque robots, a robot can not see another robot if there are other robots between them. There are many works where both these models have been considered (\cite{BoseAKS20,BoseKAS21,BoseAKS20grid,FlocchiniPSW08,absGridOpaque,BramasT16,0001FSY15,AdhikaryKS21,BhagatCM18} ). 
Opaque robots can be considered to be dimensionless (i.e point robots) (\cite{BoseKAS21,BoseAKS20grid,FlocchiniPSW08,FelettiMP18,AdhikaryKS21,BhagatCM18,abstriangulargidgather,CohenP08,LukovszkiH14}) or they can have some dimension (i.e fat robots) \cite{BoseAKS20}.
 In the literature on $\mathcal{APF}$, there are many works which have considered the robots to be dimensionless (i.e point robots) and opaque \cite{AdhikaryKS21,BoseKAS21}. But in practical application, it is impossible to design a point robot as any physical object must have some dimensions. So in our work, we have considered the robots to have some dimension. In fact, we have considered the robots to be a disc of radius `$rad$\rq, where $ rad \le \frac{1}{2}$.
 
In this paper, we are interested in the problem of $\mathcal{APF}$ on an infinite grid where the robots are considered to be fat and opaque and are placed on distinct vertices of the grid. The goal is to design an algorithm  $\mathcal{A}$ such that the robots after executing $\mathcal{A}$ form a pattern which is provided to each of the robots as input. In this paper, we have provided such an algorithm that solves the problem of $\mathcal{APF}$ under an $\mathcal{ASYNC}$ scheduler. 
\subsection{Earlier Works}
The arbitrary pattern formation problem was introduced first in \cite{Suzuki96distributedanonymous} and it has become a popular topic for research. It has been vastly studied under different types of environments and different types of settings (\cite{BoseAKS20,BoseKAS21,BoseAKS20grid,FlocchiniPSW08,absGridOpaque,BramasT16,0001FSY15,FelettiMP18,AdhikaryKS21,LukovszkiH14,BramasT18,CiceroneSN19,CiceroneSN19a,DieudonnePV10}). In most of these works, the basic assumption was that the robots are points and they do not have obstructed visibility. But in a practical application-based scenario designing a point robot is impossible because every physical object has a certain dimension. So in \cite{BoseAKS20}, authors have considered a swarm of fat and opaque robots and shown that this swarm can form any given pattern from any asymmetric configuration without collision under the $\mathcal{LUMI}$ model using 10 colours in a plane. This luminous model was first introduced in \cite{Peleg05} by Peleg. The visible lights can be used as a means of communication and persistent memory.
Designing a collision-free algorithm in plane is easier than handling collision in discrete domain. This is because, in plane the robots can move freely in any direction avoiding other robots but in discrete domain, there can be only one single path to reach from one point to another point. This is why many researchers became interested to study the problem of $\mathcal{APF}$ in discrete domain. In \cite{BoseAKS20grid}, an algorithm for $\mathcal{APF}$ has been provided for a swarm of point robots on an infinite grid but considering full and Unobstructed visibility. Now in \cite{AdhikaryKS21}, considering obstructed visibility model the authors have shown that a circle can be formed on an infinite grid from any initial configuration if the opaque point robots in the swarm have one-axis agreement and 7 colours. Then, in \cite{absGridOpaque} authors have presented another algorithm where a swarm of opaque point robots on an infinite grid can form the given pattern in finite time using one-axis agreement and 8 colours. But none of these works considered fat robots on infinite grid and solve the problem of $\mathcal{APF}$.

To the best of our knowledge, there is no work till now which has considered fat robots on infinite grid and provided any algorithm for arbitrary pattern formation on the grid. So, in this paper, we have considered a swarm of opaque fat robots on an infinite grid and provided an algorithm (\textsc{ApfFatGrid}) where the swarm can form a predefined given pattern on the grid using 9 colours.

\subsection{Problem description and our contribution}
This paper deals with the problem of arbitrary pattern formation on an infinite grid using luminous opaque fat robots with 9 colours. The robots are considered to be a disk having a fixed radius `$rad$\rq, which is less or equal to $\frac{1}{2}$. The robots manoeuvre in a \textsc{Look-Compute-Move} (LCM) cycle under an adversarial asynchronous scheduler. The robots are autonomous, anonymous, identical and homogeneous. The robots only move to one of its four adjacent grid points and their movement is considered to be instantaneous (i.e a robot can only be seen on a grid point). The robots have one-axis agreement. Here, it is assumed that the robots do not agree upon any global coordinate though all robots agree on the direction and orientation of the $x$-axis . Initially, the centre of each robot is on a grid point of the infinite grid and a target pattern is provided to each of them. The robots are needed to agree on a global coordinate system and embed the target pattern according to the global coordinate and then move to the target locations to form the target pattern.

The main difficulty of $\mathcal{APF}$ lies in the problem of Leader Election problem. For that, the initial configuration is assumed to be asymmetric or there is at least one robot on some line of symmetry. Even with this assumption, it is quite hard to elect a leader as the vision of the robots becomes obstructed since the robots are opaque and fat. So, the main challenge of this problem is to elect a leader depending on the local view of each robot. The algorithm described in this paper does so. Another massive challenge of this problem is to avoid collision during the movement of robots on the grid. Our algorithm handles this by providing sequential movement of the robots and for this purpose $\mathcal{LUMI}$ model has been used. 

The problem, we have considered in this paper, is very practical in nature. Restricted movement, robots with dimension and obstructed visibility all these assumptions are very much practical in terms of designing robots. The algorithm presented in this paper solves the $\mathcal{APF}$ on infinite grid with a swarm of luminous, opaque and fat robots with finite time. A comparison table is provided below which will help readers to compare our work to the previous such works.

\vspace{0.01\linewidth}

% \begin{center}
%     \begin{tabular}{ | m{3em} | m{1.5cm}| m{1.5cm}| m{2.3cm}| m{1.8cm}| m{2cm}| } 
% \hline
%  \textbf{Paper} & \textbf{Env.} & \textbf{Vision} &\textbf{Robot Type} &\textbf{ \# colours}& \textbf{axis agreement}\\
%  \hline
%   \cite{BoseAKS20grid} & Grid & Non obstructed & point & 0& no agreement\\
%   \hline
%     \cite{BoseKAS21}& Plane & Opaque & point & 6 & one-axis agreement\\
%   \hline

%     \cite{absGridOpaque} & Grid & Opaque & point & 8 & one-axis agreement\\
%   \hline
%   \cite{BoseAKS20}& Plane & Opaque & fat & 10 & one-axis agreement\\
%   \hline
%   \textbf{This paper} &  Grid & Opaque & fat & 9 & one-axis agreement\\
%   \hline
% \end{tabular}
% \end{center}

\begin {table}[ht]

\begin{center}
 \begin{tabular}{ | m{4em} | m{2.5cm}| m{2.5cm}| m{2.3cm}| m{1.8cm}| m{3cm}| } 
\hline
 \textbf{Paper} & \textbf{Environment} & \textbf{Visibilty} &\textbf{Robot Type} & \textbf{\#Colours}\\
 \hline
  \cite{BoseAKS20grid} & Grid & Unobstructed & Point & 0\\
  \hline
    \cite{BoseKAS21}& Plane & Opaque & Point & 6 \\
  \hline

    \cite{absGridOpaque} & Grid & Opaque & Point & 8 \\
  \hline
  \cite{BoseAKS20}& Plane & Opaque & Fat & 10 \\
  \hline
  \textbf{This paper} &  Grid & Opaque & Fat & 9\\
  \hline
\end{tabular}
\end{center}
\end{table}

\section{Model and Definitions}\label{model}

\subsection{Model}

~~~~~\textbf{Grid:} The infinite two-dimensional grid $\mathcal{G}$ is a weighted graph $\mathcal{G} = (V,E) $ such that each node $v \in V$ has four adjacent nodes $v_0, v_1, v_2$ and $v_3 \in V$ and the edges $vv_{i \pmod {4}} \in E $ is perpendicular to the edge $vv_{i+1 \pmod{4}} \in E$. Also, the weight of each edge $e \in E$ is  basically the length of the edge $e$ which is considered to be 1 unit in this work.

~~\textbf{Robots:} In this work, a set of $n$ robots $R = \{r_0, r_1, \dots, r_{n-1}\}$ are considered to be autonomous, anonymous, homogeneous and identical. This means that the robots do not have any central control, they do not have any unique identifiers such as IDs and they are indistinguishable by their physical appearance. The robots are also considered to have some dimension i.e. the robots are considered to be a disk of radius `$rad$\rq ($rad \le \frac{1}{2}$) rather than points. The robots are deployed on a two-dimensional infinite grid $\mathcal{G}$, where each of them is initially positioned in such a way that their centre is on distinct grid points of $\mathcal{G}$. The robots are considered to have an agreement over the direction and orientation of $x$-axis i.e, all the robots have an agreement over left and right but the robots do not have any agreement over the $y$-axis. Also, they do not have knowledge of any global coordinate system other than their agreement over the direction of $x$-axis. %A robot in $R$ does not know anything about the total number of robots in $\mathcal{G}$. 
Here in this paper, we have considered the robots to have light. A light of any robot can have $\mathcal{O}(1)$ distinct colours. A robot $r \in R$ can see the colour of its own light and the colour of the lights of other robots that are visible to $r$. In this work, we have assumed that the light of each robot has nine distinct colours namely \texttt{off}, \texttt{terminal1}, \texttt{candidate}, \texttt{call}, \texttt{moving1}, \texttt{reached}, \texttt{leader1}, \texttt{leader} and \texttt{done}.

% \begin{floatingfigure}[r]{7.0cm}

%   \fontsize{7pt}{7pt}\selectfont
%   \def\svgwidth{0.4\textwidth}
%   \import{}{pdf/stack1.pdf_tex}
%   \caption{Illustrations for the geometric definitions given in Section \ref{model}.}
%   \label{stack}

% \end{floatingfigure}

\textbf{Look-Compute-Move cycles:} A robot $r \in R$, when active, operates according to the \textsc{Look-Compute-Move} (LCM) cycle. In the \textsc{Look} phase, a robot takes the snapshot of the configuration to get the positions represented in its own local coordinate system and the colours of the light of all other robots visible to it. Then, $r$ performs the computation phase where it decides the position of the adjacent grid point where it will move next and changes the colour of its light if necessary depending on the input it got from the \textsc{Look} phase. In the \textsc{Move} phase, $r$ moves to the decided grid point or makes a null move. The movements of robots are restricted only along grid lines from one grid point to one of its four adjacent grid points. The movements of robots are assumed to be instantaneous in discrete domain. Here, we assume that the movements are instantaneous i.e., they are always seen on grid points, not on edges. 
%Also the move of the robots are considered to be rigid i.e. a robot $r$, if moves from a grid point say $v_1$ to reach another grid point say $v_2$ it does not stop before it reaches $v_2$.

\textbf{Scheduler:} We assume that the robots are controlled by an asynchronous adversarial scheduler. That implies the duration of the three phases \textsc{Look}, \textsc{Compute} and \textsc{Move} are finite but unbounded. So, there is no common notion of round for this asynchronous scheduler.

\textbf{Visibility:} The visibility of robots is unlimited but by the presence of other robots it can be obstructed. A robot $r_i$ can see another robot $r_j$ if and only there is a point $p_{r_j}$ on the boundary of $r_j$ and $p_{r_i}$ on the boundary of $r_i$ such that the line segment $\overline{p_{r_i}p_{r_j}}$ does not intersect with any point occupied by other robots in the configuration. Now, it follows from the definition that $r_i$ can see $r_j$ implies $r_j$ can see $r_i$.

\textit{\textbf{Configuration:}} We assume that the robots are placed on  the infinite two-dimensional grid $\mathcal{G}$. Next we define a function $f:V \rightarrow \{0,1\}$, where $f(v)$ is the number of robots placed on a grid point $v$. Then $\mathcal{G}$ together with the function $f$ is called a configuration which is denoted by $ \mathbb{C}=(\mathcal{G},f)$. For any time $T$, $\mathbb{C}(T)$ will denote the configuration of the robots at $T$.

\subsection{Notations and Definitions}
We have used some notations throughout the paper. A list of these notations is mentioned in the following table.

\vspace{0.01\linewidth}
\begin{center}
\begin{tabular}{ | m{4em} | m{10cm}| } 
\hline
$\mathcal{L}_1$& First vertical line on left that contains at least one robot.\\
  \hline
     $\mathcal{L}_V(r)$ & The vertical line on which the robot $r$ is located.\\
    \hline
    $\mathcal{L}_H(r)$ & The horizontal line on which the robot $r$ is located.\\
    \hline
     $\mathcal{L}_I(r)$ & The left immediate vertical line of robot $r$ which has at least one robot on it. \\
   \hline
     $\mathcal{R}_I(r)$&  The right immediate vertical line of robot $r$ which has at least one robot on it. \\
    \hline
    ${H}_L^O(r)$ &  Left open half for the robot $r$. \\
    \hline
     ${H}_L^C(r)$ & Left closed half for the robot $r$ (i.e  ${H}_L^O(r) \cup \mathcal{L}_V(r)$).\\
     \hline
     ${H}_B^O(r)$ &  Bottom open half for the robot $r$.\\
     \hline
     ${H}_B^C(r)$ &  Bottom closed half for the robot $r$ (i.e  ${H}_B^O(r) \cup \mathcal{L}_H(r)$).\\
     \hline
     ${H}_U^O(r)$ & Upper open half for the robot $r$.\\
    \hline
    ${H}_U^C(r)$ & Upper closed half for the robot $r$ (i.e  ${H}_U^O(r) \cup \mathcal{L}_H(r)$). \\
    \hline
    $K$ & The horizontal line passing through the middle point of the line segment between two robots with light \texttt{candidate} or \texttt{call} or \texttt{reached} on the same vertical line. \\
    \hline
    % $L_{H1}$ & The immediate horizontal line above the robot with light \texttt{leader}. \\
    %   \hline
    $l_{next}(r)$ & The next vertical line on the right of $\mathcal{L}_V(r)$.\\
    \hline
    $\mathcal{H}_{last}$ & The lowest horizontal line having a robot with colour \texttt{done}.\\
    \hline
    
\end{tabular}
\end{center}

%\newpage
\textit{\textbf{Terminal Robot:}} A robot $r$ is called a terminal robot if there is no robot below or above $r$ on $\mathcal{L}_V(r)$.

\textit{\textbf{Symmetry of a vertical line $L$ w.r.t $K$:}} Let $\lambda$ be a binary sequence defined on a vertical line $L$ such that $i$-th term of $\lambda$ is defined as follows:
 \begin{equation*}
 \lambda(i) = \begin{cases}
       1 & \text{if $\exists$ a robot on the $i$-th grid point from $K \cap L$ on the line $L$.} \\
       0 & \text{otherwise.}
     \end{cases}
\end{equation*}
Since there are two $i$-th grid points from $K \cap L$ on the line $L$ (above $K$ and below $K$), there are two such values of $\lambda$, say $\lambda_1$ and $\lambda_2$. If $\lambda_1 = \lambda_2$, then $L$ is said to be symmetric with respect to $K$. Otherwise, it is said to be asymmetric with respect to $K$. Henceforth, whenever the symmetry of a line is mentioned, it means the symmetry of the line with respect to $K$.

\textit{\textbf{Dominant half:}} A robot $r$ is said to be in the dominant half if for $\lambda_1 > \lambda_2$ (lexicographically) on $\mathcal{R}_I(r)$, $r$ and the portion of $\mathcal{R}_I(r)$ corresponding to $\lambda_1$ lie on same half-plane delimited by $K$.

\vspace{0.01\linewidth}

\section{The Algorithm}
The main result of the paper is Theorem \ref{thm1.1}. The proof of the ‘only if’ part is the same as in the case for point
robots, proved in \cite{BoseKAS21}. The ‘if’ part will follow from the algorithm presented in this section.

\begin{theorem}
\label{thm1.1}
For a set of opaque fat robots having one-axis agreement, $\mathcal{APF}$ is deterministically solvable if and only if the initial configuration is not symmetric with respect to a line $K$ such that 1) $K$ is parallel to the agreed axis and 2) $K$ is not passing through any robot.
\end{theorem}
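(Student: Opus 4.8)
For the \emph{only if} direction I would argue by contraposition, reusing the standard symmetry-preservation argument (this is the part literally inherited from \cite{BoseKAS21}). Suppose the initial configuration $\mathbb{C}(0)$ is symmetric with respect to a line $K$ that is parallel to the agreed axis and passes through no robot. Since the robots agree on only one axis, the reflection across $K$ maps the configuration to itself \emph{and} maps each robot's local snapshot to the snapshot of its mirror image; mirror robots are therefore indistinguishable. An adversarial $\mathcal{ASYNC}$ scheduler can then activate every mirror pair simultaneously and in lockstep, so that the two robots compute mirror-image destinations and mirror-image colour updates. Because $K$ contains no robot and every move is to an adjacent grid point, no robot can ever land on $K$, so the symmetry about $K$ is an invariant of the execution. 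Hence no asymmetric target pattern is ever reachable, and since asymmetric target patterns exist, $\mathcal{APF}$ is not deterministically solvable from $\mathbb{C}(0)$.

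For the \emph{if} direction I would present the algorithm \textsc{ApfFatGrid} and prove correctness via a phase argument with three stages. \textbf{Stage 1 (leader election):} the competition takes place on $\mathcal{L}_1$, the leftmost occupied vertical line, among its terminal robots (colour \texttt{terminal1}). Using the sequence $\lambda$ read from $K\cap\mathcal{L}_1$ and the notion of dominant half, and exploiting the hypothesis (the configuration is asymmetric, or a robot already lies on the candidate symmetry line), the colours \texttt{candidate}, \texttt{call}, \texttt{moving1}, \texttt{reached}, \texttt{leader1}, \texttt{leader} orchestrate a sequential protocol that singles out exactly one robot; I would prove the stage terminates and at termination exactly one robot has colour \texttt{leader} while all others are \texttt{off}, with no two robots ever sharing a grid point. \textbf{Stage 2 (coordinate agreement):} the unique \texttt{leader} fixes the origin, the agreed $x$-axis supplies one axis with orientation, and the leader's position determines the positive $y$-direction; every robot then embeds the input pattern $P$ deterministically and consistently in this global system, with all target positions lying in the half-plane ``ahead of'' the leader. \textbf{Stage 3 (pattern formation):} robots move one at a time to their assigned target locations and switch to colour \texttt{done} on arrival, with $\mathcal{H}_{last}$ tracking progress so that the next mover always knows its target and a collision-free route to it (move vertically, then horizontally, hugging already-placed robots). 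I would prove by induction on the number of \texttt{done} robots that at every step a well-defined next mover and next target exist, the route it traverses is unoccupied when traversed (so no collision, using $rad\le\tfrac12$ to check clearance for the disks), and after $n$ steps the configuration equals the embedded copy of $P$. Finiteness is immediate since each LCM cycle is finite and there are $n$ moves.

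\textbf{Main obstacle.} The genuinely hard part is Stage 1 with fat, opaque robots. Obstructed visibility means a terminal robot of $\mathcal{L}_1$ may not see all of $\mathcal{L}_1$, and the ``helper'' moves that reveal the relevant part of the configuration, or break a residual symmetry, must be shown to be executable without collision, without pushing any robot onto $K$, and without destroying the invariants that let the robots still recover $K$ and the dominant half. Equivalently, one must rule out that the $\mathcal{ASYNC}$ adversary, by letting a robot take a snapshot while another robot is mid-competition, manufactures a ``fake'' symmetry or a view in which two robots each believe they are the leader. Essentially all of the case analysis lives in these interleavings; by contrast, Stages 2 and 3 are comparatively routine adaptations of the point-robot grid algorithms of \cite{BoseAKS20grid,absGridOpaque}, the only extra care being the path-clearance checks for radius-$rad$ disks on the restricted grid.
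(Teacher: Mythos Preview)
Your high-level decomposition matches the paper's: the only-if direction is exactly the symmetry-preservation argument inherited from \cite{BoseKAS21}, and the if direction is the algorithm \textsc{ApfFatGrid}, established phase by phase. Two substantive pieces of the paper's construction are missing from your plan, though, and without them the argument does not go through.

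First, in Stage~1 you describe a one-shot symmetry test on $\mathcal{L}_1$. The paper's Phase~1 is \emph{iterative}. The two \texttt{candidate} robots step one grid line to the left of $\mathcal{L}_1$ and then test $\mathcal{R}_I(r)$ for symmetry with respect to $K$. If that line happens to be symmetric, they switch to \texttt{call}; the robots on $\mathcal{R}_I(r)$ become \texttt{moving1} and migrate onto the candidates' vertical line; the candidates cycle through \texttt{reached} back to \texttt{candidate} and test the \emph{new} $\mathcal{R}_I(r)$. This loop repeats until either an asymmetric line or a robot on $K$ is found. The hypothesis of the theorem guarantees only that the \emph{whole} configuration is not $K$-symmetric (or that some robot sits on $K$), not that the first vertical line tested is asymmetric; a single test therefore does not suffice, and a correct proof must show this loop terminates and elects a unique \texttt{leader1}.

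Second, your Stage~3 skips the line-formation sub-phase on which the paper's target assignment rests. After the leader $r_0$ is fixed, the paper first has every non-leader robot move (via \textsc{GotoLine}) onto $\mathcal{L}_H(r_0)$ so that the configuration becomes the contiguous segment $(0,-1),(1,-1),\ldots,(n-1,-1)$. This intermediate canonical configuration is precisely what gives each robot a rank, and hence a well-defined target $t_j$; only then do robots peel off one by one via \textsc{GotoTarget}, each reading its index from the remaining segment on $\mathcal{L}_H(r_0)$. Your phrase ``assigned target locations'' has no mechanism behind it without this step, and $\mathcal{H}_{last}$ (which in the paper is used only inside \textsc{LeaderMove} so that $r_0$ can recover the global frame after leaving $(0,-1)$) does not by itself tell an arbitrary robot who moves next or where.
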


For the rest of the paper, we shall assume that the initial configuration $\mathbb{C}(0)$ does not admit the
unsolvable symmetry stated in Theorem \ref{thm1.1}. Our Algorithm executes in two phases. In the first phase, a leader is elected and in the second phase, the robots form the target pattern embedded on the grid using the location of the leader as an agreement to the origin of a global coordinate system. The phases are described in detail in the following subsections.

\subsection{Phase 1}
Initially, at $\mathbb{C}(0)$ all the robots are on the grid $\mathcal{G}$ with colour \texttt{off}. Note that in $\mathbb{C}(0)$, there are at least one and at most two terminal robots on $\mathcal{L}_1$. These robots change their colours to \texttt{terminal1}. A robot with colour \texttt{terminal1} changes its colour to \texttt{candidate} and moves if it sees it has its left open half empty. Also, if a robot $r$ with colour \texttt{candidate} is a singleton in $H_{L}^{C}(r)$ and all robots in $\mathcal{R}_I(r)$ are \texttt{off}, it changes its colour to \texttt{leader1}. Note that due to the asynchronous scheduler, it might happen that $r$ is a singleton in $H_{L}^{C}(r)$ with colour \texttt{candidate} and there is another robot $r'$ on $\mathcal{R}_I(r)$ with colour \texttt{terminal1}. In this case, if $r$ awakes, it does not change its colour to \texttt{leader1} as it does not see all robots on $\mathcal{R}_I(r)$ have colour \texttt{off}. Also if $r'$ awakes, it sees $r$ with colour \texttt{candidate} in $\mathcal{L}_I(r')$ and turns its colour to \texttt{off}. In this scenario, $r$ becomes singleton in $H_{L}^{C}(r)$ and sees all robots on $\mathcal{R}_I(r)$ have colour \texttt{off}. So, $r$ changes its colour to \texttt{leader1}. Now consider that both $r$ and $r'$ are on the same vertical line $\mathcal{L}_V(r)$ with colour  \texttt{candidate} such that there is no robot between $\mathcal{L}_H(r)$ and $\mathcal{L}_H(r')$. 
Note that in this configuration, all robots on $\mathcal{R}_I(r)$ (i.e. $\mathcal{R}_I(r')$) have colour \texttt{off}. In this case, both $r$ and $r'$ check the symmetry of $\mathcal{R}_I(r)$ with respect to the line $K$ (i.e the horizontal line which is equidistant from both $r$ and $r'$). If $\mathcal{R}_I(r)$ is not symmetric with respect to $K$, then one of $r$ or $r'$ whichever is on the dominant half changes the colour to \texttt{leader1}. On the other hand, if $\mathcal{R}_I(r)$ is symmetric, both the robots $r$ and $r'$ change their colours to \texttt{call} from \texttt{candidate}. Now all the robots on $\mathcal{R}_I(r)$ have colour \texttt{off} and all of them can see exactly two robots with colour \texttt{call} on their left immediate line. Note that since all the robots on $\mathcal{R}_I(r)$ can see both $r$ and $r'$, all of them also know the line $K$. Now if there is any robot on $K$, it changes its colour to \texttt{leader1}. Otherwise, the robots on $\mathcal{R}_I(r)$ (at least one and at most two robots) which are closest to $K$ change the colours to \texttt{moving1}. A robot with colour \texttt{off} on $\mathcal{R}_I(r)$ which is not closest to $K$, changes its colour to \texttt{moving1} when it sees another robot with colour \texttt{moving1} on the same vertical line. Note that after a finite time, at least  all robots either above or below $K$ which are on $\mathcal{R} _I(r)$ change their colours to \texttt{moving1} if $\mathcal{R}_I(r)$ is symmetric with respect to $K$.
Now suppose a robot with colour \texttt{moving1} say $r_1$, is terminal on $\mathcal{R}_I(r)$. Also, note that $r_1$ can see  at least one of $r$ and $r'$ on $\mathcal{L}_I(r_1)$. Now, if $r_1$ sees another robot $r_2$ on $\mathcal{L}_V(r_1)$ and no robot with colour \texttt{reached} on $\mathcal{L}_I(r_1)$, it moves vertically  opposite to $r_2$. Otherwise, if it is singleton on $\mathcal{L}_V(r_1)$ and sees no robot with colour \texttt{reached} on $\mathcal{L}_I(r_1)$, it moves vertically according to its positive $y-$axis until there is no robot either in $H_{U}^{C}(r_1) \cap \mathcal{L}_I(r_1)$ or in $H_{B}^{C}(r_1) \cap \mathcal{L}_I(r_1)$ and then towards left until it reaches $\mathcal{L}_V(r)$. Now when $r$ or $r'$ with colour \texttt{call} sees all robots on $\mathcal{R}_I(r)$ have colour \texttt{off} and sees a robot with colour \texttt{moving1} or \texttt{reached} on the same vertical line, then it changes its colour to \texttt{reached}. Due to the asynchronous environment, it might happen that one of $r$ or $r'$ does not see a robot with colour \texttt{moving1} on the same vertical line, but it is guaranteed that it will see a robot with colour \texttt{reached} on the same vertical line after a finite time. So, $r$ or $r'$ can change their colours to \texttt{reached} if all robots on $\mathcal{R}_I(r)$ have colour \texttt{off} and there is a robot on $\mathcal{L}_V(r)$ with colour \texttt{reached}. So in finite time, both the robots with colour \texttt{call} change their colours to \texttt{reached} (when there was no robot on $K \cap \mathcal{R}_I(r)$). Now a robot say $r_3$ with colour \texttt{moving1} on $\mathcal{L}_V(r)$ moves to the left if all robots on $\mathcal{R}_I(r_3)$ are with colour \texttt{off} and it can see a robot with colour \texttt{reached} on $\mathcal{L}_V(r_3)$. Due to asynchrony, it may happen that $r$ and $r'$ changed their colours to \texttt{reached} and after that a robot say $r_4$, on $\mathcal{R}_I(r)$ changes its colour to \texttt{moving1}. Observe that in this case, the robot $r_4$ changes its colour to \texttt{off} whenever it sees at least one robot with colour \texttt{reached} on $\mathcal{L}_I(r_4)$, otherwise the robots with colour \texttt{moving1} on $\mathcal{L}_V(r)$ will not move left. So, after a finite time, all robots with colour \texttt{moving1} on $\mathcal{L}_V(r)$ move to $\mathcal{L}_1$ and at this moment $r$ and $r'$ will be the only two robots with colour \texttt{reached} on $\mathcal{L}_V(r)$ that are terminal also. In this situation, $r$ and $r'$ change their colours to \texttt{candidate}. Now for asynchrony, it may happen that $r$ and $r'$ changed their colours to \texttt{candidate} and after that a robot say $r_5$, on $\mathcal{R}_I(r)$ changes its colour to \texttt{moving1}. In this case, the robot $r_5$ changes its colour to \texttt{off} whenever it sees at least one robot with colour \texttt{candidate} on $\mathcal{L}_I(r_5)$. Therefore, then $ r$ and $r'$ are with colours \texttt{candidate} and all robots on $\mathcal{R}_I(r)$ have colour \texttt{off}. So, they again check the symmetry of the new $\mathcal{R}_I(r)$ repeating the whole process. Thus after a finite time, a robot with colour \texttt{off} or a robot $r$ or $r'$ with colour \texttt{candidate} whoever is on dominant half changes its colour to \texttt{leader1}. A robot say $r_l$ with colour \texttt{leader1} always moves to the left when it sees other robot in $H_{L}^{C}(r_l)$ or $l_{next}(r_l)$, no robot with colour \texttt{call} on $\mathcal{L}_I(r_l)$ and no robot with colour \texttt{candidate} on $\mathcal{L}_V(r_l)$. 
\begin{algorithm}
\caption{\textsc{ApfFatGrid}: Phase 1}
  \label{leader_selection}
\LinesNumbered
% This is to hide end and get the last vertical line straight
% \SetKwInOut{Input}{Input}
%     \SetKwInOut{Output}{Output}
%     \SetKwProg{Fn}{Function}{}{}
    \SetKwProg{Pr}{Procedure}{}{}
\SetAlgoLined
       \footnotesize                        
  \Pr{\textsc{Phase1()}}{
  
  $r \leftarrow$ myself
  
   \SetAlgoVlined \uIf{$r.light =$ \texttt{off}}
         {
         
         \uIf{there is no robot in $H_{L}^{O}(r)$, no robot with light \texttt{leader1} in $\mathcal{R}_I(r) \cup \mathcal{L}_V(r)$ and $r$ is terminal on $\mathcal{L}_V(r)$}{$r.light \leftarrow$ \texttt{terminal1}}%\\ move left}
         \uElseIf{there are exactly two robots in $\mathcal{L}_I(r)$ and their lights are \texttt{call} and $r$ is closest to $K$}{

                        \uIf{$r$ is on $K$}{$r.light =$ \texttt{leader1} }
                        
                        \Else{$r.light =$ \texttt{moving1}}

                                                 }
        \ElseIf{there is a robot with light \texttt{moving1} in $\mathcal{L}_V(r)$ }{
                            $r.light =$ \texttt{moving1}
                            }

         }
         
         \SetAlgoVlined \uElseIf{$r.light =$ \texttt{terminal1}} 
             {
             
             \uIf{there is no robot in $H_{L}^{O}(r)$}{$r.light \leftarrow$ \texttt{candidate}\\ move left}
             
             \ElseIf{there is a robot with light \texttt{candidate} in $\mathcal{L}_I(r)$}{$r.light =$ \texttt{off}}
             
             }
    
    \SetAlgoVlined \uElseIf{$r.light =$ \texttt{candidate}}
         {
         
         \uIf{$r$ is singleton in $H_{L}^{C}(r)$ and all robots in $\mathcal{R}_I(r)$ are \texttt{off} }{$r.light \leftarrow$ \texttt{leader1}}
         
         \ElseIf{there is a robot with light \texttt{candidate} or \texttt{call} on $\mathcal{L}_V(r)$, $r$ is terminal on $\mathcal{L}_V(r)$ and all robots in $\mathcal{R}_I(r)$ are \texttt{off}}{

                        \uIf{$\mathcal{R}_I(r)$ is symmetric with respect to $K$}{$r.light =$ \texttt{call} }
                        
                        \Else{\uIf{$r$ is in the dominant half}{}$r.light =$ \texttt{leader1}}
             
                                                                          }

         \ElseIf{there is a robot with light \texttt{leader1} on $\mathcal{L}_V(r)$}{$r.light =$ \texttt{off}}

         } 
        \SetAlgoVlined \uElseIf{$r.light =$ \texttt{moving1}}
         {
         
         \uIf{there is at least one robot with light \texttt{call} and no robot with light \texttt{reached} in $\mathcal{L}_I(r)$ and $r$ is terminal on $\mathcal{L}_V(r)$}{

                        \uIf{there is other robot both in $H_{U}^{C}(r) \cap \mathcal{L}_I(r)$ and $H_{B}^{C}(r) \cap \mathcal{L}_I(r)$ }{
                            \uIf{there is a robot $r'$ on $\mathcal{L}_V(r)$}{move opposite to $r'$}
                        \Else{move according to its positive $y-$axis}
                        }
                        
                        \Else{move left}

                                                                      }
         \ElseIf{there is a robot with light \texttt{reached} on $\mathcal{L}_V(r)$ and all robots in $\mathcal{R}_I(r)$ are \texttt{off}}{move left}                                             
          \ElseIf{there is at least one robot with light \texttt{reached} or \texttt{candidate} in $\mathcal{L}_I(r)$}{$r.light =$ \texttt{off}}    
    
         }  
  }
\end{algorithm}
\newpage

\begin{algorithm}
% \SetAlgoLined
\setcounter{AlgoLine}{41}
% This is to restore vline mode if you did not take the package as \usepackage[linesnumbered,ruled,vlined]{algorithm2e}
%   \SetAlgoVlined
%This is to hide Begin keyword
\SetKwBlock{Begin}{}{end}
%\SetKwProg{Loop}{LOOP}{}{}
\footnotesize
\Begin{
    \uElseIf{$r.light =$ \texttt{call}}{
                                        \uIf{there is a robot with light \texttt{moving1} or, \texttt{reached} on $\mathcal{L}_V(r)$ and all robots in $\mathcal{R}_I(r)$ are \texttt{off}}{$r.light =$ \texttt{reached}}
                                        
                                        \ElseIf{there is a robot with light \texttt{leader1} in $\mathcal{R}_I(r)$}{$r.light =$ \texttt{off}}

                                        }
    
    \uElseIf{$r.light =$ \texttt{reached}}{
                                        \If{there is a robot with light \texttt{reached} or \texttt{candidate} on $\mathcal{L}_V(r)$, $r$ is terminal on $\mathcal{L}_V(r)$ and all robots in $\mathcal{R}_I(r)$ are \texttt{off}}{$r.light =$ \texttt{candidate}}
    
                                          }

    \ElseIf{$r.light =$ \texttt{leader1}}
        {

        \uIf{there is other robot in $H_{L}^{C}(r)$ or $l_{next}(r)$, no robot with light \texttt{call} in $\mathcal{L}_I(r)$ and no robot with light \texttt{candidate} on $\mathcal{L}_V(r)$}
                   {move left
                   }
        \Else{\uIf{there is other robot both in $H_{U}^{C}(r)$ and $H_{B}^{C}(r)$}{move vertically according to its positive $y-$axis}
              \Else{$r.light =$ \texttt{leader}}

             }

        }
  }
\end{algorithm}
Note that, a robot with colour \texttt{call} changes its colour to \texttt{off} if it sees a robot with colour \texttt{leader1} on its right immediate line. Also, a robot with colour \texttt{candidate} changes the colour to \texttt{off} when it sees a robot with colour \texttt{leader1} on the same vertical line. $r_l$ moves to the left until it becomes the singleton robot on the leftmost line of the configuration and there is no robot on $l_{next}(r_l)$ and then moves according to its positive $y-$axis until either one of $H_{U}^{C}(r_l)$ and $H_{B}^{C}(r_l)$ has no other robot. Note that it may happen due to the asynchronous environment that another robot with colour \texttt{candidate} moves to $l_{next}(r_l)$ while $r_l$ is on $\mathcal{L}_1$. In this case, when $r_l$ activates again it finds out it has non-empty $l_{next}(r_l)$ and moves left again even it was moving vertically in the previous activation. In this situation, when $r_l$ reaches a point where either one of $H_{U}^{C}(r_l)$ and $H_{B}^{C}(r_l)$ has no other robot, it changes its colour to \texttt{leader} and \textit{Phase 1} ends.

The following Theorem \ref{fth1} and Lemmas \ref{flemma1}$-$\ref{last_ph1} justify the correctness of the Algorithm 1.

\begin{theorem}
\label{fth1}
For any initial configuration $\mathbb{C}(0)$, $\exists$ $T > 0$ such that $\mathbb{C}(T)$ have exactly two robots with light \texttt{candidate} or exactly one robot with light \texttt{leader1} in $L_1$.
\end{theorem}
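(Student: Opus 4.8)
\section*{Proof proposal for Theorem \ref{fth1}}

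The plan is to trace the evolution of the configuration starting from $\mathbb{C}(0)$ and to show that within finitely many activations one of the two announced situations occurs. First I would isolate the structural fact that governs the opening of Phase~1: in $\mathbb{C}(0)$ every robot has colour \texttt{off}, and among the three rules available to an \texttt{off} robot only the first (turn \texttt{terminal1}) can possibly fire, since the other two require the presence of a robot coloured \texttt{call} or \texttt{moving1}, of which there are none; moreover that first rule is enabled precisely for the terminal robots of $\mathcal{L}_1$, because any robot lying off $\mathcal{L}_1$ has a non-empty left open half. Since $\mathcal{L}_1$ carries one or two terminal robots (as noted in the description of Phase~1), by fairness we reach after finitely many steps a configuration in which exactly those one or two robots are coloured \texttt{terminal1} and everything else is unchanged.

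I would then split on the number of terminal robots on $\mathcal{L}_1$. If there is a single one, $r_0$: when next activated it still sees an empty left open half (it is the unique robot on the leftmost line), so it becomes \texttt{candidate} and steps one unit left onto a line $\mathcal{L}_1'$ on which it is alone. From that point $r_0$ is a singleton in $H_L^C(r_0)$, and $\mathcal{R}_I(r_0)$ --- the old $\mathcal{L}_1$ with $r_0$ deleted, or the next populated line --- contains only \texttt{off} robots, nothing on those columns having ever changed. Hence at its following activation $r_0$ becomes \texttt{leader1}, and the resulting configuration has exactly one \texttt{leader1} robot on the current leftmost line $L_1=\mathcal{L}_1'$.

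If $\mathcal{L}_1$ has two terminal robots $r_t,r_b$, I would distinguish the scheduler's interleaving of their first colour-change-and-move actions. If one of them, say $r_t$, carries out the \texttt{terminal1}$\to$\texttt{candidate} move and $r_b$'s next \textsc{Look} occurs after $r_t$ has stepped onto the new leftmost line $\mathcal{L}_1'$, then $r_b$'s left open half now contains $r_t$, so $r_b$ cannot move; instead it sees a \texttt{candidate} robot in $\mathcal{L}_I(r_b)$ and turns \texttt{off}. After that, $\mathcal{R}_I(r_t)$ (the middle robots of the old $\mathcal{L}_1$, if any, together with $r_b$) is entirely \texttt{off} and $r_t$ is a singleton in $H_L^C(r_t)$, so the first \texttt{candidate} rule fires and $r_t$ becomes \texttt{leader1}; along the way one must account for the asynchronous possibility, already flagged in the text, that $r_t$ is activated while $r_b$ is still \texttt{terminal1}, in which case no rule of $r_t$ fires, but $r_b$ is bound to turn \texttt{off} within finitely many steps, after which $r_t$ becomes \texttt{leader1}. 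In the remaining sub-case the scheduler lets both $r_t$ and $r_b$ compute on snapshots in which the other is still on $\mathcal{L}_1$ with an empty left open half; then both become \texttt{candidate} and step left onto the same line $\mathcal{L}_1'$, on which they are the only two robots, giving a configuration with exactly two \texttt{candidate} robots on the leftmost line.

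The delicate point is the bookkeeping of the asynchronous interleavings: one must check that in every intermediate configuration along each branch no robot other than $r_0$ (resp.\ $r_t,r_b$) is enabled to recolour or move --- which each time reduces to verifying that no robot lying off the current leftmost line has an empty left open half and that no \texttt{call}, \texttt{moving1}, \texttt{reached} or \texttt{leader1} lights are yet present --- and that the line called $L_1$ in the statement is indeed the one occupied by the robot(s) we end up with. I would also remark that no cycling can occur in this stage: the only backward transitions here, \texttt{terminal1}$\to$\texttt{off} and \texttt{candidate}$\to$\texttt{off}, each require seeing a \texttt{candidate} or \texttt{leader1} light that is itself produced only by forward progress, so every branch terminates after a bounded number of activations and the required time $T>0$ is finite.
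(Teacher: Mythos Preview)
Your proof is correct and follows essentially the same approach as the paper: split on whether $\mathcal{L}_1$ carries one or two terminal robots, and in the two-robot case analyse the asynchronous interleavings of the \texttt{terminal1}$\to$\texttt{candidate}-and-move transitions. Your dichotomy (one robot's move completes before the other's next \textsc{Look}, versus both robots take that \textsc{Look} on snapshots predating any move) is a slightly cleaner organisation than the paper's enumeration into Cases I, II, III(a,b,c), but the underlying argument is identical; the paper is simply more explicit about the individual timing orderings, while you fold them into two scenarios and add the side remarks that no other robot is ever enabled and no backward transition can cause cycling.
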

\begin{proof}
Observe that there can be at least one and at most two robots in $\mathbb{C}(0)$ such that they have their left open half empty and are terminal on $\mathcal{L}_1$.
Let there is only one robot $r_1$, who has $H_L^O(r_1)$ empty and is terminal on $\mathcal{L}_1$ (Figure \ref{fig:sigletonTerminal1}). This implies $r_1$ is singleton on $\mathcal{L}_1$. In this case, $r_1$ changes its colour to \texttt{terminal1} at some time $T' > 0$ and eventually changes to \texttt{leader1} at a time $T > T'$.  
\begin{figure}[ht]
    \centering
    \includegraphics[height=6cm,width=6cm]{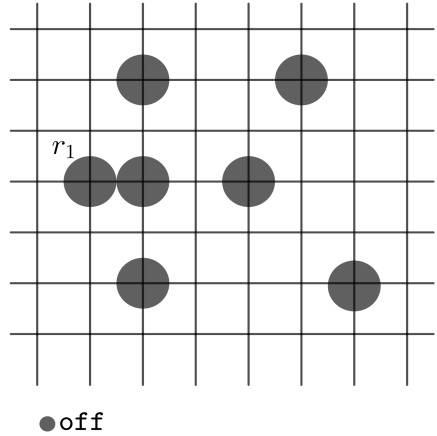}
    \caption{$r_1$ is singleton robot on $\mathcal{L}_1$.}
    \label{fig:sigletonTerminal1}
\end{figure}

Now let us consider the case where there are two robots $r_1$ and $r_2$ such that both $r_1$ and $r_2$ are terminal on $\mathcal{L}_1$ in $\mathbb{C}(0)$. Now if any one of $r_1$ or $r_2$ awakes, it changes its colour to \texttt{terminal1}. A robot with colour \texttt{terminal1} moves left after changing its colour to \texttt{candidate} if it has its left open half empty. Due to asynchronous environment, the following cases may occur.

\textbf{Case-I:} Let us consider the case where $r_1$ already changed its colour to \texttt{candidate} from \texttt{terminal1} and moved to $\mathcal{L}_1$ at a time $T_1 > 0$ and $r_2$ wakes after $T_1$ (Figure \ref{fig:Lemma1case1}). Then $r_2$ remains with colour \texttt{off} as it sees it is not on $\mathcal{L}_1$ anymore. Then $r_1$ during the next activation sees it is singleton on $H_L^C(r_1)$ and all robots on $\mathcal{R}_I(r_1)$ have colour \texttt{off}. So, it changes its colour to \texttt{leader1}. 
\begin{figure}[!htb]\centering
   \begin{minipage}{0.45\textwidth}
    \includegraphics[height=6cm,width=6cm]{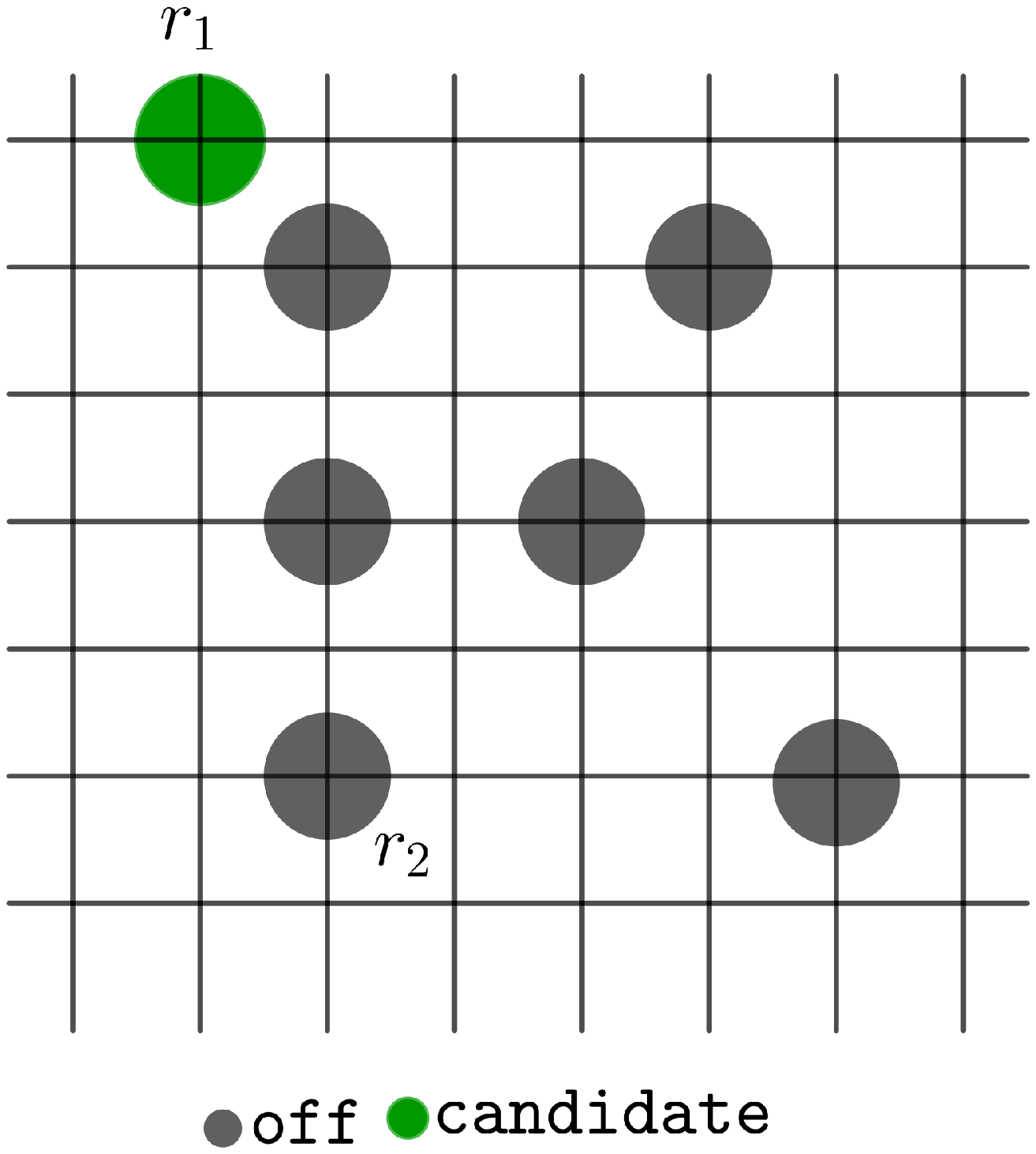}
     \caption{$r_1$ changes its colour to \texttt{candidate} and moves to $\mathcal{L}_1$  at time $T_1$ and $r_2$ wakes after time $T_1$.}\label{fig:Lemma1case1}
   \end{minipage}
   \hfill
   \begin {minipage}{0.45\textwidth}
    \includegraphics[height=6cm,width=6cm]{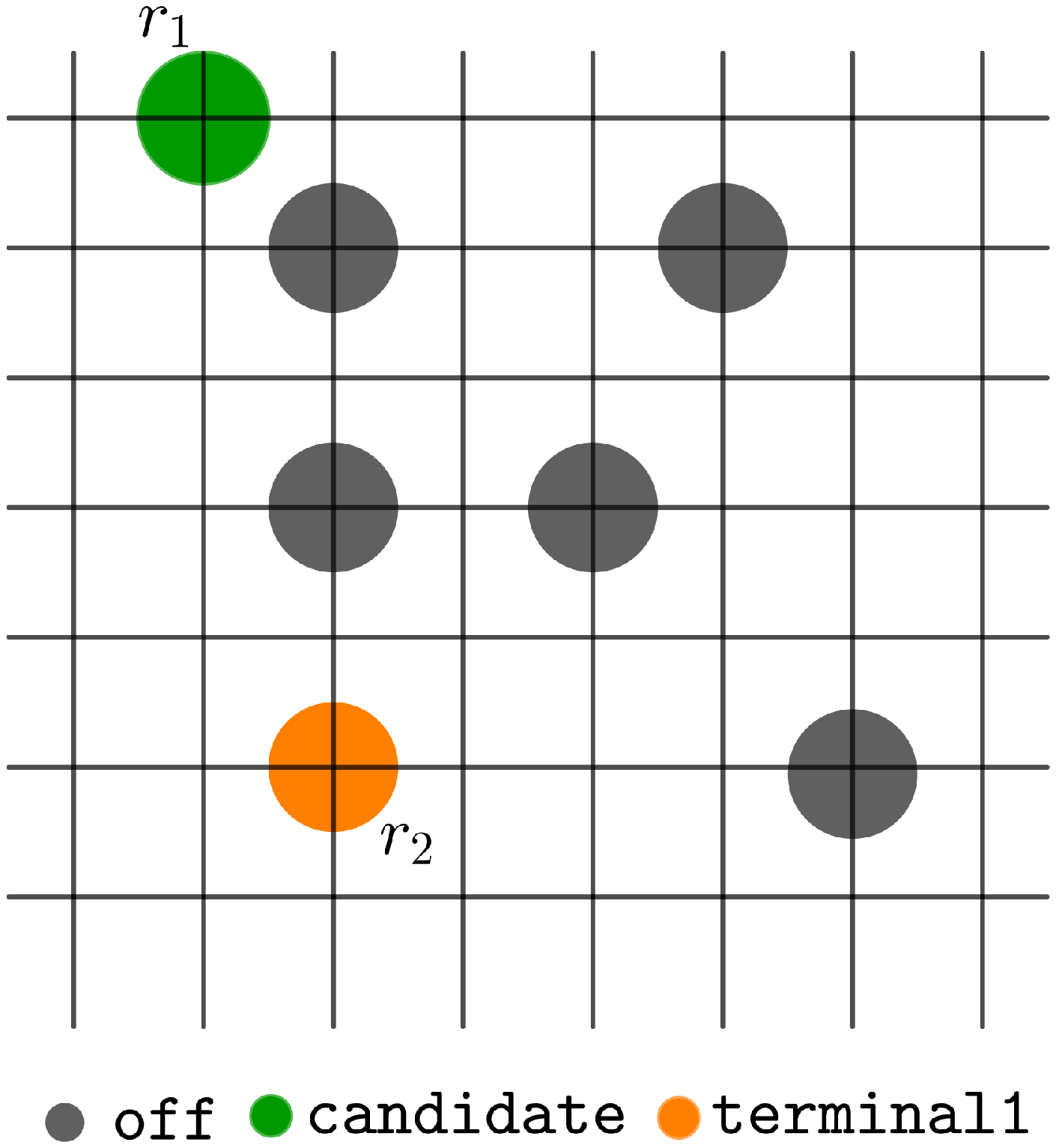}
     \caption{$r_1$ changed its colour to \texttt{candidate} and moves to $\mathcal{L}_1$ at time $T_1$ and $r_2$ changes its colour to \texttt{terminal1} at time $T_2 \ge T_1$. }\label{Fig:Lemma1case2}
   \end{minipage}
\end{figure}

\textbf{Case-II:} Let us consider the case where $r_1$ already changed its colour to \texttt{candidate} from \texttt{terminal1} and moved to $\mathcal{L}_1$ at a time $T_1 > 0$ and $r_2$ wakes  before $T_1$ and changes its colour to \texttt{terminal1} at a time $T_2 \ge T_1$ (Figure \ref{Fig:Lemma1case2}).  Now if again $r_1$ wakes between the times $T_1$ and $T_2$ (in this scenario $T_1 > T_2$), then it sees all robots on $\mathcal{R}_I(r_1)$ have colour \texttt{off} and $r_1$ is singleton on $H_L^C(r_1)$. So, $r_1$ changes its colour to \texttt{leader1} and $r_2$ does not change its colour as $H_L^C(r_2)$ has other robots. Now if $r_1$ wakes at a time $T_3$ where $T_3 > T_2$ and $r_2$ has not woke again, then it does not change its colour to \texttt{leader1} as it sees $r_2$ with colour \texttt{terminal1} on $\mathcal{R}_I(r_1)$. Now when $r_2$ wakes again at a time $T_4 > T_2$, it changes its colour to \texttt{off} as it sees $r_1$ with colour \texttt{candidate} on $\mathcal{L}_I(r_2)$. Now when $r_1$ wakes after $T_4$ again, it sees it is singleton on $H_L^C(r_1)$ and have all robots with colour \texttt{off} on $\mathcal{R}_I(r_1)$ and so changes its colour to \texttt{leader1}.

\textbf{Case-III:} Let us consider the case where $r_1$ already changed its colour to \texttt{candidate} from \texttt{terminal1} and moved to $\mathcal{L}_1$ at a time $T_1 > 0$ and $r_2$ wakes  before $T_1$ and changes its colour to \texttt{terminal1} at a time $T_2 < T_1$. Now, let $r_2$ wakes again at a time $T_3$.

\textbf{Case-III(a):} Now if $T_3 > T_1$, then even if $r_1$ wakes again between $T_3$ and $T_1$, it sees $r_2$  with colour \texttt{terminal1} on $\mathcal{R}_I(r_1)$. So, it does not change its colour to \texttt{leader1}. Now $r_2$ at time $T_3$ wakes and sees $r_1$ with colour \texttt{candidate} on $\mathcal{L}_I(r_2)$ and so changes its colour to \texttt{off}. Now $\exists $ $T_4$ such that $r_1$ wakes at $T_4 > T_3$ and sees it is singleton on $H_L^C(r_1)$ and have all robots on $\mathcal{R}_I(r_1)$ with colour \texttt{off}. So, $r_1$ changes its colour to \texttt{leader1}.

\textbf{Case-III(b):} 

\begin{figure}[ht]
    \centering
    \includegraphics[height=5.3cm,width=6cm]{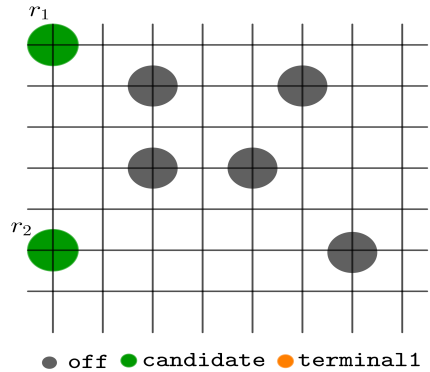}
    \caption{$r_2$ is with colour \texttt{terminal1} and $r_1$ changes its colour to \texttt{terminal1} and both $r_1$ and $r_2$ move to $\mathcal{L}_1$ changing their colour to \texttt{candidate at the same time.}}
    \label{fig:2candidate}
\end{figure}
Now if $T_3 = T_1$, then both $r_1$ and $r_2$ changes its colour to \texttt{candidate} and moves to $\mathcal{L}_1$. In this case, there will be two robots with colour \texttt{candidate} on $\mathcal{L}_1$ (Figure \ref{fig:2candidate}).
\textbf{Case-III(c):} Now when $T_3 < T_1$, it is similar as case-III(a).

Note that above all cases are exhaustive and in each case there is a time $T > 0$ such that in $\mathbb{C}(T)$, there is either one robot with colour \texttt{leader1} or two robots with colour \texttt{candidate} on $\mathcal{L}_1$.
\end{proof}

\begin{lemma}
\label{flemma1}
Any robot $r$ with colour \texttt{candidate} or, \texttt{call} or, \texttt{reached} always can see all robots with colour \texttt{off} in $\mathcal{R}_I(r)$ (if exist) and vice versa. 
\end{lemma}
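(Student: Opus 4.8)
The plan is to pin down where the \texttt{candidate}/\texttt{call}/\texttt{reached} robots can be, and then exhibit one explicit unobstructed segment of sight; the ``vice versa'' part will come for free from the symmetry of the visibility relation.

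First I would record the invariant maintained throughout Phase~1: at any instant every robot coloured \texttt{candidate}, \texttt{call} or \texttt{reached} lies on one common vertical line $L$, there are at most two such robots, between any two of them $L$ carries no robot (and if there is only one special robot it is terminal on $L$), and the only other robots $L$ may carry are \texttt{moving1} robots, which by the movement rule for \texttt{moving1} always come to rest strictly above the topmost or strictly below the bottommost special robot. Combined with the definition of $\mathcal{R}_I(r)$ — no occupied vertical line lies strictly between $\mathcal{L}_V(r)=L$ and $\mathcal{R}_I(r)$ — this tells me that any robot which could obstruct a segment joining a special robot $r$ to an \texttt{off} robot $r_t\in\mathcal{R}_I(r)$ must lie on $L$, on $\mathcal{R}_I(r)$, or on a vertical line whose $x$-coordinate is $\le x_L-1$ or $\ge x_R+1$, where $x_L$ and $x_R\ (\ge x_L+1)$ are the $x$-coordinates of $L$ and $\mathcal{R}_I(r)$.

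For the forward direction let $r$ be at $(x_L,y_r)$ and $r_t$ at $(x_R,y_t)$, and take as candidate line of sight the segment $s$ from the rightmost point $(x_L+rad,\,y_r)$ of $r$'s disc to the leftmost point $(x_R-rad,\,y_t)$ of $r_t$'s disc. Since $rad\le\frac12$ and $x_R-x_L\ge1$, the $x$-coordinates of points of $s$ fill exactly the interval $[\,x_L+rad,\ x_R-rad\,]$. Now I check each family from the previous paragraph: a robot on a vertical line with $x$-coordinate $\le x_L-1$ has its disc inside $\{x\le x_L-1+rad\}\subseteq\{x<x_L+rad\}$, so it misses $s$; similarly a robot on a line with $x$-coordinate $\ge x_R+1$ misses $s$; a robot on $L$ other than $r$ meets $\{x=x_L+rad\}$ only at the single point $(x_L+rad,y)$ with $y\ne y_r$ (distinct grid points) and its disc does not reach larger $x$, so it misses $s$; symmetrically a robot of $\mathcal{R}_I(r)$ other than $r_t$ misses $s$. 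Hence $s$ is free of other robots, so $r$ sees $r_t$. The only situation left open is $rad=\frac12$ together with $x_R=x_L+1$, when $s$ degenerates to the vertical segment $\{x_L+\tfrac12\}\times[y_t,y_r]$ and a robot of $L$ at a grid height strictly between $y_r$ and $y_t$ would touch it; here I would again use the invariant — $r_t$ sits, heightwise, in the closed span delimited by the special robots of $L$, that open span is empty on $L$, the other special robot (if any), being terminal, is only an endpoint of the span, and the \texttt{moving1} robots on $L$ are strictly outside it — so no robot of $L$ has height strictly between $y_r$ and $y_t$ and $s$ is clear. Finally, the visibility relation is symmetric (Section~\ref{model}), so $r$ sees $r_t$ iff $r_t$ sees $r$, which gives ``vice versa''.

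I expect the genuine effort to lie not in the geometry — choosing $s$ and bounding disc extents from $rad\le\frac12$ and the unit grid spacing is routine — but in establishing the invariant of the first paragraph, in particular that a \texttt{moving1} robot migrating onto $L$ always settles outside the span of the special robots rather than between them, and that the \texttt{off} robots of $\mathcal{R}_I(r)$ stay within that span through each ``symmetric'' cycle. These follow from the movement rules for \texttt{moving1} and from tracking how $\mathcal{R}_I(r)$ evolves, but they are the part that needs care.
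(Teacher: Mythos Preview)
Your overall architecture is sound and in several respects cleaner than the paper's: you make the structural invariant about the special robots on $L$ explicit, and you reduce the visibility question to checking one concrete segment against each family of potential obstructors. The paper instead builds a convex trapezoid between $\mathcal{L}_V(r)$ and the left tangent of $\mathcal{R}_I(r)$ and argues that the sight segment lies inside it; it relies on the same invariant but only implicitly.

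There is, however, a genuine gap in your degenerate case $rad=\tfrac12$, $x_R=x_L+1$. You correctly note that $s$ collapses to the vertical segment $\{x_L+\tfrac12\}\times[y_t,y_r]$ and then use the invariant to rule out obstructors on $L$. But exactly the same breakdown happens on the $\mathcal{R}_I(r)$ side, and you do not address it: any other \texttt{off} robot on $\mathcal{R}_I(r)$ at a grid height $y'$ strictly between $y_t$ and $y_r$ has its disc tangent to the line $x=x_L+\tfrac12$ at the point $(x_L+\tfrac12,y')$, which lies on $s$. Your invariant does not forbid such robots --- indeed there will typically be several \texttt{off} robots stacked on $\mathcal{R}_I(r)$ between the heights of the two special robots --- so your chosen $s$ is obstructed whenever $r_t$ is not the \texttt{off} robot on $\mathcal{R}_I(r)$ closest in height to $r$. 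Your sentence ``symmetrically a robot of $\mathcal{R}_I(r)$ other than $r_t$ misses $s$'' is justified only in the non-degenerate case.

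The fix is to change the sight segment in (at least) the adjacent-column case, and this is precisely what the paper does in its Case~II: instead of the rightmost point of $r$, take as starting point the point of $r$'s boundary on $\mathcal{L}_V(r)$ nearest to the other special robot, namely $(x_L,\,y_r\mp rad)$, and keep the leftmost point $(x_R-rad,\,y_t)$ of $r_t$ as endpoint. That segment meets the line $x=x_R-rad$ only at height $y_t$, so the discs of the other robots on $\mathcal{R}_I(r)$ (which touch $x=x_R-rad$ only at their own heights) cannot block it; on the $L$ side the segment stays within the vertical strip between $r$ and $r'$, which your invariant already shows is empty. With this altered endpoint your case analysis goes through.
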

\begin{proof}
Let us consider that there is exactly one robot $r$ and no other robot is on $\mathcal{L}_V(r)$. In this case, it is obvious that $r$ can see all robots including the robots with colour \texttt{off} on $\mathcal{R}_I(r)$ and vice versa.

Now let us consider that there are at least two robots $r$ and $r'$ on $\mathcal{L}_V(r)$ where colour of $r$ and $r'$ can be any one of \texttt{candidate}, \texttt{call} or, \texttt{reached} and $r$ is above $r'$. Now there are two cases.

\textbf{Case-I:} There are other empty vertical lines between $\mathcal{L}_V(r)$ and  $\mathcal{R}_I(r)$. In this case, let us take the common tangent $line_1$ of all robots on $\mathcal{R}_I(r)$ which is parallel to the line $\mathcal{R}_I(r)$ and nearest to $\mathcal{L}_V(r)$ and similarly take the common tangent $line_2$ of the robots on line $\mathcal{L}_V(r)$ parallel to $\mathcal{L}_V(r)$ and nearest to $\mathcal{R}_I(r)$. Let us denote the points where $line_1$ touches the terminal robots on $\mathcal{R}_I(r)$ as $p_1$ and $p_2$ respectively ($p_1$ is above $p_2$) and the points where $line_2$ touches $r'$ and $r$ as $p_3$ and $p_4$ respectively. Now let us draw a line segment say $line_3 = \overline{p_4p_1}$ and $line_4 = \overline{p_3p_2}$ . Observe that the area bounded by the lines $line_1, line_2, line_3$ and $line_4$ is a trapezoid which is a convex set containing no other robot (Figure \ref{fig:Lemma2case1}). Let $r_1$ be any robot with colour \texttt{off} on $\mathcal{R}_I(r)$. Let $line_1$ touches the robot $r_1$ at a point say $P$. Then the line segments $\overline{Pp_3}$ and $\overline{Pp_4}$ contains no robot on them. So, each of $r$ and $r'$ can see $r_1$. Thus $r$ and $r'$ can see all robots with colour \texttt{off} on $\mathcal{R}_I(r)$ and all robots with colour \texttt{off} can see both of $r$ and $r'$.

\begin{figure}[!htb]\centering
   \begin{minipage}{0.45\textwidth}
    \includegraphics[height=6cm,width=6cm]{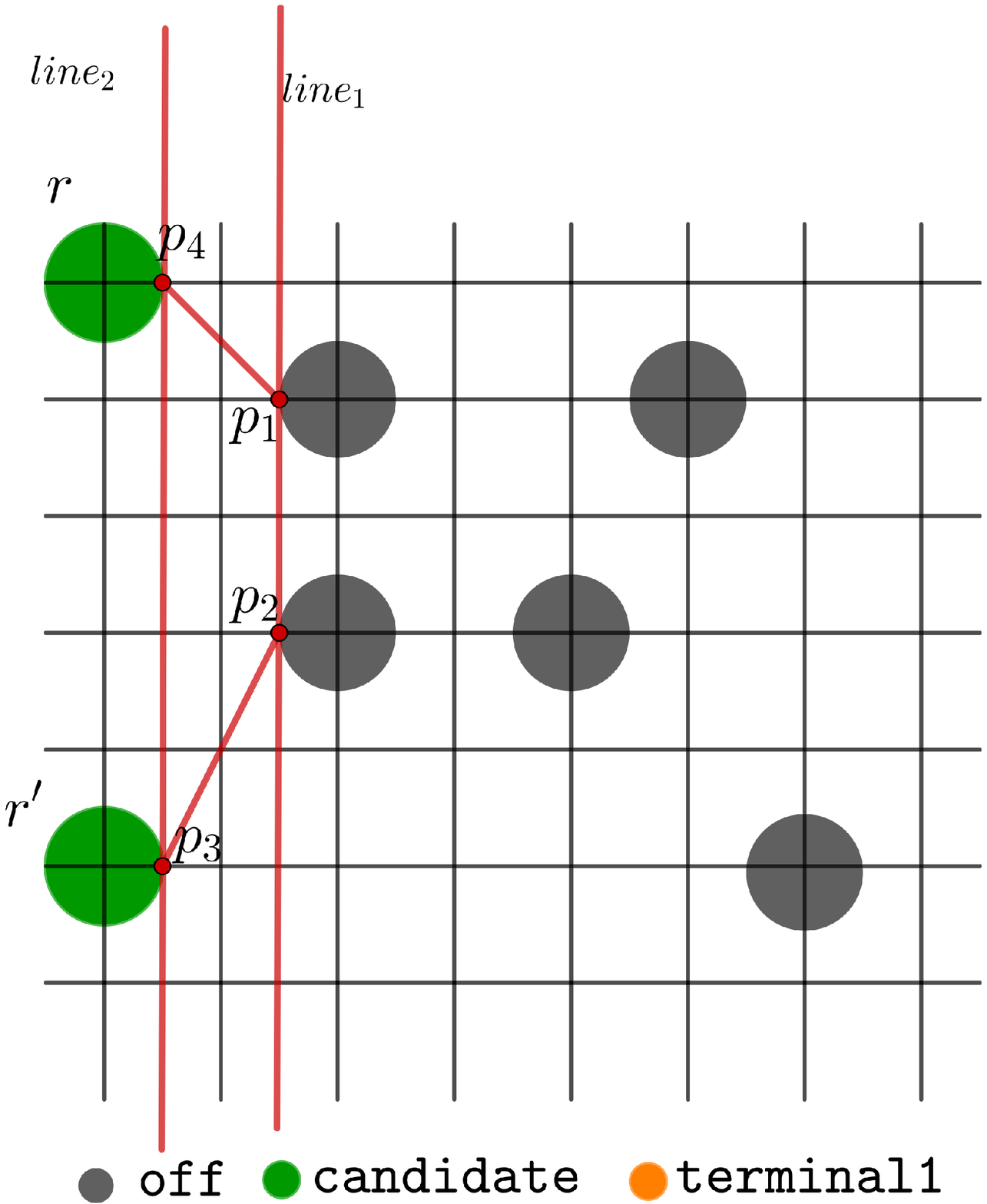}
     \caption{Area bounded by the quadrilateral  $\overline{p_1p_2p_3p_4}$ is convex. }\label{fig:Lemma2case1}
   \end{minipage}
   \hfill
   \begin {minipage}{0.45\textwidth}
    \includegraphics[height=6cm,width=6cm]{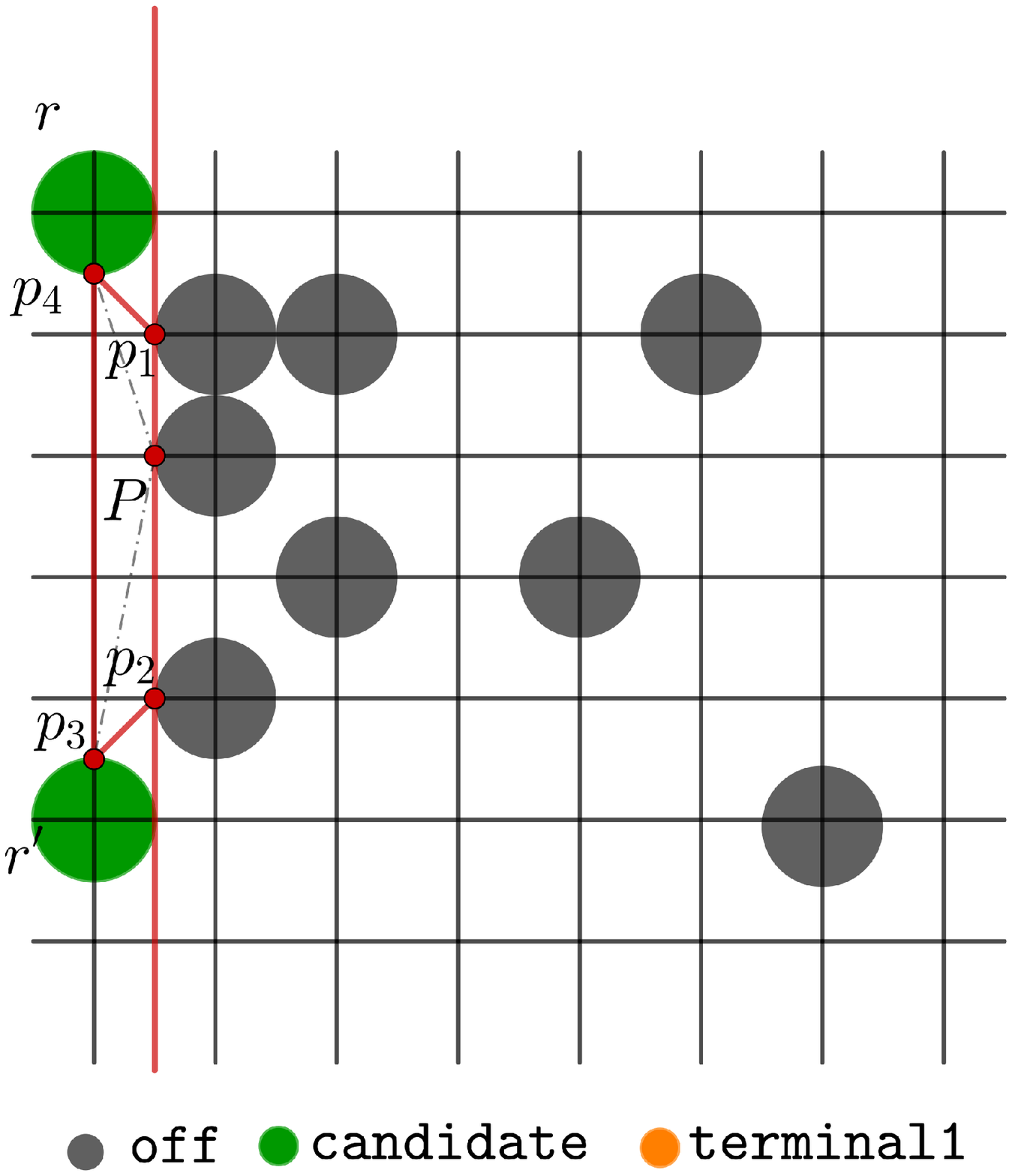}
     \caption{Area bounded by the quadrilateral  $\overline{p_1p_2p_3p_4}$ is convex. }\label{Fig:Lemma2case2}
   \end{minipage}
\end{figure}

\textbf{Case-II:} Next let there is no other vertical line between $\mathcal{L}_V(r)$ and $\mathcal{R}_I(r)$. Note that all robots with colour \texttt{off} must be between the lines $\mathcal{L}_H(r)$ and $\mathcal{L}_H(r')$. In this scenario, let us draw the common tangent $line_1$ of the robots on $\mathcal{R}_I(r)$  which is parallel to $\mathcal{R}_I(r)$ and nearest to $\mathcal{L}_V(r)$. Let us denote the points where $line_1$ touches the terminal robots on $\mathcal{R}_I(r)$ as $p_1$ and $p_2$ ($p_1$ is above $p_2$). Let us now denote the points where boundary of $r$ and $r'$ intersect the line $\mathcal{L}_V(r)$ which is nearest to $r'$ and $r$ respectively as $p_4$ and $p_3$. Let us call the line segment  $\overline{p_3p_4}$ as $line_2$, $\overline{p_4p_1}$ as $line_3$ and $\overline{p_3p_2}$ as $line_4$. Then the area bounded by these four lines is convex and there is no robot inside this area. 0For any robot $r_1$ on $\mathcal{R}_I(r)$ with colour \texttt{off}, let us denote the point where $line_1$ touches $r_1$ as $P$. Then both the line segment $\overline{Pp_3}$ and $\overline{Pp_4}$ do not contain any other robot (Figure \ref{Fig:Lemma2case2}). So, both $r$ and $r'$ can see $r_1$ and similarly $r_1$ sees both $r$ and $r'$. Thus $r$ and $r'$ can see all robots with colour \texttt{off} on $\mathcal{R}_I(r)$ and all robots with colour \texttt{off} can see both of $r$ and $r'$.

 So, we can conclude the lemma. 
\end{proof}
\begin{lemma}
If $r_1$ and $r_2$ be two robots with colour \texttt{call} or \texttt{reached} or \texttt{candidate} on the same vertical line, then any terminal robot $r$ with colour \texttt{moving1} on $\mathcal{R}_I(r_1)$ (= $\mathcal{R}_I(r_2)$) always can see at least one of $r_1$ and $r_2$.
\end{lemma}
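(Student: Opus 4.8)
The plan is to argue geometrically. Write $L:=\mathcal{L}_V(r_1)=\mathcal{L}_V(r_2)$ and $M:=\mathcal{R}_I(r_1)=\mathcal{R}_I(r_2)$, so that $r$ lies on $M$. Since $M$ is the first vertical line to the right of $L$ carrying a robot, no robot sits on a vertical line strictly between $L$ and $M$; hence every robot that could obstruct $r$'s view of $r_1$ or of $r_2$ lies on $L$ or on $M$. Fix coordinates so that $L$ is the line $x=0$, $M$ is $x=d$ with $d\ge 1$, $r=(d,y_r)$, $r_1=(0,y_1)$, $r_2=(0,y_2)$; the two ordinates are distinct, so we may relabel to get $y_1>y_2$, and after possibly reflecting in a horizontal line (and relabelling again) we may assume $r$ is the topmost robot on $M$. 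I would also record two facts from Phase~1: when two robots of one vertical line carry colours in $\{\texttt{candidate},\texttt{call},\texttt{reached}\}$ no robot lies strictly between their horizontal lines, and every robot that later reaches $L$ does so above $r_1$ or below $r_2$; and $M$ was symmetric about $K$ when $r_1,r_2$ turned \texttt{call}, so every robot of $M$ has ordinate $\le y_2$ or $\ge y_1$. In particular either $y_r\ge y_1$ or $y_r\le y_2$, so it suffices to show that $r$ sees $r_1$ in the first case and $r_2$ in the second.

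Take the case $y_r\ge y_1$; I claim $r$ sees $r_1$. Because $r$ is topmost on $M$, no robot of $M$ has ordinate above $y_r$, so in the downward-left corridor from $r$ to $r_1$ the only possible obstacles are robots of $M$ with ordinate in $[y_1,y_r)$ and the robots of $L$ that migrated above $r_1$; nothing lies strictly between $L$ and $M$. If $d\ge 2$, disks on $L$ and on $M$ are too far apart to close the corridor, and if $rad<\tfrac{1}{2}$ any two disks on adjacent lines leave room to thread a sight segment between them; so the only delicate situation is $d=1$, $rad=\tfrac{1}{2}$, where two mutually tangent robots at $(0,h)$ and $(1,h)$ form a ``wall.'' A robot resting on $L$ above $r_1$ reached that spot by sliding leftward along its horizontal line out of $M$, so its ordinate was vacated on $M$; hence no robot of $M$ remains at the ordinate of any robot of $L$ lying above $r_1$, and no wall occurs with $y_1<h<y_r$. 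The one remaining candidate, a pair at $h=y_1$ (namely $r_1$ together with a robot of $M$ at $(1,y_1)$), blocks only the line of sight \emph{past} $r_1$ toward $r_2$: a sight segment from the left of $r$ to the upper-right arc of $r_1$ slips by it. Hence $r$ sees $r_1$. The case $y_r\le y_2$ is symmetric, with $r_2$ in place of $r_1$ and the robots that settled below $r_2$ playing the role of the migrated robots, and the same vacated-ordinate argument rules out a wall between $r$ and $r_2$.

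The step I expect to be the real obstacle is precisely this exclusion of a blocking ``wall'' in the tight regime $rad=\tfrac{1}{2}$, $d=1$; it is the only place where both hypotheses are genuinely needed --- that $r$ is \emph{terminal} on $M$ (nothing on $M$ lies on the far side of $r$ from the strip between $r_1$ and $r_2$) and that $r$ carries \texttt{moving1} (the robots it must look past are exactly the \texttt{moving1} robots whose migration Phase~1 controls). One also has to be careful about transient configurations the asynchronous scheduler may reveal while some robot is midway through its vertical climb on $M$, checking that even then no wall-forming robot of $M$ coincides in ordinate with a robot of $L$, so that the vacated-ordinate argument still applies. An alternative that sidesteps these edge considerations is to imitate the trapezoid construction in the proof of Lemma~\ref{flemma1}: build a convex quadrilateral between $L$ and $M$, with vertices at common tangent points of the relevant robots, that contains no robot and contains a segment joining $r$ to $r_1$ (resp.\ $r_2$); terminality of $r$ is what makes one corner of that quadrilateral free, which is the local analogue of the absent obstacle above.
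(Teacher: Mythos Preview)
Your case split has a genuine gap. The implication ``$M$ was symmetric about $K$, so every robot of $M$ has ordinate $\le y_2$ or $\ge y_1$'' is a non sequitur: symmetry of $M$ about $K$ says only that the multiset of ordinates on $M$ is invariant under reflection in $K$, not that all of them lie outside the interval $[y_2,y_1]$. In fact the opposite holds in the first iteration: $r_1$ and $r_2$ were the terminal robots on what is now $M$ before they stepped left, so every robot on $M$ has ordinate strictly between $y_2$ and $y_1$. Hence the terminal robot $r$ may well satisfy $y_2<y_r<y_1$, and your argument never treats this case. The paper treats it explicitly as its Case~I (and Case~II for $y_r=y_1$), handling it with exactly the trapezoid/convex-region construction of Lemma~\ref{flemma1} that you relegate to an ``alternative''; it is not optional here.

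For the regime $y_r>y_1$ (the paper's Case~III) your instinct is right that the crux is ruling out a robot sitting between $\mathcal{L}_H(r_1)$ and $\mathcal{L}_H(r)$. But your ``vacated ordinate'' argument attacks the wrong obstacles: it rules out a robot on $M$ at the same height as a migrated robot on $L$, whereas what the proof actually needs (and what the paper proves) is that \emph{no robot of $M$ at all} lies strictly between $\mathcal{L}_H(r_1)$ and $\mathcal{L}_H(r)$. The paper obtains this from the movement rules: any such $r'$ had colour \texttt{off} at some earlier time, hence sat below $\mathcal{L}_H(r_1)$; once it becomes terminal it moves opposite to $r$, and if non-terminal it does not move while $r$ is still on $M$ --- so $r'$ never reaches the forbidden strip. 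That dynamical argument is the piece you are missing; with it, the triangle $p_1p_2p_3$ the paper builds (left tangent of $r$, rightmost point of $r_1$, their intersection on $\mathcal{L}_H(r_1)$) is robot-free and gives the sight line directly, without any separate ``wall'' analysis.
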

\begin{proof}
Without loss of generality, let us assume that $r_1$ is above $r_2$ and $r$ is above $K$ (i.e the horizontal line which is equidistant from both $\mathcal{L}_H(r_1)$ and $\mathcal{L}_H(r_2)$). Also, let there is no other vertical line between $\mathcal{L}_I(r)$ and $\mathcal{L}_V(r)$, otherwise with the same argument as Lemma \ref{flemma1} we can say that $r$ can see both $r_1$ and $r_2$. Now there are three cases.

\textbf{Case-I:} $r$ is below $\mathcal{L}_H(r_1)$. In this case, by similar argument in Case-II of Lemma \ref{flemma1}, we can conclude that $r$ can see both $r_1$ and $r_2$. 

\textbf{Case-II:} $r$ is on $\mathcal{L}_H(r_1)$. Let us draw the tangents of $r$, $line_1$ parallel to $\mathcal{L}_V(r)$ and nearest to $\mathcal{L}_I(r)$ and tangent of $r_1$, $line_2$ parallel to $\mathcal{L}_V(r_1)$ and nearest to $\mathcal{R}_I(r_1)$. Now let $line_1$ touches $r$ at point $p_1$ and $line_2$ touches $r_1$ at point $p_2$ (Figure \ref{fig:Lemma3case2}). Since $\overline{p_1p_2}$  does not contain any other robot, $r$ can see $r_1$. Note that $line_1$ and $line_2$ can be same if the robots are of radius $\frac{1}{2}$. Now let $line_1$ touches both $r$ and $r_1$ at a point $p$. Hence $r$ can see $r_1$.

\begin{figure}[!htb]\centering
   \begin{minipage}{0.45\textwidth}
    \includegraphics[height=6.5cm,width=7cm]{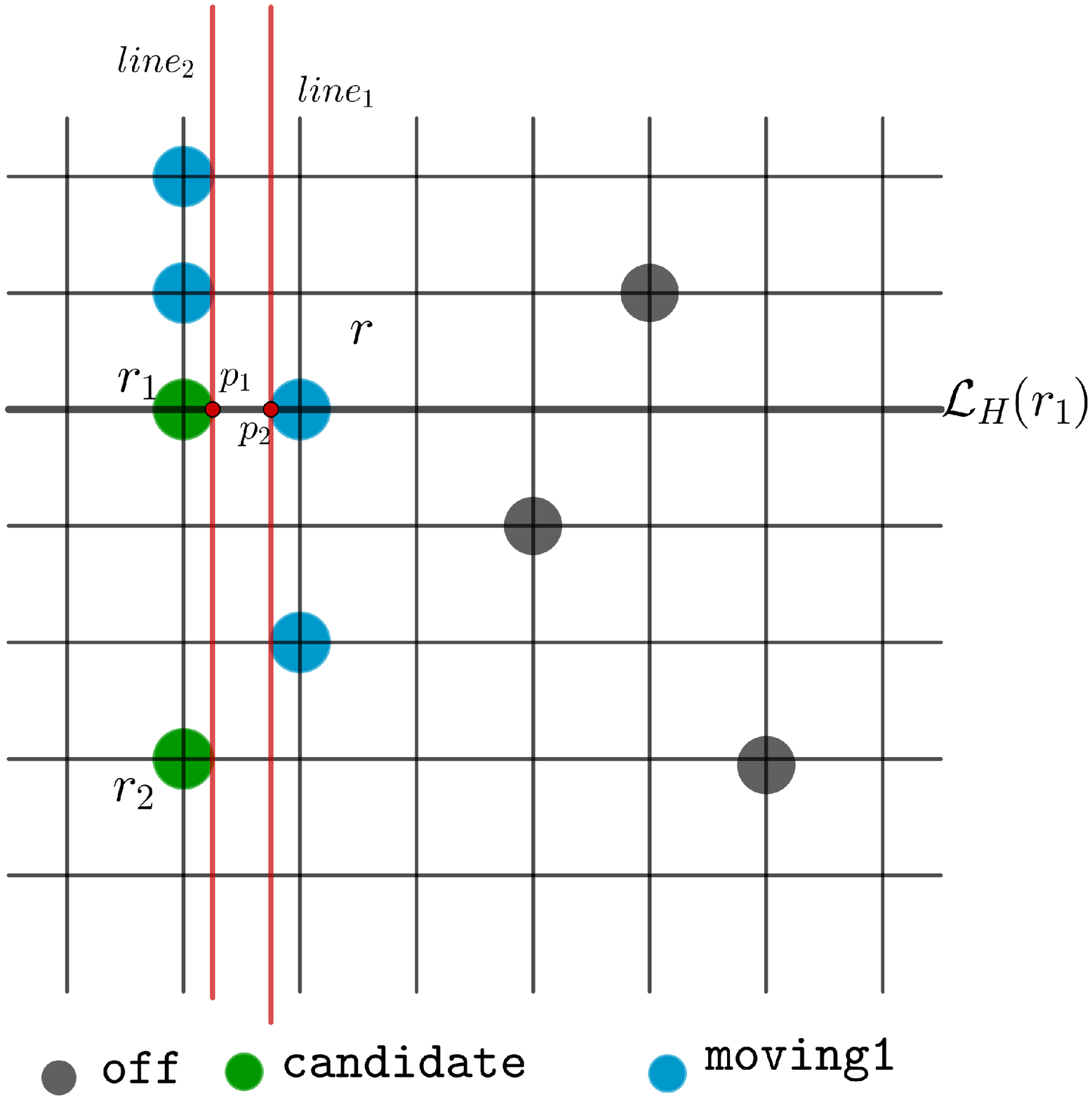}
     \caption{$r$ is on $\mathcal{L}_H(r_1)$.}\label{fig:Lemma3case2}
   \end{minipage}
   \hfill
   \begin {minipage}{0.45\textwidth}
    \includegraphics[height=6.5cm,width=6cm]{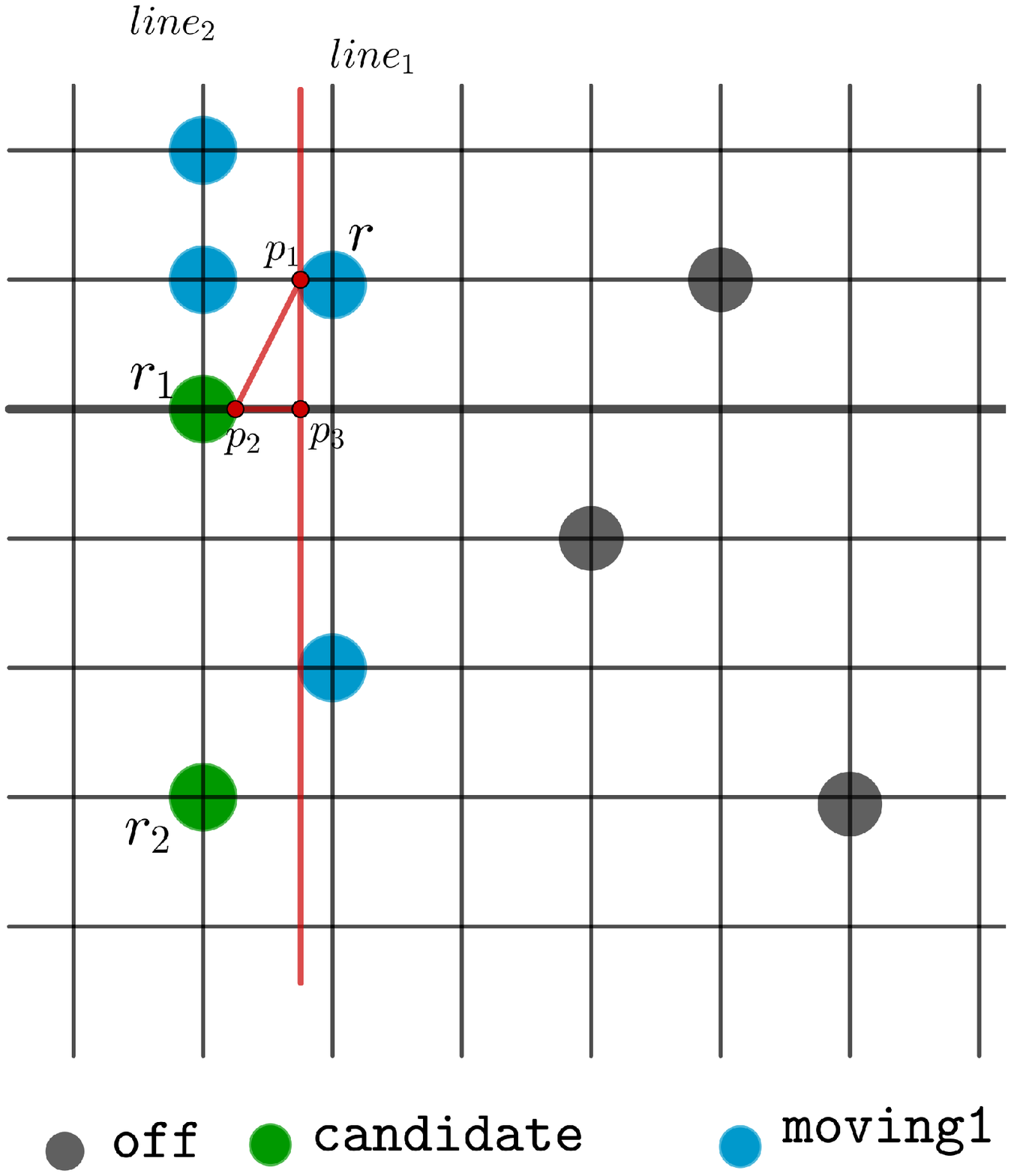}
     \caption{$r$ is above $\mathcal{L}_H(r_1)$.}\label{Fig:Lemma3case3}
   \end{minipage}
\end{figure}
\textbf{Case-III:} $r$ is above $\mathcal{L}_H(r_1)$. In this case, we claim that if there is any other robot with colour \texttt{moving1} on $\mathcal{L}_V(r)$, it must be below $\mathcal{L}_H(r_1)$. 
%This is because a robot turns its colour to \texttt{moving1} from \texttt{off} and it only moves when it is terminal on its vertical line. 
If possible let, there is another robot $r'$ on $\mathcal{L}_V(r)$ which is above $\mathcal{L}_H(r_1)$ but below $\mathcal{L}_H(r)$ with colour \texttt{moving1}. Since a robot turns its colour to \texttt{moving1} from \texttt{off}, there exists a time $T$ when $r'$ had colour \texttt{off}. So, in $\mathbb{C}(T)$, $r'$ must be located below $\mathcal{L}_H(r_1)$. Now, $r$ is already located on $\mathcal{L}_V(r')$ and above $r'$. So, if $r'$  is terminal, it moves opposite to $r$ and never reaches above $\mathcal{L}_H(r_1)$. And if $r'$ is not terminal, then it never moves until $r$ moves left. So, if $r$ is above $\mathcal{L}_H(r_1)$ with colour \texttt{moving1} and is terminal, then there is no other robot on the grid points on $\mathcal{L}_V(r)$ between $\mathcal{L}_H(r)$ and $\mathcal{L}_H(r_1)$. 
Let the tangent of $r$ which is parallel to $\mathcal{L}_V(r)$ and nearest to $\mathcal{L}_I(r)$ touches $r$ at point $p_1$ and intersects $\mathcal{L}_H(r_1)$ at $p_3$.
Also, boundary of $r_1$ touches the line $\mathcal{L}_H(r_1)$ at a point nearest to $\mathcal{L}_V(r)$ (say $p_2$) (Figure \ref{Fig:Lemma3case3}). Since $p_1$, $p_2$ and $p_3$ form a triangle and the area bounded by the triangle is a convex set containing no other robot, $r$ can see $r_1$.

% Now let us take the common tangent of the robot $r_1$  and $r$ parallel to $\mathcal{L}_V(r_1)$ and denote it as $line_1$. Now let $line_1$ touches $r_1$ at a point $p_1$ and Let $line_1$ touches $r$ at the point $p_2$
% % and intersects $\mathcal{L}_H(r_1)$ at $p_3$
% . then the  line segment $\overline{p_1p_2}$
% % , \overline{p_2p_3}$ and $\overline{p_3p_1}$ is a triangle which 
% contains no other robot, as there is no other robot in between $\mathcal{L}_H(r_1)$ and $\mathcal{L}_H(r)$ on $\mathcal{L}_V(r)$ . Now since $\overline{p_1p_2}$ contains no other robot So,$r$ can see $r_1$.
\end{proof}

\begin{lemma}
\label{flemma2}
A robot changes its colour to \texttt{leader1} only from light \texttt{candidate} or \texttt{off}.
\end{lemma}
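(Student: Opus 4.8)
The plan is to prove the lemma by a direct inspection of the pseudocode of Algorithm~\ref{leader_selection} (together with its continuation on the following page), since that pseudocode is the only place where the light \texttt{leader1} is ever written. First I would enumerate every statement of the form ``$r.light = $ \texttt{leader1}'' occurring in the procedure \textsc{Phase1()}. A scan of the code shows there are exactly three such assignments, and I would handle them one at a time, recording in each case the value of $r.light$ that guards the branch in which the assignment sits.

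The first occurrence lies inside the outermost branch guarded by the test $r.light = $ \texttt{off}: when $r$ sees exactly two robots with light \texttt{call} on $\mathcal{L}_I(r)$, is the robot on $\mathcal{R}_I(r)$ closest to $K$, and actually lies on $K$, it sets $r.light = $ \texttt{leader1}. Hence in this case the colour of $r$ immediately before the transition was \texttt{off}. The remaining two occurrences lie inside the branch guarded by $r.light = $ \texttt{candidate}: one fires when $r$ is a singleton in $H_L^C(r)$ and every robot of $\mathcal{R}_I(r)$ has colour \texttt{off}, and the other fires when $r$ shares its vertical line with another robot of colour \texttt{candidate} or \texttt{call}, the line $\mathcal{R}_I(r)$ is asymmetric with respect to $K$, and $r$ lies in the dominant half. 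In both of these the colour of $r$ immediately before the transition was \texttt{candidate}.

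To finish, I would argue that no other branch of the algorithm can produce the colour \texttt{leader1}: the branch for \texttt{terminal1} only writes \texttt{candidate} or \texttt{off}; the branch for \texttt{moving1} only issues moves or writes \texttt{off}; the branch for \texttt{call} only writes \texttt{reached} or \texttt{off}; the branch for \texttt{reached} only writes \texttt{candidate}; and the branch for \texttt{leader1} only issues moves or writes \texttt{leader}. Since a robot executes exactly one of these mutually exclusive branches per activation and alters its light only through the statements just listed, the colour held immediately before any transition to \texttt{leader1} must be \texttt{off} or \texttt{candidate}, which is exactly the claim.

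I do not expect a genuine obstacle here; the only point requiring care is completeness of the enumeration, i.e.\ confirming that the informal narrative preceding the algorithm does not describe an extra transition to \texttt{leader1} absent from the formal code (for instance the sentence ``if there is any robot on $K$, it changes its colour to \texttt{leader1}'' must correspond precisely to the line inside the \texttt{off} branch). I would therefore cross-check each narrative mention of \texttt{leader1} against the pseudocode to verify they coincide, and then the case analysis above closes the proof.
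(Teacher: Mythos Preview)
Your proposal is correct and follows essentially the same approach as the paper: the paper's proof is a one-line appeal to Algorithm~\ref{leader_selection}, stating that it follows directly from the pseudocode that only robots with colour \texttt{candidate} or \texttt{off} can transition to \texttt{leader1}. Your version is simply a more explicit unpacking of that same inspection, enumerating the three assignment sites and checking the remaining branches; this is exactly the case analysis the paper's terse proof implicitly relies on.
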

\begin{proof}
From Algorithm \ref{leader_selection}, it follows directly that a robot can change its colour to \texttt{leader1} only if it was either with colour \texttt{candidate} or with colour \texttt{off}.
\end{proof}
\begin{lemma}
A robot with light \texttt{leader1} always has  empty grid point in its left.
\end{lemma}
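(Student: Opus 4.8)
The plan is to reduce, via Lemma \ref{flemma2}, to the instant at which a robot acquires the light \texttt{leader1}, do a short case analysis over the guarded statements of Algorithm \ref{leader_selection} that set this light, and then check that the property is preserved for as long as the robot keeps the light.

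By Lemma \ref{flemma2} a robot $r$ can become \texttt{leader1} only from light \texttt{off} or from light \texttt{candidate}, so there are exactly three statements in Algorithm \ref{leader_selection} to inspect. \emph{(i)} $r$ becomes \texttt{leader1} from \texttt{off}: then the guard forces $r\in K$ and $\mathcal{L}_I(r)$ to contain exactly two robots, both of light \texttt{call}. By the definition of $K$ attached to two robots of light \texttt{candidate}/\texttt{call}/\texttt{reached} lying on a common vertical line, these two \texttt{call} robots are reflections of one another across $K$, hence neither of them lies on $K$; in particular the grid point of $\mathcal{L}_I(r)$ on $K$ is unoccupied. The grid point immediately left of $r$ is either that point (if $\mathcal{L}_I(r)$ is the vertical line adjacent to $\mathcal{L}_V(r)$) or a point of a vertical line carrying no robot (otherwise), so it is empty. \emph{(ii)} $r$ becomes \texttt{leader1} from \texttt{candidate} while it is singleton in $H_L^C(r)$: then $H_L^O(r)=\emptyset$, and the grid point left of $r$ lies in $H_L^O(r)$, hence is empty. \emph{(iii)} $r$ becomes \texttt{leader1} from \texttt{candidate} via the dominant-half subcase; this is the only non-immediate case and is handled next.

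In case \emph{(iii)}, $r$ is terminal on $\mathcal{L}_V(r)$, there is a second robot $r'$ of light \texttt{candidate}/\texttt{call} on $\mathcal{L}_V(r)$ with no robot strictly between $\mathcal{L}_H(r)$ and $\mathcal{L}_H(r')$ (so $r$ and $r'$ are mirror images across $K$), $\mathcal{R}_I(r)$ is asymmetric, and every robot of $\mathcal{R}_I(r)$ has light \texttt{off}. Writing $y_r$, $y_{r'}$ for the rows of $r$, $r'$, the key claim is that at this instant every robot lying strictly left of $\mathcal{L}_V(r)$ occupies a row strictly above $y_r$ or strictly below $y_{r'}$; the lemma then follows since the left neighbour of $r$ lies on $\mathcal{L}_H(r)$. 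I would prove this claim by induction on the number of completed ``\texttt{candidate}$\,\to\,$\texttt{call}$\,\to\,$\texttt{reached}$\,\to\,$\texttt{candidate}'' rounds executed by $r$ and $r'$ (cf. the description of Phase 1). In the base case the configuration is the one of Case-III(b) in the proof of Theorem \ref{fth1}, where $r$ and $r'$ have just moved left onto what is then the leftmost occupied line, so nothing sits strictly left of $\mathcal{L}_V(r)$ at all. For the inductive step, each robot lying strictly left of $\mathcal{L}_V(r)$ is a robot that during the previous round had light \texttt{moving1} on $\mathcal{R}_I(r)$ and was relocated by the rule ``move along the positive $y$-axis until $H_U^C\cap\mathcal{L}_I$ or $H_B^C\cap\mathcal{L}_I$ is empty, then move left''. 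Since $r$ and $r'$ remain on $\mathcal{L}_V(r)$ on opposite sides of $K$ throughout that relocation, both $H_U^C\cap\mathcal{L}_V(r)$ and $H_B^C\cap\mathcal{L}_V(r)$ of such a robot are nonempty whenever its row lies weakly between $y_{r'}$ and $y_r$; hence it cannot stop its vertical move until it is strictly above $r$ or strictly below $r'$, and it then turns left one step preserving that row. This establishes the claim. I expect this case to be the main obstacle of the whole lemma, precisely because one must follow the motion of the \texttt{moving1} robots and rule out their ever coming to rest at row $y_r$ on the line next to $\mathcal{L}_V(r)$.

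It remains to see that the property persists while $r$ keeps light \texttt{leader1}. Once $r$ carries \texttt{leader1}, every robot of light \texttt{call} or \texttt{candidate} turns \texttt{off} after seeing $r$ (as noted right after Algorithm \ref{leader_selection}), and a robot lying left of $r$ can no longer move into the cell immediately left of $r$: the \texttt{moving1} robots there have no enabled move while a robot of light \texttt{leader1} sits on their $\mathcal{R}_I$ (none of their guards applies), and an \texttt{off} robot there cannot turn \texttt{terminal1} while its left open half is nonempty or $r$ lies on its $\mathcal{R}_I\cup\mathcal{L}_V$, while becoming \texttt{candidate} only lets it step one column further left, preserving its row $\neq y_r$. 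When $r$ itself moves, it moves left only into the cell the invariant certifies empty, and—since every robot strictly left of $\mathcal{L}_V(r)$ sits on a single adjacent column at a row different from $y_r$—the cells $r$ visits along its leftward march all have an empty left neighbour; and once $r$ is the unique leftmost robot with empty $l_{next}(r)$, it moves only along its own otherwise empty column, whose left neighbours lie on a column with no robot. Hence the grid point immediately left of $r$ is empty at every time $r$ has light \texttt{leader1}, which proves the lemma.
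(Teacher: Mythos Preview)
Your proof is correct in substance and uses the same three-way case split as the paper (via Lemma \ref{flemma2}: \texttt{off}$\to$\texttt{leader1}, singleton \texttt{candidate}$\to$\texttt{leader1}, dominant-half \texttt{candidate}$\to$\texttt{leader1}). The main structural difference is that the paper argues, in each case, that the \emph{entire horizontal line} $\mathcal{L}_H(r)\cap H_L^O(r)$ is empty at the moment $r$ becomes \texttt{leader1}; this immediately carries the invariant through all subsequent left moves of $r$, so no separate persistence argument is needed. You instead establish emptiness only of the immediate left neighbour and then bolt on a persistence paragraph, which is correct but more laborious.

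Two remarks. First, your case (iii) is actually more careful than the paper's: the paper simply asserts that the \texttt{candidate} robot is on $\mathcal{L}_1$ or $\mathcal{L}_2$ without justification, whereas your induction on completed \texttt{candidate}$\to$\texttt{call}$\to$\texttt{reached}$\to$\texttt{candidate} rounds genuinely proves that every robot strictly left of $\mathcal{L}_V(r)$ lies at a row outside $[y_{r'},y_r]$. Second, in your persistence paragraph the phrase ``sits on a single adjacent column'' is not literally true in case (i): when $r$ becomes \texttt{leader1} from \texttt{off} on $K$, the robots strictly to its left occupy \emph{two} columns---the \texttt{call} robots' line $\mathcal{L}_I(r)$ and, one step further left, the \texttt{moving1} robots deposited by previous rounds. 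What remains true (and is all you need) is that every such robot lies at a row $\neq y_r=K$, so the whole row $K$ to the left of $r$ is empty; you could simplify case (i) by stating this directly, as the paper does, and then drop most of the persistence discussion.
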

\begin{proof}
If at some tome $T>0$, there is only one robot say $r$, on $\mathcal{L}_1$ with colour \texttt{candidate} and no robot with colour other than \texttt{off} on $\mathcal{R}_I(r)$, then $r$  changes its colour to \texttt{leader1}. Observe that since $r$ is on $\mathcal{L}_1$, it will have its left grid point empty.

Now, consider there are two robots $r$ and $r'$ with colour \texttt{candidate} on $\mathcal{L}_V(r)$ (i.e. $\mathcal{L}_V(r')$). 
Now by Lemma \ref{flemma2}, it is evident that a robot with colour \texttt{candidate} or \texttt{off} can only change its colour to \texttt{leader1}. So, let us consider these cases.

\textbf{Case-I:} Let a robot $r_1$ with colour \texttt{off} changes its colour to \texttt{leader1}. That implies  only $r$ and $r'$ is on $\mathcal{L}_I(r_1)$ having colour \texttt{call} and $r_1$ is on $K \cap \mathcal{R}_I(r)$. Note that if $r$ and $r'$ are adjacent on $\mathcal{L}_V(r)$, then $K$ can  not be a horizontal line of the grid $\mathcal{G}$. So, $r$ and $r'$ are not adjacent on $\mathcal{L}_V(r)$ (Figure \ref{fig:Lemma5case1}). Now note that even if there are robots other than $r$ and $r'$ on $\mathcal{L}_V(r)$ or, $\mathcal{L}_I(r)$, they are not on or between the line $\mathcal{L}_H(r)$ and $\mathcal{L}_H(r')$. So, $\mathcal{L}_H(r_1)$ lies between $\mathcal{L}_H(r)$ and $\mathcal{L}_H(r')$ and $r_1$ is on $\mathcal{R}_I(r)$. So, we can say that $\mathcal{L}_H(r_1) \cap H_L^O(r_1)$ is always empty. As $r_1$ moves only on left until it becomes singleton on $\mathcal{L}_1$ and has $l_{next}(r_1)$ empty, $r_1$, the robot with light \texttt{leader1} always has its left grid point empty.

\begin{figure}[!htb]\centering
   \begin{minipage}{0.45\textwidth}
    \includegraphics[height=6cm,width=7cm]{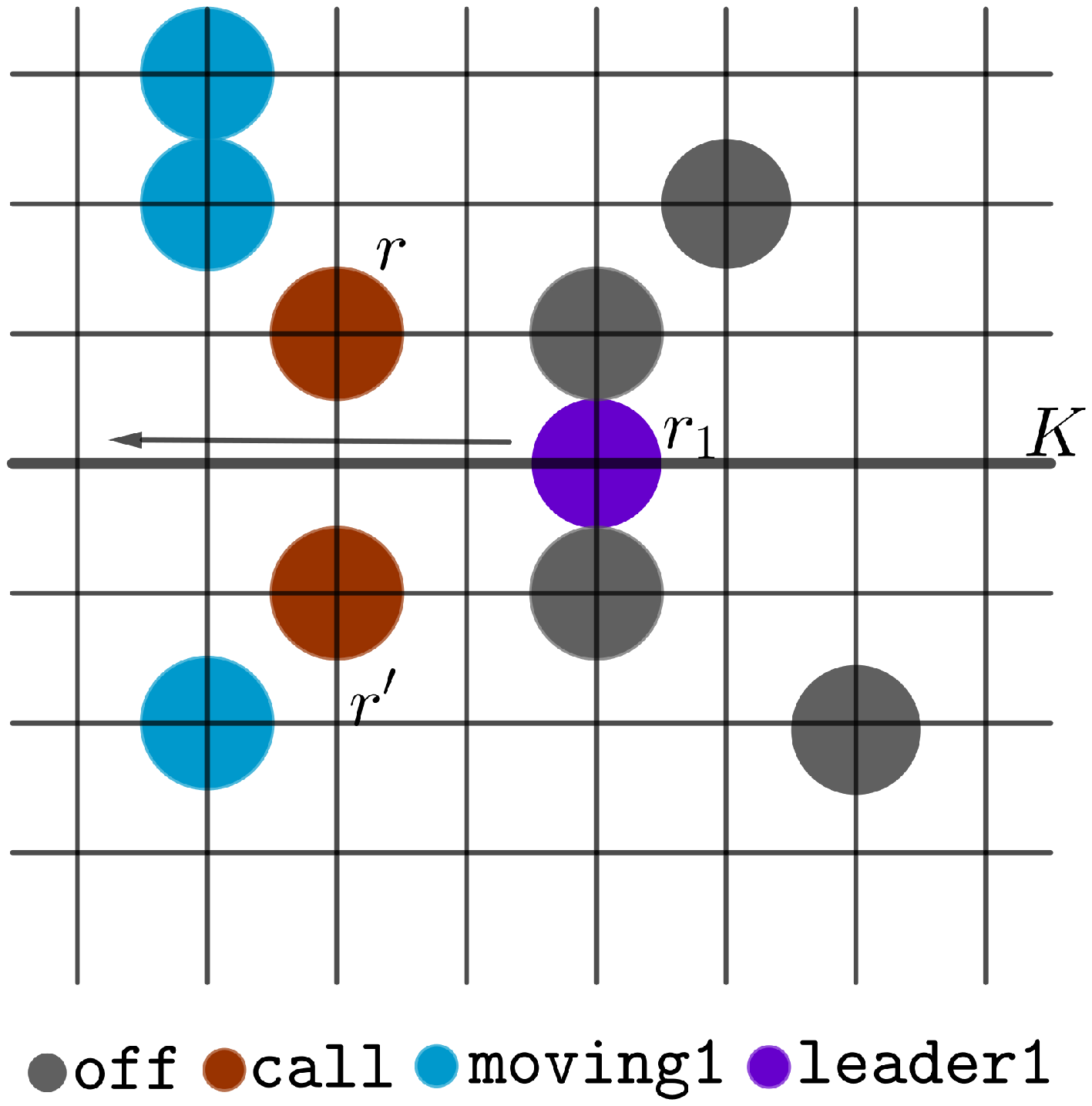}
     \caption{$r_1$ has colour \texttt{leader1} and has $\mathcal{L}_H(r_1) \cap H_L^O(r_1)$ empty.}\label{fig:Lemma5case1}
   \end{minipage}
   \hfill
   \begin {minipage}{0.45\textwidth}
    \includegraphics[height=6cm,width=7cm]{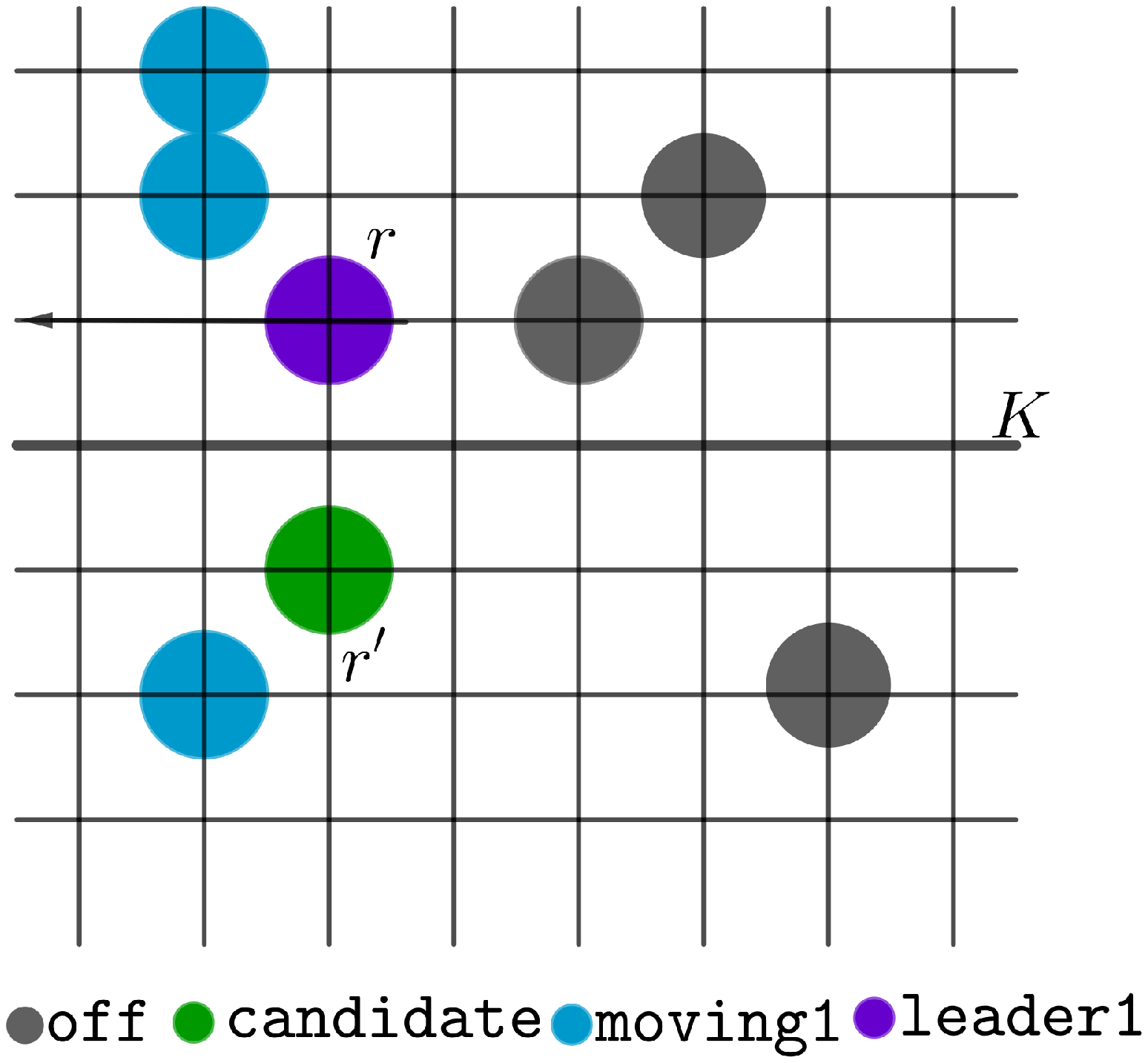}
     \caption{$r$ is on $\mathcal{L}_2$ with colour \texttt{leader1} and has $\mathcal{L}_H(r) \cap H_L^O(r)$  empty.}\label{Fig:Lemma5case2}
   \end{minipage}
\end{figure}

\textbf{Case-II:} Without loss of generality, let $r$ be the robot with colour \texttt{candidate} that changes the colour to \texttt{leader1}. Note that, either $r$ is on $\mathcal{L}_1$ or it is on $\mathcal{L}_2$. If $r$ is on $\mathcal{L}_1$ then it finds its left grid point is empty and moves to left and become singleton on $\mathcal{L}_1$. So, left grid point of $r$ is empty . Now if $r$ was on $\mathcal{L}_2$ (Figure \ref{Fig:Lemma5case2}). Then $\mathcal{L}_1 \cap \mathcal{L}_H(r)$ is empty as there are no robots between the line $\mathcal{L}_H(r)$ and $\mathcal{L}_H(r')$ on $\mathcal{L}_I(r)$. So, $r$ can move left and become singleton on $\mathcal{L}_1$. Hence $r$ always has its left grid point empty.

\end{proof}

\begin{lemma}
If  a robot $r$ with colour \texttt{call} does not see another robot with colour \texttt{leader1} on $\mathcal{R}_I(r)$, then there is a time $T$ when $\mathcal{L}_V(r)$ will always have a robot with colour \texttt{moving1} and two robots with colour \texttt{call} in $\mathbb{C}(T)$.
\end{lemma}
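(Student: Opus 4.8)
The plan is to reconstruct what ``$r$ has colour \texttt{call}'' forces, then push the execution forward until a \texttt{moving1} robot lands on $\mathcal{L}_V(r)$ while $r$ and its partner are still \texttt{call}. Since the only rule producing colour \texttt{call} is the \texttt{candidate}$\to$\texttt{call} rule of Algorithm~\ref{leader_selection}, at the moment $r$ acquired \texttt{call} it was terminal on $\mathcal{L}_V(r)$, a second robot $r'$ on $\mathcal{L}_V(r)$ carried colour \texttt{candidate} or \texttt{call}, every robot of $\mathcal{R}_I(r)$ was \texttt{off}, and $\mathcal{R}_I(r)$ was symmetric about the line $K$ equidistant from $\mathcal{L}_H(r)$ and $\mathcal{L}_H(r')$. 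Unwinding how colour \texttt{candidate} is produced (from \texttt{terminal1} after a single left move, the \texttt{terminal1} robots being the extreme robots of the then-leftmost line) and invoking the earlier lemmas, I would argue that $\mathcal{L}_V(r)=\{r,r'\}$, that $r,r'$ are its two terminal robots, and that neither $r$, nor $r'$, nor $\mathcal{L}_V(r)$ itself will move while $r$ keeps colour \texttt{call}.

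Next I would use the hypothesis to rule out a robot on $K\cap\mathcal{R}_I(r)$. Such a robot $r_0$ would be the unique robot of $\mathcal{R}_I(r)$ lying on $K$; once both $r$ and $r'$ carry \texttt{call} --- which must happen, since $r'$, carrying \texttt{candidate} or \texttt{call} with $\mathcal{R}_I(r)$ symmetric and \texttt{off}, is driven to \texttt{call} --- the robot $r_0$, being \texttt{off}, on $K$, and (by Lemma~\ref{flemma1}) seeing exactly the two \texttt{call} robots on $\mathcal{L}_I(r_0)=\mathcal{L}_V(r)$, turns \texttt{leader1}, and then the convexity argument of Lemma~\ref{flemma1} shows $r$ sees it, contradicting the hypothesis. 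Hence $K\cap\mathcal{R}_I(r)=\emptyset$, so $\mathcal{R}_I(r)$ stays symmetric about $K$, contains at least two robots, and no robot of $\mathcal{R}_I(r)$ ever becomes \texttt{leader1}. Consequently, until a \texttt{moving1} or \texttt{reached} robot appears on $\mathcal{L}_V(r)=\{r,r'\}$, neither the \texttt{call}$\to$\texttt{reached} rule nor the \texttt{call}$\to$\texttt{off} rule can fire, so $r$ and $r'$ keep colour \texttt{call}.

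Then I would run the forward argument. With both $r,r'$ coloured \texttt{call}, a robot of $\mathcal{R}_I(r)$ closest to $K$ (it exists, is not on $K$, and by Lemma~\ref{flemma1} sees exactly $r$ and $r'$ on $\mathcal{L}_I$) turns \texttt{moving1}; by the propagation rule this spreads along $\mathcal{R}_I(r)$, so after finitely many activations its topmost robot $r_1$ is \texttt{moving1} and terminal on $\mathcal{R}_I(r)$. Following $r_1$ under the \texttt{moving1} rules, and using Lemma~\ref{flemma1} together with the lemma that a terminal \texttt{moving1} robot on $\mathcal{R}_I$ sees at least one of $r,r'$: $r_1$ always sees a \texttt{call} robot and no \texttt{reached} robot on $\mathcal{L}_I(r_1)=\mathcal{L}_V(r)$, so the first \texttt{moving1} clause governs it, and it moves --- stepping up along $\mathcal{R}_I(r)$ while it sees robots of $\mathcal{L}_V(r)$ both on or above and on or below it, and otherwise stepping left --- until, after finitely many activations, it reaches $\mathcal{L}_V(r)$ still coloured \texttt{moving1}. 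Let $T$ be the instant of that final move. In $\mathbb{C}(T)$, $r_1$ lies on $\mathcal{L}_V(r)$ with colour \texttt{moving1}; and since $\mathcal{R}_I(r)$ had at least two robots and has lost only its topmost one $r_1$, it is still nonempty, hence not all-\texttt{off}, and it contains no \texttt{leader1}, so $r$ and $r'$ still carry \texttt{call}. Thus $\mathcal{L}_V(r)$ in $\mathbb{C}(T)$ contains a \texttt{moving1} robot together with the two \texttt{call} robots $r$ and $r'$, as asserted.

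I expect the main obstacle to be the reconstruction step: certifying precisely what ``$r$ has colour \texttt{call}'' entails about $\mathcal{L}_V(r)$ and $\mathcal{R}_I(r)$ --- in particular that $\mathcal{L}_V(r)$ consists of just $r$ and $r'$ and that no robot sits on $K\cap\mathcal{R}_I(r)$ --- since this forces one to unwind how \texttt{terminal1} and \texttt{candidate} robots are created, to invoke the earlier lemmas and the absence of any \texttt{leader1} robot, and to argue carefully in the presence of asynchrony (for instance, a robot on $K$ may not yet have been activated at the observed moment). Once that is settled, the forward part is routine, resting only on the visibility guarantees of Lemma~\ref{flemma1} and the \texttt{moving1} movement rules.
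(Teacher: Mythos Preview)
Your proposal is correct and follows essentially the same route as the paper: derive the structure of $\mathcal{L}_V(r)$ and $\mathcal{R}_I(r)$ from the colour \texttt{call}, use the hypothesis to exclude a robot on $K\cap\mathcal{R}_I(r)$, then track a terminal \texttt{moving1} robot until it lands on $\mathcal{L}_V(r)$ while $r,r'$ remain \texttt{call}. One small slip: ``nonempty, hence not all-\texttt{off}'' is not a valid inference, but you do not need it --- your earlier observation that neither the \texttt{call}$\to$\texttt{reached} rule nor the \texttt{call}$\to$\texttt{off} rule can fire before a \texttt{moving1} robot first reaches $\mathcal{L}_V(r)$ already secures that $r$ and $r'$ still carry \texttt{call} at time $T$.
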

\begin{proof}
$r$ is a robot with colour \texttt{call} on $\mathcal{L}_V(r)$. This implies $\mathcal{R}_I(r)$ is symmetric with respect to $K$, where $K$ is known because there is another robot say $r'$ on $\mathcal{L}_V(r)$ with colour \texttt{call} or \texttt{candidate}. Note that if $r'$ has colour \texttt{candidate}, it changes the colour to \texttt{call} after a finite time. In this situation, if there is a robot say $r_1$ on $K \cap \mathcal{R}_I(r)$, then $r_1$  changes its colour to \texttt{leader1} from \texttt{off}. And also, $r$ sees $r_1$  on $\mathcal{R}_I(r)$. Since it is assumed that $r$ is not seeing any robot with colour \texttt{leader1} on $\mathcal{R}_I(r)$, it is evident that there is no robot on $K \cap \mathcal{R}_I(r)$. In this scenario, the robots on $\mathcal{R}_I(r)$ see that there are exactly two robots $r$ and $r'$ with colour \texttt{call} on left immediate vertical line. So, the robots on $\mathcal{R}_I(r)$, which are closest to $K$ change their colours to \texttt{moving1} upon activation and all the robots who can see a robot with colour \texttt{moving1} on their vertical line eventually change their colours to \texttt{moving1}. Observe that in this way, after a finite time there will be at least one robot on $\mathcal{R}_I(r)$ which has colour \texttt{moving1} and also will be terminal on $\mathcal{R}_I(r)$. Let $r_2$ be that robot. Now $r_2$ will move vertically in one fixed direction until at least one of $H_U^C(r) \cap \mathcal{L}_I(r)$ and $H_B^C(r) \cap \mathcal{L}_I(r)$ has no other robot and then it moves left to $\mathcal{L}_V(r)$ (Figure \ref{fig:Lemma6pic1}). Also, note that $r$ and $r'$ do not change their colours until $r_2$ reaches $\mathcal{L}_V(r)$. So, after a finite time say $T$, there will be a robot $r_2$ with colour \texttt{moving1} and two robots $r$ and $r'$ with colour \texttt{call} on $\mathcal{L}_V(r)$ in $\mathbb{C}(T)$ (Figure \ref{Fig:Lemma6pic2}).

\begin{figure}[!htb]\centering
   \begin{minipage}{0.45\textwidth}
    \includegraphics[height=6.5cm,width=7cm]{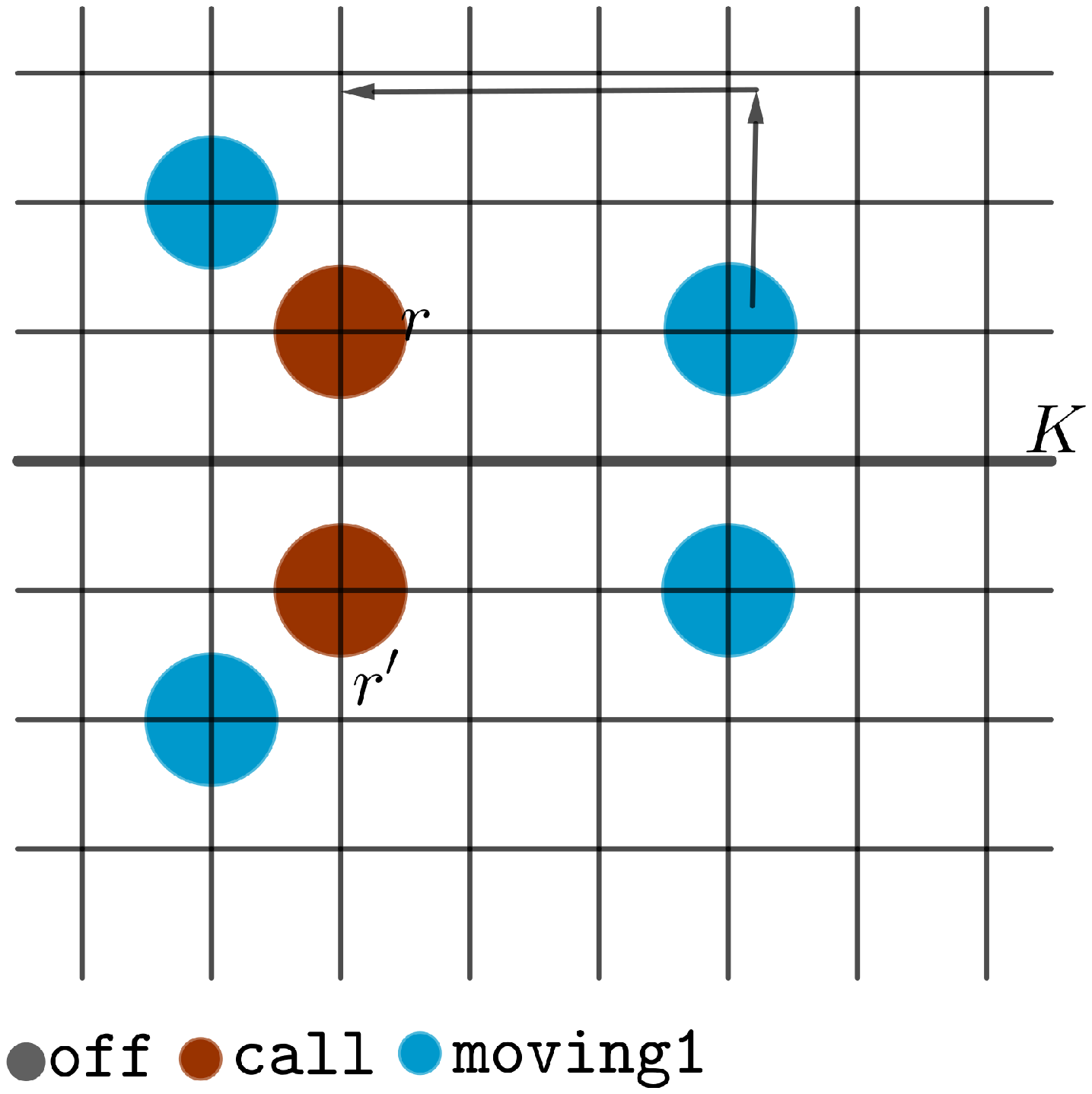}
     \caption{Terminal robot on $\mathcal{R}_I(r)$ see $r$ with colour \texttt{call} and move according to the path shown by the arrow.}\label{fig:Lemma6pic1}
   \end{minipage}
   \hfill
   \begin {minipage}{0.45\textwidth}
    \includegraphics[height=6cm,width=7cm]{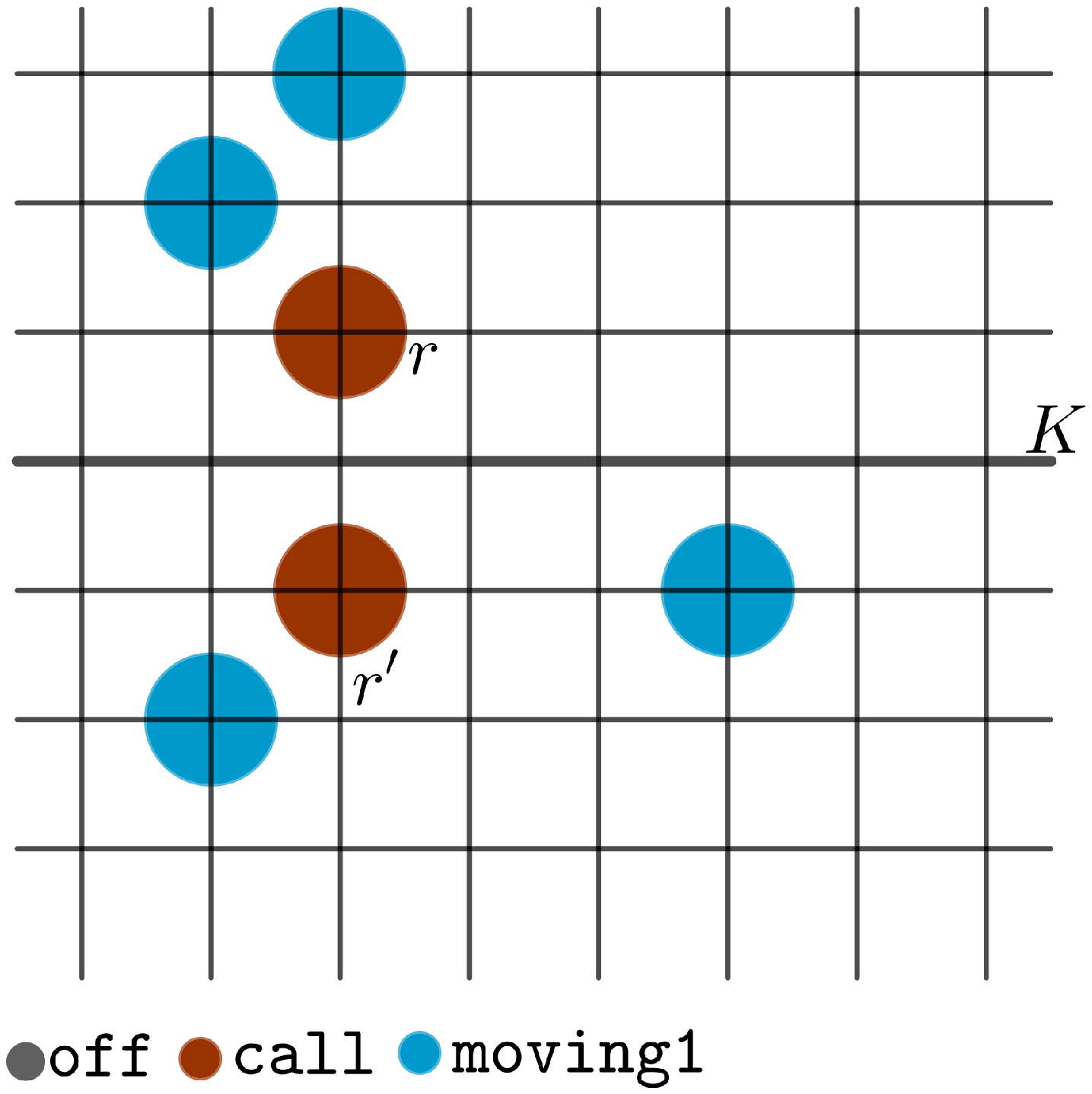}
     \caption{The robot with colour \texttt{moving1} reaches above $r$. In this moment, $\mathcal{L}_V(r)$ has two robots $r$ and $r'$ with colour \texttt{call} and one robot with colour \texttt{moving1}.}\label{Fig:Lemma6pic2}
   \end{minipage}
\end{figure}
\end{proof}
% \begin{lemma}
% If a robot $r$ with colour \texttt{candidate} or \texttt{off} changes its colour to \texttt{leader1} at some time $T$ then all robots on $\mathcal{R}_I(r)$(if exists) have colour \texttt{off} in $\mathbb{C}(T)$.
% \end{lemma}
% \begin{proof}
% By Lemma \ref{flemma2}, a robot $r$ can change its colour to \texttt{leader1} only from colour \texttt{off} or colour \texttt{candidate}. 
%  Let $r$ be a  robot  with colour \texttt{candidate} who is singleton on $\mathcal{L}_1$ and saw no other robot with colour \texttt{candidate} on $\mathcal{R}_I(r)$ at a time $T' < T$. Now 
% \end{proof}
\begin{lemma}
During movement of robots with colour \texttt{moving1} in \textit{Phase 1}, no collision occurs.
\end{lemma}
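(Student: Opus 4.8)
The plan is to prove the claim by inspecting, from Algorithm~\ref{leader_selection}, every move a robot with light \texttt{moving1} can make and showing that each move lands on a grid point that is empty and that no other robot has that same cell as its pending destination; since moves are instantaneous and go one cell at a time, a collision can only mean two robots ending on a common grid point, equivalently a robot stepping onto an occupied cell. Throughout, recall the setting in which \texttt{moving1} robots occur: two robots $r$ and $r'$ carry light \texttt{call} (and later \texttt{reached}) on a common vertical line $\mathcal{L}_V(r)$, $K$ is the horizontal line halfway between them, and every \texttt{moving1} robot sits either on $\mathcal{R}_I(r)$ (having turned \texttt{moving1} from \texttt{off}) or on $\mathcal{L}_V(r)$ or a line to its left up to $\mathcal{L}_1$ (having migrated there); in particular no \texttt{moving1} robot is ever strictly to the right of $\mathcal{R}_I(r)$. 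Reading the \texttt{moving1} branch of the algorithm, a \texttt{moving1} robot that moves does exactly one of: \textbf{(A)} it is terminal on its line $\mathcal{R}_I(r)$, sees a \texttt{call} robot and no \texttt{reached} robot on $\mathcal{L}_I$, and the part of $\mathcal{L}_I$ both above and below its horizontal line is occupied, so it steps one cell vertically, away from all other robots on $\mathcal{R}_I(r)$ (or, if alone there, along its positive $y$-axis); \textbf{(B)} it meets the same premises but has already cleared the vertical extent of the robots on $\mathcal{L}_I$, so it steps one cell left; \textbf{(C)} it sits on a vertical line that carries a \texttt{reached} robot while that line's $\mathcal{R}_I$ is all \texttt{off}, so it steps one cell left (the final migration towards $\mathcal{L}_1$).

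For a move of type (A): the mover is at the top or bottom extreme of $\mathcal{R}_I(r)$ and moves off that extreme, so the target cell is empty; and since every \texttt{moving1} robot lies on $\mathcal{R}_I(r)$, on $\mathcal{L}_V(r)$, or to its left, while the non-moving robots have light \texttt{off}/\texttt{call}/\texttt{reached} and stay put, no other robot targets that cell. When two \texttt{moving1} robots are simultaneously terminal on $\mathcal{R}_I(r)$, one is topmost and one bottommost and each steps in the other's opposite direction, so their trajectories diverge. For a move of type (B) or (C): the guard forces the mover to lie strictly above, or strictly below, every robot on $\mathcal{L}_I$ (resp.\ on its own line), and by the definition of $\mathcal{L}_I$ no robot lies on any vertical line strictly between the mover's line and its destination; hence every cell on the mover's horizontal line from its current position through $\mathcal{L}_I$ is vacant, so the step --- and, by induction on the number of steps, the whole leftward walk --- lands on empty cells. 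A key point is that the premise ``\texttt{call} robot on $\mathcal{L}_I$'' serializes these migrations: while a migrant is en route on a line strictly between $\mathcal{R}_I(r)$ and $\mathcal{L}_V(r)$, the next migrant's $\mathcal{L}_I$ is that occupied (\texttt{moving1}) line, which disables its move; so migrants reach $\mathcal{L}_V(r)$ one at a time and therefore settle at strictly more extreme heights, never sharing or aiming at the same cell, and never touching $r$ or $r'$.

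It remains to handle collisions across the types and under the asynchronous scheduler. A type-(C) move requires $r$ and $r'$ to be \texttt{reached}, which (as argued in the description of \textit{Phase~1}) happens only once $\mathcal{R}_I(r)$ carries no robot other than \texttt{off} ones; hence no type-(A) or type-(B) mover is present then, so a type-(C) mover cannot collide with one, and, being at an extreme of $\mathcal{L}_V(r)$, it stays clear of $r$ and $r'$. The one subtlety under the asynchronous scheduler is a robot $r_4$ that, acting on a stale snapshot, turns \texttt{moving1} on $\mathcal{R}_I(r)$ after $r$ and $r'$ have already become \texttt{reached} (or \texttt{candidate}): on its next activation the only enabled guard in the \texttt{moving1} branch is the last one, so $r_4$ turns \texttt{off} without ever executing a \textsc{Move} and contributes no motion. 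Finally, the facts that a terminal \texttt{moving1} robot on $\mathcal{R}_I(r)$ always sees at least one of $r,r'$ and that, by Lemma~\ref{flemma1}, any robot with light \texttt{candidate}/\texttt{call}/\texttt{reached} sees all \texttt{off} robots of $\mathcal{R}_I(r)$ and vice versa, ensure that movers and the robots $r, r'$ perceive the configuration faithfully enough for the above analysis to hold even with outdated views. Combining the three cases, every move of a \texttt{moving1} robot lands on a cell that is, and stays, unoccupied, so no collision occurs.

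The step I expect to be the main obstacle is exactly this asynchronous book-keeping: one must rule out that any robot, acting on an outdated \textsc{Look}, steps into a cell another robot has meanwhile entered --- and in particular that transient \texttt{moving1} robots such as $r_4$ (and the analogous $r_5$ that appears after $r, r'$ turn \texttt{candidate}) genuinely never move. This forces the argument to be pinned to the guard conditions of Algorithm~\ref{leader_selection} and to the visibility invariants recalled above, rather than to a purely geometric picture.
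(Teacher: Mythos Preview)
Your proof is correct and follows essentially the same approach as the paper's: decompose the \texttt{moving1} moves into vertical and leftward steps, argue that simultaneously terminal robots diverge vertically, and that a leftward step lands above or below every robot on $\mathcal{L}_I$ and hence on an empty cell. Your decomposition is finer (splitting the leftward migration into types (B) and (C)) and you treat the asynchronous stragglers $r_4,r_5$ and the type-(C) phase explicitly, whereas the paper's proof leaves those points implicit; the only minor imprecision is the phrase ``one at a time'' (one migrant from each end can be en route simultaneously), but since top and bottom migrants occupy disjoint, strictly monotone height ranges this does not affect your conclusion.
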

\begin{proof}
% Note that a robot $r$ with colour \texttt{moving1} can have two type of moves, first vertical and then horizontal. Also note that a robot with colour \texttt{moving1} never crosses the line $K$. Also observe that during vertical movement of $r$ no robot with colour \texttt{moving1} which is on the same side of $K$ as $r$ never moves as the other robots on the same line  and same side of $K$ as $r$ with colour \texttt{moving1} are not terminal while $r$ is still moving vertically. We can even say that even after horizontal move of $r$ start another robot $r'$ which is now terminal on $\mathcal{L}_V(r')$ with colour \texttt{moving1} does not move until $r$ reaches the vertical lines where the robots with colour \texttt{call} are located(that is when $r'$ can see two call robots on $\mathcal{L}_I(r')$). Now observe that $r'$ moves vertically in such a way that there is no robot on either $H_U^C(r') \cap \mathcal{L}_I(r')$ or $H_B^C(r') \cap \mathcal{L}_I(r')$. So, during vertical and horizontal move of $r$, $r'$ does not move and hence no collision occurs. $r'$ starts moving when $r$ has stopped. in this moment $r'$ moves vertically to a position where it has the adjacent grid point on $\mathcal{L}_I(r')$ empty so even if it now moves horizontally left, no collision occurs.
A robot $r$ with colour \texttt{moving1} can have two type of moves, horizontal to the left and vertical. $r$ moves vertically on $\mathcal{L}_V(r)$ only when it sees at least one robot with colour \texttt{call} on $\mathcal{L}_I(r)$. Note that during vertical movement of $r$, no other robot on $\mathcal{L}_V(r)$ moves vertically in the same direction as $r$. This is because if another robot say $r'$ moves on $\mathcal{L}_V(r)$, it must be terminal on $\mathcal{L}_V(r)$ and has colour \texttt{moving1}. But since $r$ is already on $\mathcal{L}_V(r)$, $r'$ moves opposite of $r$. So, as long as $r$ moves vertically on $\mathcal{L}_V(r)$ no collision occurs. Note that $r$ moves vertically in such a way such that at least one of $H_U^C(r) \cap \mathcal{L}_I(r)$ or $H_B^C(r) \cap \mathcal{L}_I(r)$ has no other robot and then it moves left towards $\mathcal{L}_I(r)$ (i.e the same vertical line were the robot with colour \texttt{call} is located).  Now let there is a non-terminal robot $r_1$ which is nearest to $r$ and  below $r$ on $\mathcal{L}_V(r)$ with colour \texttt{moving1}. Now observe that $r_1$ only moves when $r$ reaches the vertical line of the robot with colour \texttt{call}. In this scenario, $r_1$ moves vertically in such a way such that it has either  $H_U^C(r_1) \cap \mathcal{L}_I(r_1)$ or $H_B^C(r_1) \cap \mathcal{L}_I(r_1)$ has no other robot and then moves left to the empty grid point. So, during horizontal or vertical movement of robots with colour \texttt{moving1}, no collision occurs. Hence the result.
\end{proof}

\begin{lemma}
If at time $T$, two robots $r$ and $r'$ have colour \texttt{call}  on the same vertical line  and there is no robot on $K \cap \mathcal{R}_I(r)$, then there exist $T'>T$ such that both $r$ and $r'$ are with colour  \texttt{reached}  at $\mathbb{C}(T')$.
\end{lemma}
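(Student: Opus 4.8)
The plan is to prove, in this order: (i) as long as $r$ and $r'$ keep colour \texttt{call}, no robot ever shows colour \texttt{leader1} on the line $\mathcal{R}_I(r)$, so that $r$ and $r'$ never revert to \texttt{off}; (ii) a robot of colour \texttt{moving1} reaches $\mathcal{L}_V(r)$ and every robot that ever carries colour \texttt{moving1} on $\mathcal{R}_I(r)$ leaves that line, so that after finite time $\mathcal{R}_I(r)$ carries only \texttt{off} robots while $\mathcal{L}_V(r)$ carries a \texttt{moving1} robot; (iii) the rule \texttt{call}$\to$\texttt{reached} then fires for one of $r,r'$ and afterwards for the other.

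For (i): by the guard of the \texttt{candidate}$\to$\texttt{call} rule together with Lemma~\ref{flemma1}, at the instant $r,r'$ acquired colour \texttt{call} the line $\mathcal{R}_I(r)$ was symmetric with respect to $K$, all its robots were \texttt{off}, and (by hypothesis) none lay on $K$. By Lemma~\ref{flemma2}, the only way a currently-\texttt{off} robot can turn \texttt{leader1} is through the rule that requires it to lie on $K$, to be the robot of its line closest to $K$, and to see \emph{exactly two} robots on its left-immediate line, both of colour \texttt{call}. No robot of the original $\mathcal{R}_I(r)$ lies on $K$; a robot farther to the right whose left-immediate line is still the original $\mathcal{R}_I(r)$ sees \texttt{off} robots there, not \texttt{call} ones; and a robot's left-immediate line can become $\mathcal{L}_V(r)$ only after the original $\mathcal{R}_I(r)$ has been emptied. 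But a robot leaves $\mathcal{R}_I(r)$ only by moving left, and the \texttt{moving1} movement rule keeps it moving left until it sits on $\mathcal{L}_V(r)$; so once $\mathcal{R}_I(r)$ is empty the line $\mathcal{L}_V(r)$ carries $r$, $r'$ and all of the (at least two) robots that used to be on $\mathcal{R}_I(r)$, and the ``exactly two robots, both \texttt{call}'' clause fails forever. Hence $r$ and $r'$ never see colour \texttt{leader1} on $\mathcal{R}_I(r)$, and by the algorithm their colour cannot change except through \texttt{call}$\to$\texttt{reached}.

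For (ii): since $r$ never sees colour \texttt{leader1} on $\mathcal{R}_I(r)$, the preceding lemma guaranteeing that $\mathcal{L}_V(r)$ eventually and thereafter carries a robot of colour \texttt{moving1} and two robots of colour \texttt{call} applies, giving a time $T_1>T$ with such a configuration. The robots that ever carry colour \texttt{moving1} on $\mathcal{R}_I(r)$ form a finite set (no robot moves onto $\mathcal{R}_I(r)$ from the right, and by (i) no robot is freshly triggered to \texttt{moving1} on a line once it becomes $\mathcal{R}_I(r)$ with $\mathcal{L}_V(r)$ already overpopulated); each such robot is eventually terminal on $\mathcal{R}_I(r)$, sees (by an earlier lemma) at least one of $r,r'$ with colour \texttt{call} and no \texttt{reached} robot on its left-immediate line, and therefore performs precisely the moves analysed in the preceding collisionless-movement lemma, relocating in finite time onto $\mathcal{L}_V(r)$, where it idles with colour \texttt{moving1}. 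By that lemma these relocations never obstruct one another, so there is a time $T_2>T_1$ at which $\mathcal{R}_I(r)$ carries only \texttt{off} robots while $\mathcal{L}_V(r)$ still carries a \texttt{moving1} robot and $r,r'$ still carry colour \texttt{call}. Then for (iii): at any activation after $T_2$, $r$ sees a robot of colour \texttt{moving1} (or by then already \texttt{reached}) on $\mathcal{L}_V(r)$ and all robots of $\mathcal{R}_I(r)$ with colour \texttt{off}, so it sets its colour to \texttt{reached}; the same holds for $r'$. Once one of them, say $r$, has colour \texttt{reached}, the other still meets the guard, since it now sees $r$ (a \texttt{reached} robot) on $\mathcal{L}_V(r)$ and the colour change moved nobody, so $\mathcal{R}_I(r)$ stays all-\texttt{off}; hence $r'$ also turns \texttt{reached}. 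Thus some $T'>T$ has both $r$ and $r'$ of colour \texttt{reached} in $\mathbb{C}(T')$.

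I expect step (i) to be the crux: keeping control of $\mathcal{R}_I(r)$ while that line itself may shift under the asynchronous adversary, and in particular excluding a spurious \texttt{leader1} just to the right of $\mathcal{L}_V(r)$ once the original $\mathcal{R}_I(r)$ empties. The observation that the vacating \texttt{moving1} robots pile up on $\mathcal{L}_V(r)$ and thereby destroy the ``exactly two \texttt{call} robots'' trigger is what closes this gap, and it has to be meshed carefully with the movement guards and the preceding collisionless-movement lemma.
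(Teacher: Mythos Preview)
Your overall approach matches the paper's: argue that no \texttt{leader1} arises on $\mathcal{R}_I(r)$, that \texttt{moving1} robots migrate onto $\mathcal{L}_V(r)$ leaving $\mathcal{R}_I(r)$ all \texttt{off}, and that $r,r'$ then switch to \texttt{reached} in sequence. Your step~(i), which worries about the line $\mathcal{R}_I(r)$ itself shifting rightward once the original one empties, is in fact more careful than the paper's proof, which tacitly assumes some \texttt{off} robots remain on the original line.

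There is, however, a gap in (ii)/(iii). You assert a time $T_2$ with $\mathcal{R}_I(r)$ all \texttt{off}, $r,r'$ still \texttt{call}, and $\mathcal{R}_I(r)$ ``staying all-\texttt{off}'' thereafter. Under asynchrony this is not guaranteed: an \texttt{off} robot on $\mathcal{R}_I(r)$ may have performed its \textsc{Look} while a \texttt{moving1} neighbour was still present and only complete its \textsc{Compute} after $T_2$, turning \texttt{moving1} late. Worse, $r$ may already have turned \texttt{reached} during an earlier instant when $\mathcal{R}_I(r)$ happened to be all \texttt{off}; then your claim that each \texttt{moving1} robot ``relocates in finite time onto $\mathcal{L}_V(r)$'' fails, since such a robot now sees \texttt{reached} on $\mathcal{L}_I$ and the movement guard is no longer met. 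The paper closes exactly this hole by a different mechanism: it allows the late \texttt{moving1} to occur \emph{after} $r$ is \texttt{reached}, then invokes the rule ``\texttt{moving1} seeing \texttt{reached} on $\mathcal{L}_I$ reverts to \texttt{off}'', and only then lets $r'$ fire. You also omit the paper's last check that $r$ does not prematurely execute \texttt{reached}$\to$\texttt{candidate} before $r'$ changes; this holds because $r'$ is still \texttt{call} rather than \texttt{reached} or \texttt{candidate}, but it needs to be stated.
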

\begin{proof}
Let $r$ and $r'$ be two robots with colour \texttt{call} at time $T$ on same vertical line $\mathcal{L}_V(r)$ (i.e. $\mathcal{L}_V(r')$). Then $\mathcal{R}_I(r)$ must be symmetric with respect to $K$. Also, there is no robot on $K \cap \mathcal{R}_I(r)$. So, no robot with colour \texttt{off} on $\mathcal{R}_I(r)$ changes its colour to \texttt{leader1}. Now in this scenario, the robots which are closest to $K$ on $\mathcal{R}_I(r)$ change their colours to \texttt{moving1}. Note that a robot with colour \texttt{off} also can change its colour to \texttt{moving1} if it sees another robot with colour \texttt{moving1} on the same vertical line. Also, no robot with colour \texttt{moving1} moves unless it is terminal on the same vertical line. Hence we can say that at least all robots of above or below $K$ on $\mathcal{R}_I(r)$ change their colours to \texttt{moving1}. Now by Algorithm \ref{leader_selection}, the terminal robots  with colour \texttt{moving1} move to $\mathcal{L}_V(r)$. Then next robot becomes terminal and do the same. So, after a finite time say $T_1 > T$, all robots with colour \texttt{moving1} on $\mathcal{R}_I(r)$ move to $\mathcal{L}_V(r)$. In this moment, all robots of $\mathcal{R}_I(r)$ have colour \texttt{off}. Note that in this scenario, at least one of $r$ or $r'$ must see a robot with colour \texttt{moving1} on the same vertical line upon activation. Without loss of generality, let $r$ sees a robot with light \texttt{moving1} on $\mathcal{L}_V(r)$ and all robots on $\mathcal{R}_I(r)$ have colour \texttt{off} (Figure \ref{fig:Lemma8pic1}). Then $r$ changes its colour to \texttt{reached} at time say $T_2 \ge T_1\ge T$ (Figure \ref{Fig:Lemma8pic2}). Now when $r'$ activates, it sees $r$ with colour \texttt{reached} on $\mathcal{L}_V(r')$ and changes its colour to \texttt{reached} at a time $T_3 \ge T_2$ (here $T' = T_3$) (Figure \ref{fig:Lemma8pic3}). Now it may be possible due to asynchronous environment that after $r$ changes its colour to \texttt{reached} at time $T_2$,  a robot say $r_1$, on $\mathcal{R}_I(r)$ changes its colour to \texttt{moving1} . Then $r'$ will not change its colour to \texttt{reached} now, even after seeing $r$ with colour \texttt{reached} as all robots on $\mathcal{R}_I(r')$ now do not have colour \texttt{off}. 
% Also note that all robots with colour \texttt{moving1} on $\mathcal{L}_V(r)$ do not move to left as all robots on $\mathcal{R}_I(r)$ do not have colour \texttt{off}.
% So $r$ never becomes terminal on $\mathcal{L}_V(r)$ and thus never changes its colour to \texttt{candidate} before $r_1$ wakes again.
Now when $r_1$ wakes again at a time say $T_4 (\ge T_2)$, it sees $r$ with colour \texttt{reached} on $\mathcal{L}_I(r)$ and changes its colour to \texttt{off}. Now when $r'$ wakes again at some time $T' \ge T_4 \ge T_2 \ge T_1 > T$, it changes its colour to \texttt{reached}. Note that $r$ does not change its colour from \texttt{reached} to \texttt{candidate} before $r'$ wakes and changes its colour to \texttt{reached}  as it will not see any other robot with colour \texttt{reached} or \texttt{candidate} on $\mathcal{L}_V(r)$ before $r'$ wakes. So, we can conclude that $\exists$  $T' > T$ when both $r$ and $r'$ have colour \texttt{reached}.

\begin{figure}[!htb]\centering
   \begin{minipage}{0.45\textwidth}
    \includegraphics[height=6cm,width=7cm]{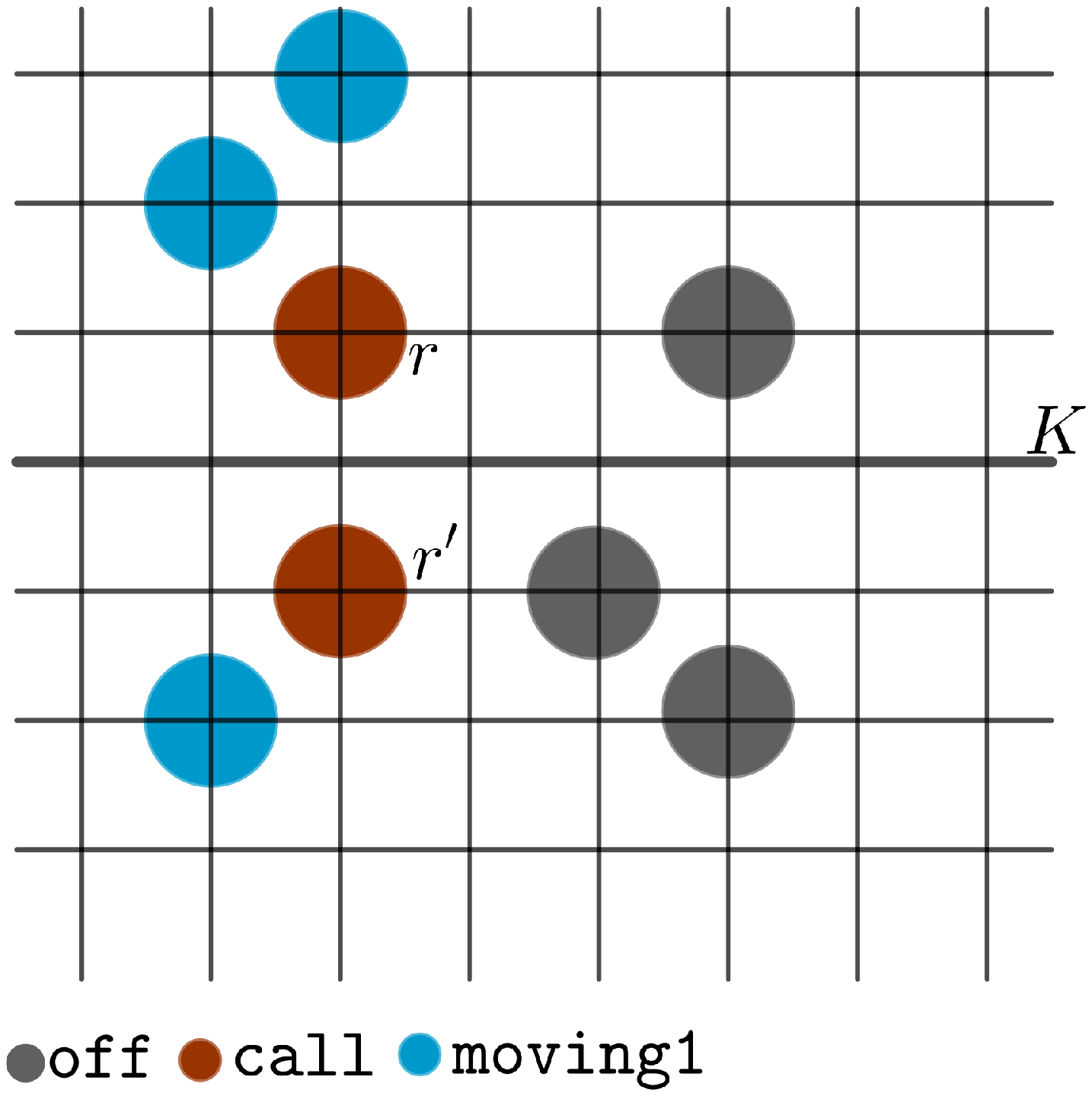}
     \caption{$r$ sees a robot with colour \texttt{moving1} on $\mathcal{L}_V(r)$ and sees all robots on $\mathcal{R}_I(r)$ with colour \texttt{off}. $r'$ does not see any robot with colour \texttt{moving1} or \texttt{reached} on $\mathcal{L}_V(r')$.}\label{fig:Lemma8pic1}
   \end{minipage}
   \hfill
   \begin {minipage}{0.45\textwidth}
    \includegraphics[height=6cm,width=7cm]{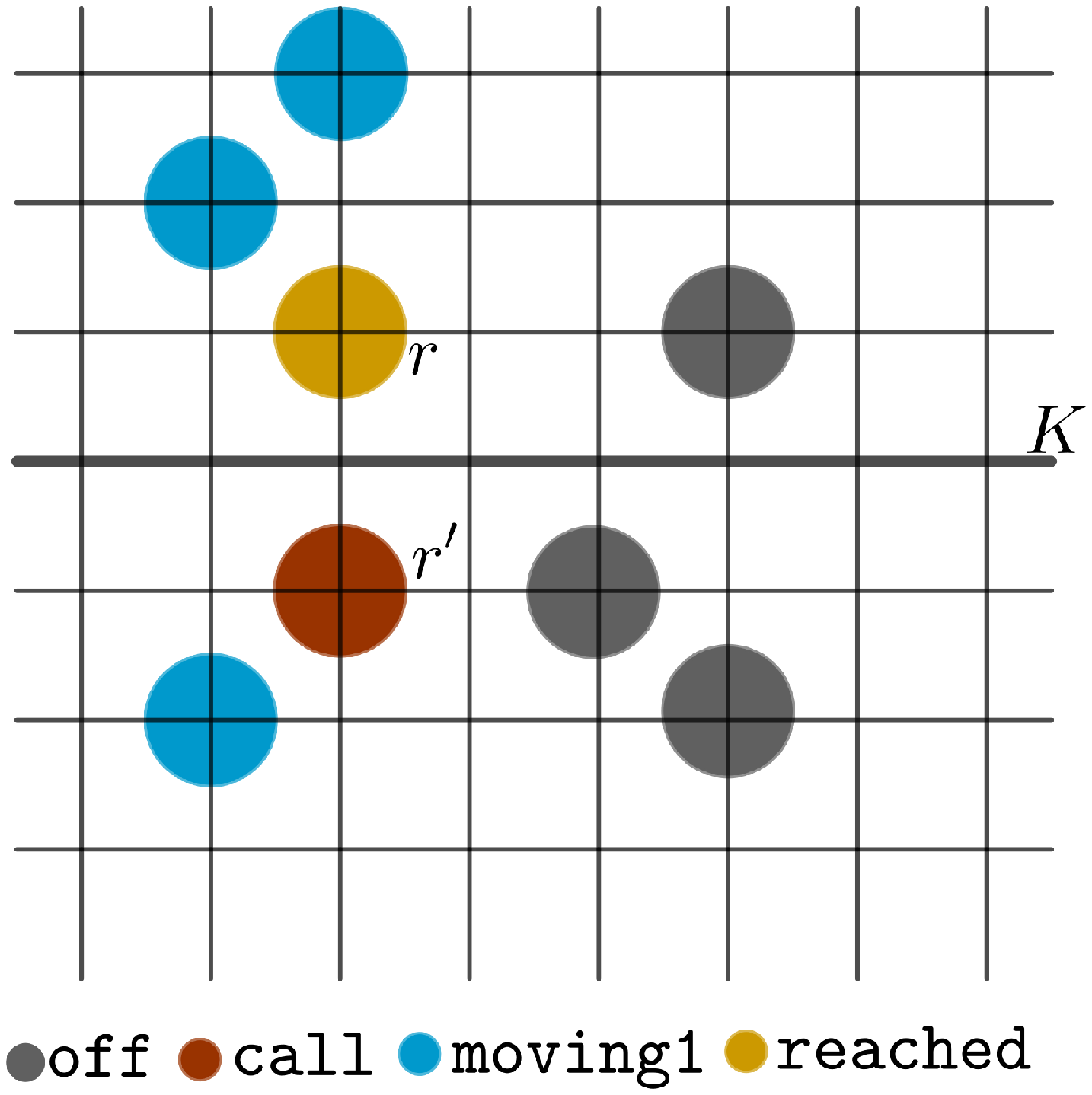}
     \caption{$r$ changes its colour to \texttt{reached}.}\label{Fig:Lemma8pic2}
   \end{minipage}
\end{figure}
\begin{figure}
    \centering
    \includegraphics[height=6cm,width=7cm]{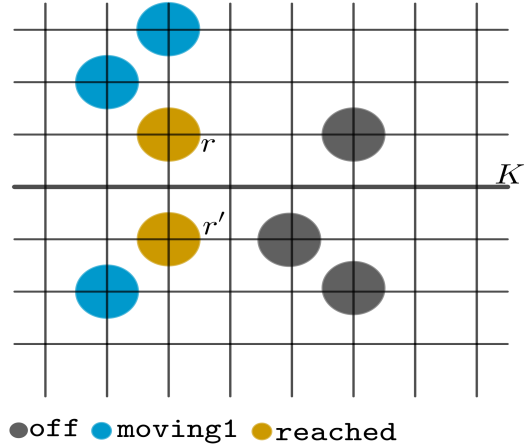}
    \caption{Now $r'$ sees $r$ with colour \texttt{reached} on $\mathcal{L}_V(r')$ and all robots on $\mathcal{R}_I(r')$ with colour \texttt{off}. So, $r'$ changes its colour to \texttt{reached}.}
    \label{fig:Lemma8pic3}
\end{figure}

\end{proof}

\begin{lemma}
If a robot $r$ changed its colour to \texttt{reached} at some time $T >0$, then $\exists$ $T' \ge T$ such that all robot in $\mathcal{R}_I(r)$ in $\mathbb{C}(T')$ have colour \texttt{off}.
\end{lemma}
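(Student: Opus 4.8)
The plan is to trace how $r$ could have acquired the colour \texttt{reached}, extract the resulting structural facts, and then argue that the robots of $\mathcal{R}_I(r)$ that are not already \texttt{off} at $T$ — necessarily a handful of \texttt{moving1} robots — all revert to \texttt{off} after finitely many activations, while no surviving colour is created in the meantime.

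First I would inspect Algorithm~\ref{leader_selection}: the only transition into \texttt{reached} sits in the \texttt{call} branch, and its guard forces that, in the \textsc{Look} phase producing the decision at time $T$, the robot $r$ saw a robot with colour \texttt{moving1} or \texttt{reached} on $\mathcal{L}_V(r)$ and saw \emph{every} robot of $\mathcal{R}_I(r)$ coloured \texttt{off}. Hence $r$ was \texttt{call} just before $T$, so $\mathcal{R}_I(r)$ is symmetric with respect to $K$, a companion robot $r'$ sits on $\mathcal{L}_V(r)$ with colour \texttt{call} or \texttt{reached}, and $r$ makes no move while it keeps the colour \texttt{reached}. If in $\mathbb{C}(T)$ every robot of $\mathcal{R}_I(r)$ is already \texttt{off}, take $T'=T$; so assume some robot of $\mathcal{R}_I(r)$ is not \texttt{off} at $T$.

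Next I would restrict the colours that can occur on $\mathcal{R}_I(r)$. Every robot $q$ on $\mathcal{R}_I(r)$ has $r\in H_L^O(q)$, so $q$ can never adopt \texttt{terminal1}; and \texttt{candidate}, \texttt{call}, \texttt{reached} are reachable only from \texttt{terminal1}, \texttt{candidate}, \texttt{call}, which $q$ cannot hold here. The one remaining transition out of \texttt{off} on $\mathcal{R}_I(r)$ is \texttt{off}$\to$\texttt{moving1} (the alternative \texttt{off}$\to$\texttt{leader1} needs a robot on $K\cap\mathcal{R}_I(r)$; as in the foregoing lemmas, once such a robot becomes \texttt{leader1} it is forced to leave $\mathcal{R}_I(r)$ by moving left, so this case does not obstruct the claim). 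Thus at $T$ the robots of $\mathcal{R}_I(r)$ are coloured \texttt{off} or \texttt{moving1}, at least one being \texttt{moving1}. I would then show these \texttt{moving1} robots drain. While $r$ stays \texttt{reached}, no robot of $\mathcal{R}_I(r)$ is ever \texttt{reached}, so the ``move left'' (second) branch of the \texttt{moving1} rule is permanently disabled there; a non-terminal \texttt{moving1} robot of $\mathcal{R}_I(r)$ does not act at all; and a terminal \texttt{moving1} robot of $\mathcal{R}_I(r)$ sees at least one of $r,r'$ by the lemma on visibility of a terminal \texttt{moving1} robot, while every \texttt{off} robot of $\mathcal{R}_I(r)$ sees $r$ by Lemma~\ref{flemma1}. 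Since $r$ is \texttt{reached}, the moment a \texttt{moving1} robot of $\mathcal{R}_I(r)$ actually sees $r$ it falls through to the third branch and sets its colour to \texttt{off}; the only escape is that it sees \emph{only} $r'$ while $r'$ is still \texttt{call}, but then $r'$ becomes \texttt{reached} within finite time by the preceding lemma (two \texttt{call} robots on a common vertical line with nothing on $K\cap\mathcal{R}_I(r)$ both eventually turn \texttt{reached}, applied at an instant just before $T$ when $r$ and $r'$ were both \texttt{call}), closing that escape as well.

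Putting this together with fairness of the $\mathcal{ASYNC}$ scheduler and a monovariant — for instance the number of non-\texttt{off} robots on the then-current $\mathcal{R}_I(r)$, ties broken by their total grid-distance to $\mathcal{L}_V(r)$, using that a \texttt{moving1} robot reaching $\mathcal{L}_V(r)$ leaves $\mathcal{R}_I(r)$ for good and that robots further right of $\mathcal{R}_I(r)$ stay \texttt{off} throughout — I would conclude that after finitely many activations every robot on the then-current $\mathcal{R}_I(r)$ is \texttt{off}, and that moment is the desired $T'$. I expect this last bookkeeping to be the main obstacle: asynchrony lets an \texttt{off} robot of $\mathcal{R}_I(r)$ re-enter \texttt{moving1} from a stale snapshot even after its neighbours have cleared, and \texttt{moving1} robots may drift partly off the line before settling, so one has to exhibit an honestly decreasing measure (decreasing along a subsequence with bounded gaps) and to verify carefully, via the two visibility lemmas, that each \texttt{moving1} robot of $\mathcal{R}_I(r)$ on its last relevant activation does see a robot coloured \texttt{reached} or \texttt{candidate} on $\mathcal{L}_I$, hence executes the \texttt{off} transition rather than a move. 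Everything else is a direct reading of the algorithm and of the lemmas already proved.
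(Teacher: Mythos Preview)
Your approach is essentially the paper's: both hinge on the fact that $r$'s \textsc{Look} for the \texttt{reached} transition saw every robot of $\mathcal{R}_I(r)$ coloured \texttt{off}, so any non-\texttt{off} robot there at $T$ is a freshly-minted \texttt{moving1} that will revert to \texttt{off} once it sees $r$ (now \texttt{reached}) on its $\mathcal{L}_I$, while $r$ itself cannot leave \texttt{reached} as long as it sees a non-\texttt{off} robot on $\mathcal{R}_I(r)$. The paper's proof is five sentences covering just this core mechanism; your version layers on a colour-restriction argument, explicit visibility invocations, the subcase where a terminal \texttt{moving1} robot sees only $r'$ while $r'$ is still \texttt{call}, and a proposed monovariant for repeated stale-snapshot re-entries into \texttt{moving1}, but none of this alters the underlying argument.
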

\begin{proof}
During the look phase, $r$ must have seen robots on $\mathcal{R}_I(r)$ have colour \texttt{off}. Now if no robot on $\mathcal{R}_I(r)$ change its colour to \texttt{moving1} in between the completion of look phase of $r$ and time $T$, then $T' = T$. Now if a robot, say $r_1$ changes its colour to \texttt{moving1} in between completion of look phase of $r$ and time $T$, then there exists $T' > T$ when $r_1$ sees a robot with colour \texttt{reached} on $\mathcal{L}_I(r_1)$ and so changes its colour to \texttt{off}. Note that before $r_1$ changes its colour to \texttt{off}, $r$ does not change its colour as it sees $r_1$ with colour \texttt{moving1} on $\mathcal{R}_I(r)$. So, we can conclude $\exists$ $T' \ge T$ such that all robots on $\mathcal{R}_I(r)$ have colour \texttt{off} in $\mathbb{C}(T')$. 
% Also robots on $\mathcal{R} _I(r)$ will not change its colour again until $r$ changes its colour to \texttt{candidate} again.
\end{proof}

\begin{lemma}
\label{flemma10}
If at time $T$, a robot changes its colour to \texttt{leader1} from \texttt{off}, then $\mathbb{C}(T')$ has no robot with colour \texttt{candidate} or \texttt{terminal1}, where $T' \ge T$.
\end{lemma}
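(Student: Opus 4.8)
The plan is to first pin down the only way the hypothesis can occur. Inspecting Algorithm~\ref{leader_selection}, a robot passes from \texttt{off} to \texttt{leader1} only in the branch where it sees exactly two robots with colour \texttt{call} on its left--immediate vertical line and it itself lies on $K$. So at time $T$ there are two robots $r_a,r_b$ with colour \texttt{call} on a common vertical line $L$, and the new leader $r_\ell$ sits on $K\cap\mathcal{R}_I(r_a)$, where $L=\mathcal{L}_I(r_\ell)$. I would then argue that at this instant the global configuration is highly constrained: the only robots not of colour \texttt{off} are $r_a,r_b$ (colour \texttt{call}), $r_\ell$ (colour \texttt{leader1}), and possibly some \texttt{moving1} robots parked on the line immediately left of $L$ from earlier iterations of the candidate--call--reached cycle; in particular there is no robot of colour \texttt{terminal1}, \texttt{candidate}, or \texttt{reached} at time $T$.

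To justify that last claim I would trace Phase~1. A robot acquires colour \texttt{call} only from \texttt{candidate}, and by Theorem~\ref{fth1} together with the structure of Phase~1 the moment two robots on the same vertical line both carry colour \texttt{candidate}/\texttt{call} is exactly the state in which these are the only non-\texttt{off}/non-\texttt{moving1} robots; in particular no \texttt{terminal1}, no \texttt{candidate} other than $r_a,r_b$, and no \texttt{reached} robot is present, and all robots of $\mathcal{R}_I(r_a)$ are \texttt{off}. From that point up to $T$ nothing creates a robot of colour \texttt{terminal1}, \texttt{candidate}, or \texttt{reached}: a new \texttt{terminal1} would need an \texttt{off} robot whose left open half is empty, but every \texttt{off} robot lies on $L$ or strictly to its right and hence has $r_a$ or $r_b$ in its left open half; a new \texttt{reached} would need a \texttt{moving1} or \texttt{reached} robot on $L$, and no \texttt{moving1} robot is born on $\mathcal{R}_I(r_a)$ in this iteration because the robot of $\mathcal{R}_I(r_a)$ closest to $K$ is the one on $K$, which turns to \texttt{leader1} rather than \texttt{moving1}; and a new \texttt{candidate} would need a \texttt{terminal1} or \texttt{reached} robot, which we just excluded. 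The parked \texttt{moving1} robots sit on the current $\mathcal{L}_1$, so their left--immediate line is empty and none of their transition rules in Algorithm~\ref{leader_selection} can fire, so they cannot even turn \texttt{off}. Hence $\mathbb{C}(T)$ has no robot with colour \texttt{candidate} or \texttt{terminal1}.

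Next I would show no such robot ever appears after $T$. Since positions have not changed, each of $r_a,r_b$ still sees $r_\ell$ on $\mathcal{R}_I(r_a)$ (Lemma~\ref{flemma1}), so upon activation each turns to \texttt{off}; thereafter no \texttt{call} robot remains, and since a \texttt{reached} robot is produced only from a \texttt{call} robot that sees a \texttt{moving1}/\texttt{reached} robot on its own vertical line---which never happens on $L$---no \texttt{reached} robot is ever produced, hence no \texttt{candidate} is ever produced from a \texttt{reached} robot. For \texttt{terminal1}: such a robot would be an \texttt{off} robot on the current $\mathcal{L}_1$, terminal on its vertical line, with no \texttt{leader1} robot on $\mathcal{R}_I\cup\mathcal{L}_V$. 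The leader $r_\ell$ only ever moves left, and once it is the lone leftmost robot, up: as long as $r_\ell$ is to the right of $L$ it lies in $\mathcal{R}_I(r_a)\cup\mathcal{L}_V(r_a)$ for the now-\texttt{off} robots $r_a,r_b$, so they fail the ``no \texttt{leader1}'' test; once $r_\ell$ reaches or passes $L$, it lies in the left open half of $r_a,r_b$, so they fail the ``empty $H_L^O$'' test; and every other \texttt{off} robot is weakly right of $L$, so its left open half is non-empty too (it contains $r_a$, $r_b$, or $r_\ell$). Thus no robot becomes \texttt{terminal1}, and hence none becomes \texttt{candidate}. Combining the two halves, $\mathbb{C}(T')$ has no robot with colour \texttt{candidate} or \texttt{terminal1} for every $T'\ge T$.

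The main obstacle I anticipate is the bookkeeping in the second paragraph: turning ``two robots with colour \texttt{call} on one vertical line'' into a precise structural statement about the whole configuration, which forces one to lean on the full description of the candidate--call--reached cycle in Phase~1 and, in particular, to argue carefully that any \texttt{moving1} robots left over from earlier iterations can only be parked strictly left of $L$ on the leftmost line, with all their rules dormant. A minor additional point to verify is the visibility claim used after $T$, namely that each \texttt{call} robot still sees $r_\ell$ even if $r_\ell$ activates first; if that ever failed, one simply observes that a \texttt{call} robot can never become \texttt{reached} in this scenario, so robots stuck with colour \texttt{call} are harmless for the statement.
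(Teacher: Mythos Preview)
Your proposal is correct and follows essentially the same approach as the paper's proof: identify that the \texttt{off}$\to$\texttt{leader1} transition requires two \texttt{call} robots $r_1,r_2$ on $\mathcal{L}_I(r)$, argue these are the only robots that could ever have carried \texttt{terminal1}/\texttt{candidate}, and then show that after $T$ they turn \texttt{off} (seeing \texttt{leader1} on $\mathcal{R}_I$) and never regain \texttt{terminal1} because the leader remains visible on $\mathcal{R}_I$/$\mathcal{L}_V$ or sits in their $H_L^O$. The paper's version is considerably more compressed and leaves your bookkeeping about parked \texttt{moving1} robots and the absence of \texttt{reached} robots implicit, whereas you spell these out explicitly.
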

\begin{proof}
If $r$ changes its colour to \texttt{leader1} from \texttt{off} at some time $T$, then it must have seen  exactly two robots say $r_1$ and $r_2$ with colour \texttt{call} on $\mathcal{L}_I(r)$ at a time $T_1$ where $T_1 < T$ (Figure \ref{fig:Lemma10pic1}). Note that a robot can only have colour \texttt{call} at some time $T_2$ if it had colour \texttt{candidate} at some time $T_3 < T_2$. Also, a robot can change its colour to \texttt{candidate} from \texttt{terminal1} only if it sees there is no other robot on its left open half. Also, a robot with colour \texttt{off} changes to colour \texttt{terminal1} only if its left open half empty, there is no robot with colour \texttt{leader1} on $\mathcal{R}_I(r_1)$ or on $\mathcal{L}_V(r_1)$ and  it is terminal on $\mathcal{L}_V(r)$. Since during the whole execution of \textit{Phase 1}, no other robot having colour \texttt{off} except $r_1$ and $r_2$ can see its left open half empty and find themselves to be terminal, no other robot except $r_1$ and $r_2$ can change their colours to \texttt{terminal1}. Now upon activation again  at any time $T_4 > T$ , both $r_1$ and $r_2$ sees $r$ on $\mathcal{R}_I(r_1)$ with colour \texttt{leader1} and change their colours to \texttt{off} (Figure \ref{Fig:Lemma10pic2}). Observe that after time $T_4$,  $r_1$ and $r_2$ can never change their colour to \texttt{terminal1} and hence to \texttt{candidate} as they will see $r$ with colour \texttt{leader1} on $\mathcal{R}_I(r_1)$ or on $\mathcal{L}_V(r_1)$ or $r_1$ and $r_2$ would have its left  open half non-empty. So, we can conclude the lemma.

\begin{figure}[!htb]\centering
   \begin{minipage}{0.45\textwidth}
    \includegraphics[height=6cm,width=6.5cm]{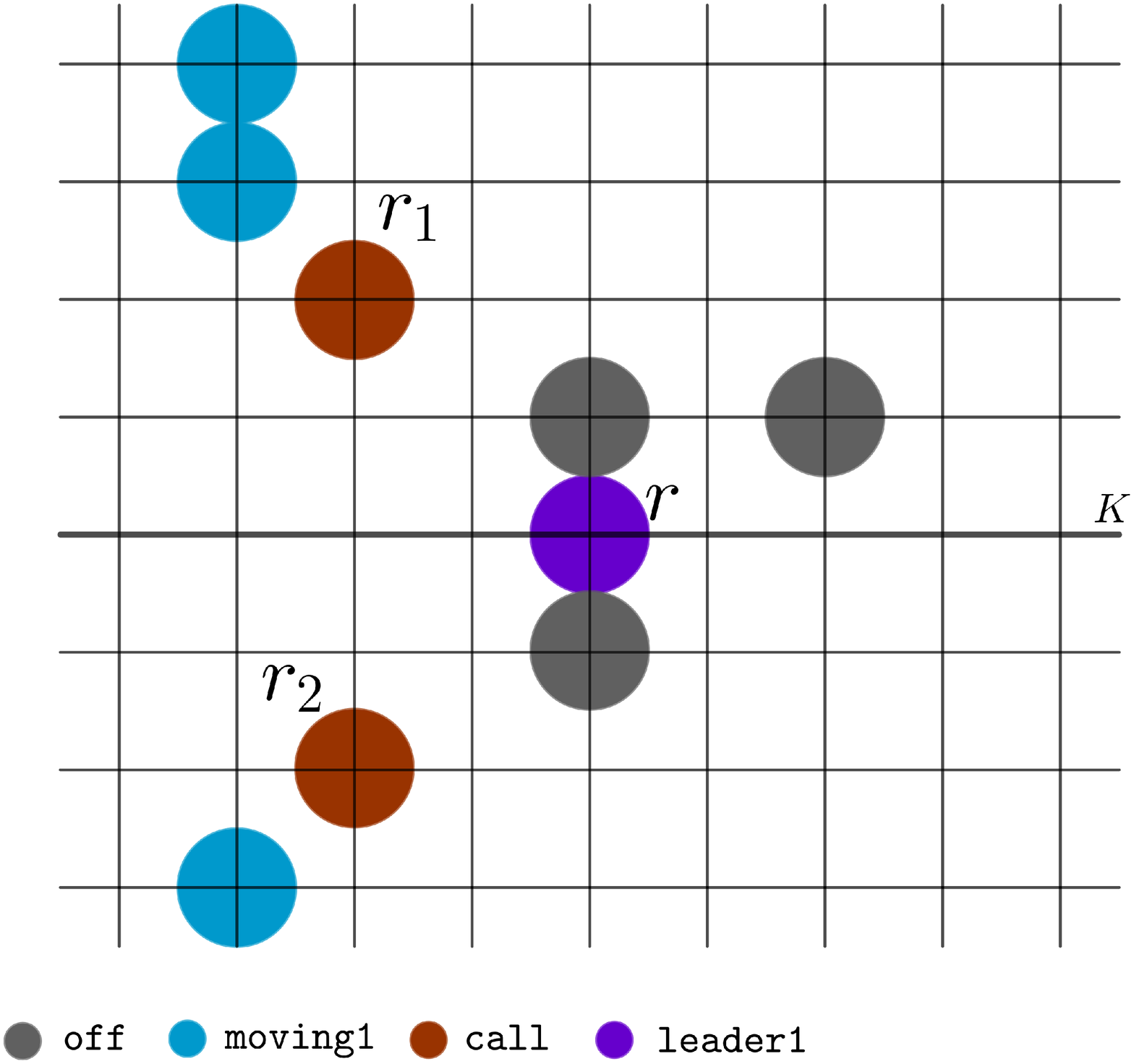}
     \caption{$r_1$ and $r_2$ with colour \texttt{call} both see $r$ with colour \texttt{leader1} on $\mathcal{R}_I(r_1) \cap K$.}\label{fig:Lemma10pic1}
   \end{minipage}
   \hfill
   \begin {minipage}{0.45\textwidth}
    \includegraphics[height=6cm,width=6.5cm]{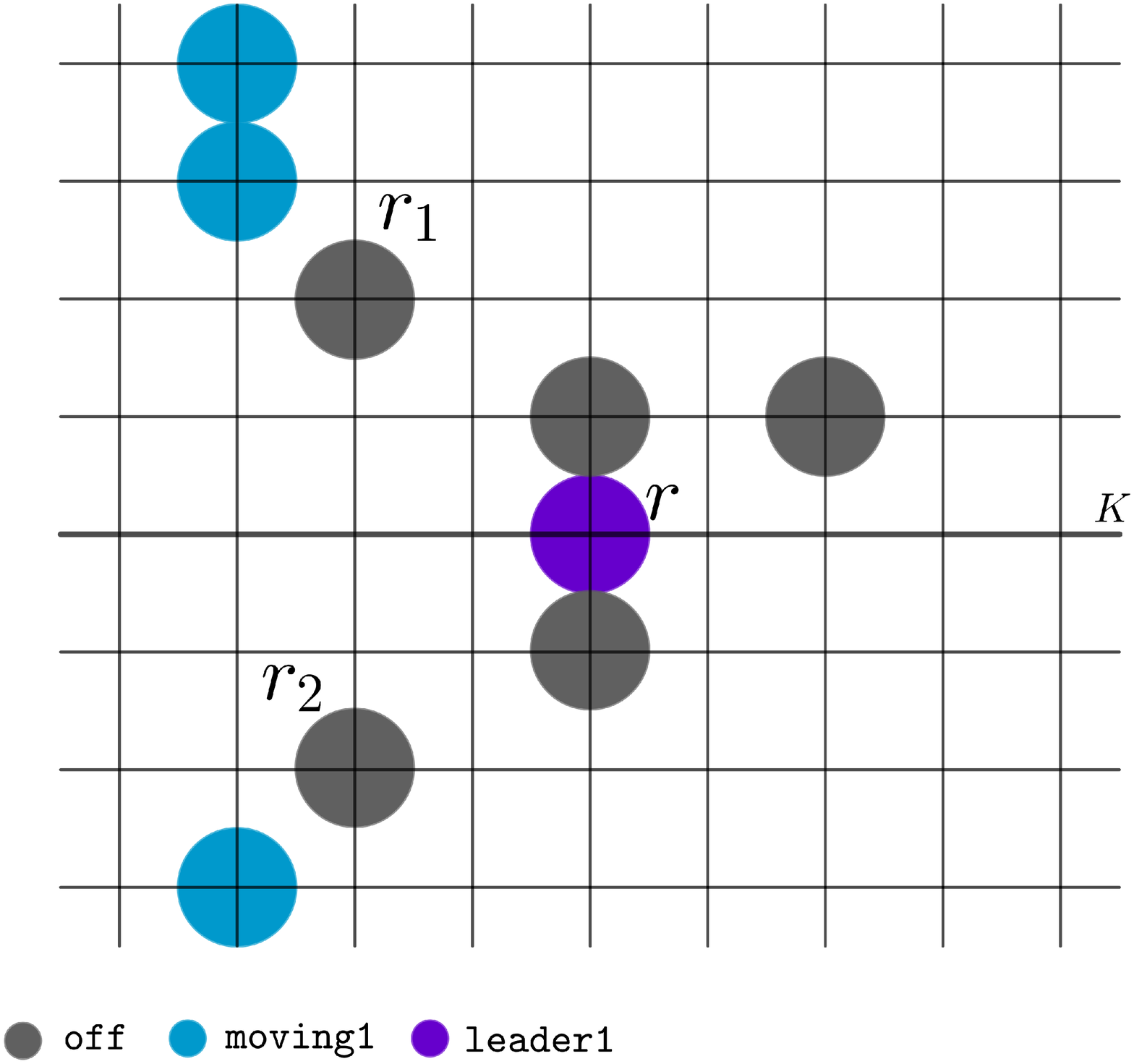}
     \caption{Both $r_1$ and $r_2$ change their colours to \texttt{off} after seeing $r$.}\label{Fig:Lemma10pic2}
   \end{minipage}
\end{figure}
\end{proof}

\begin{lemma}
\label{flemma11}
If at a time $T$, a robot $r$ changed its colour to \texttt{leader1}, then there will be no robot with colour \texttt{reached} in $\mathbb{C}(T')$, where $T' \ge T$.
\end{lemma}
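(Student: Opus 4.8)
The plan is to lean on Lemma~\ref{flemma2} together with the ``colour-source'' structure of \emph{Phase 1}: a robot can acquire colour \texttt{reached} only from \texttt{call}, can acquire \texttt{call} only from \texttt{candidate}, and can acquire \texttt{candidate} only from \texttt{terminal1} or \texttt{reached}, while \texttt{terminal1} is only ever held — as already argued in the proof of Lemma~\ref{flemma10} — by the at most two robots $r_1,r_2$ that are terminal on $\mathcal{L}_1$ in $\mathbb{C}(0)$. Consequently the colours \texttt{candidate}, \texttt{call}, \texttt{reached}, \texttt{leader1} are carried only by $r_1,r_2$ (one of which may be $r$ itself), so once $r$ has turned \texttt{leader1} it suffices to track these two robots.

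First I would show that $\mathbb{C}(T)$ itself contains no \texttt{reached} robot. By Lemma~\ref{flemma2}, $r$ became \texttt{leader1} either from \texttt{candidate} or from \texttt{off}. If from \texttt{off}, the enabling guard of Algorithm~\ref{leader_selection} forced $r$ to see exactly two robots on $\mathcal{L}_I(r)$, both with colour \texttt{call}; since these are $r_1,r_2$, no robot is \texttt{reached} at $r$'s \textsc{Look}, and — because the \texttt{moving1}-migration needed to flip a \texttt{call} robot to \texttt{reached} is disabled precisely when, as here, some robot sits on $K\cap\mathcal{R}_I(r)$ (a robot on $K$ turns \texttt{leader1} instead of spawning \texttt{moving1} robots) — neither $r_1$ nor $r_2$ can flip to \texttt{reached} before $T$. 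If $r$ became \texttt{leader1} from \texttt{candidate}, then $r$ is one of $r_1,r_2$ and its companion, if present, carries \texttt{candidate} or \texttt{call}; again $\mathbb{C}(T)$ has no \texttt{reached} robot.

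Then I would show no \texttt{reached} robot appears at any $T'\ge T$. By Lemma~\ref{flemma10} — with the \texttt{candidate}-case handled by the same mechanism: as soon as $r$ is \texttt{leader1}, $r_1,r_2$ see \texttt{leader1} on $\mathcal{R}_I$ or on $\mathcal{L}_V$ and turn \texttt{off}, and thereafter can never again find an empty left open half, since the \texttt{leader1} robot stays strictly to their left — there is, from $T$ onward, no robot with colour \texttt{candidate} or \texttt{terminal1}; hence after $T$ no robot ever newly turns \texttt{call}, so the only possible sources of \texttt{reached} after $T$ are robots already carrying \texttt{call} at $T$, i.e.\ $r_1,r_2$. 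But the \texttt{call}$\to$\texttt{reached} guard demands that \emph{every} robot on $\mathcal{R}_I$ be \texttt{off}, whereas from $T$ on $\mathcal{R}_I$ of their vertical line contains the \texttt{leader1} robot $r$; so this guard is permanently violated, $r_1,r_2$ leave \texttt{call} only to \texttt{off}, and they remain \texttt{off} for the rest of \emph{Phase 1}. This yields: no \texttt{reached} robot at any $T'\ge T$.

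The delicate point — and the one I would treat most carefully — is the asynchronous window between $r$'s \textsc{Look} and the instant $T$ at which it actually sets its light to \texttt{leader1}: in that window $\mathcal{R}_I$ of $r_1$'s line could a priori momentarily become all-\texttt{off} and a \texttt{call} robot could flip to \texttt{reached}, leaving it stranded (it could not then return to \texttt{candidate}, since that transition also needs $\mathcal{R}_I$ all-\texttt{off}). Ruling this out is exactly the content of the observations above — no $K\cap\mathcal{R}_I$ robot while \texttt{moving1}-migration is active, a robot on $K$ turns \texttt{leader1} rather than spawning \texttt{moving1} robots, and a would-be \texttt{reached} robot is forced back to \texttt{off} the moment it sees $r$ with colour \texttt{leader1} — and I would phrase these using the colour-change guards of Algorithm~\ref{leader_selection} in the same style as the proof of Lemma~\ref{flemma10} and the two preceding lemmas concerning colour \texttt{reached}.
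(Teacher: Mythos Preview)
Your approach mirrors the paper's: split on Lemma~\ref{flemma2}, verify that $\mathbb{C}(T)$ contains no \texttt{reached} robot, then argue (via the mechanism of Lemma~\ref{flemma10}, extended to the from-\texttt{candidate} case) that after $T$ no robot ever again carries \texttt{candidate}, hence never \texttt{call}, hence never \texttt{reached}. The paper is terser about the asynchronous window you single out, but the skeleton is the same.

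One step needs repair. You write that the companions ``see \texttt{leader1} on $\mathcal{R}_I$ or on $\mathcal{L}_V$ and turn \texttt{off}'', and later that ``$\mathcal{R}_I$ of their vertical line contains the \texttt{leader1} robot $r$''. Both claims fail in the from-\texttt{candidate} subcase where the companion already carries \texttt{call}: there $r$ sits on $\mathcal{L}_V$ of the companion, not on its $\mathcal{R}_I$, and by Algorithm~\ref{leader_selection} a \texttt{call} robot flips to \texttt{off} only when it sees \texttt{leader1} on $\mathcal{R}_I$, not on $\mathcal{L}_V$. So neither of your two disposal mechanisms applies to such a companion, and nothing you have written prevents it from later flipping to \texttt{reached}. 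The paper sidesteps this by treating the companion as carrying \texttt{candidate} only (which \emph{does} flip to \texttt{off} on seeing \texttt{leader1} on $\mathcal{L}_V$). To make your version go through you must either argue separately that the from-\texttt{candidate}-with-\texttt{call}-companion subcase cannot arise---for instance by observing that between the companion's \textsc{Look} (which saw $\mathcal{R}_I$ symmetric) and $r$'s \textsc{Look} (which saw the same $\mathcal{R}_I$ asymmetric) no robot can have moved onto or off that line, since no \texttt{moving1} robot yet exists---or else trace what actually happens to a lingering \texttt{call} companion once $r$ starts moving left.
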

\begin{proof}
Note that a robot can only change its colour to \texttt{reached} at a time $T$ if $\exists$ $T_1 < T$ such that the robot had colour \texttt{candidate} in $\mathbb{C}(T_1)$. Now if $r$ changed its colour to \texttt{leader1} from \texttt{candidate}, then even if there is another robot say $r'$ with colour \texttt{candidate} on $\mathcal{L}_V(r)$ (Figure \ref{fig:Lemma11pic1}), $r'$ will change its colour to \texttt{off} upon first activation at a time $T_2 > T$. So, the configuration now has no robot with colour \texttt{candidate} or \texttt{reached} (as both $r$ and $r'$ with colour \texttt{candidate} who could have changed their colour to \texttt{reached} changed it to \texttt{leader1} and \texttt{off}) (Figure \ref{Fig:Lemma11pic2}). Also, note that during the period between $T$ and $T_2$, the configuration does not have colour \texttt{reached} as in this time $r$ has colour \texttt{leader1} and $r'$ has colour \texttt{candidate}. Also, no other robot with colour \texttt{off} will ever change its colour to \texttt{candidate} after time $T_2$ as a robot say $r_1$ with colour \texttt{off} or \texttt{terminal1} either sees $r$ with colour \texttt{leader1} on $\mathcal{L}_V(r_1)$ or on $\mathcal{R}_I(r_1)$ or it has its left open half non-empty. And since a robot can only change its colour to \texttt{reached} when  it had colour \texttt{candidate} before, there will be no robot with colour \texttt{reached} in $\mathbb{C}(T')$, where ($T' \ge T$).

\begin{figure}[!htb]\centering
   \begin{minipage}{0.45\textwidth}
    \includegraphics[height=6cm,width=6cm]{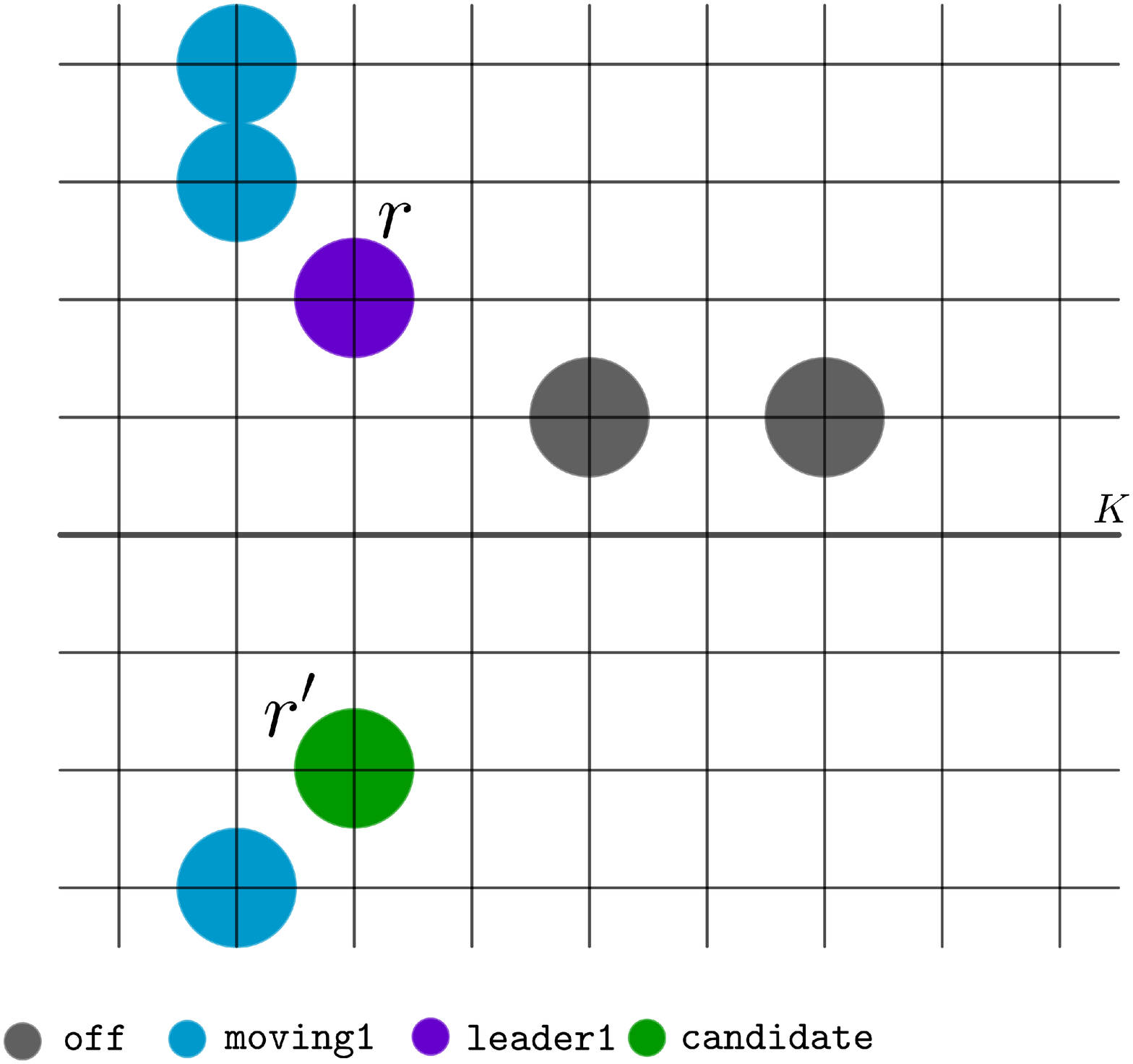}
     \caption{$r'$ with colour \texttt{candidate} sees $r$ with colour \texttt{leader1} on $\mathcal{L}_V(r')$.}\label{fig:Lemma11pic1}
   \end{minipage}
   \hfill
   \begin {minipage}{0.45\textwidth}
    \includegraphics[height=6cm,width=6cm]{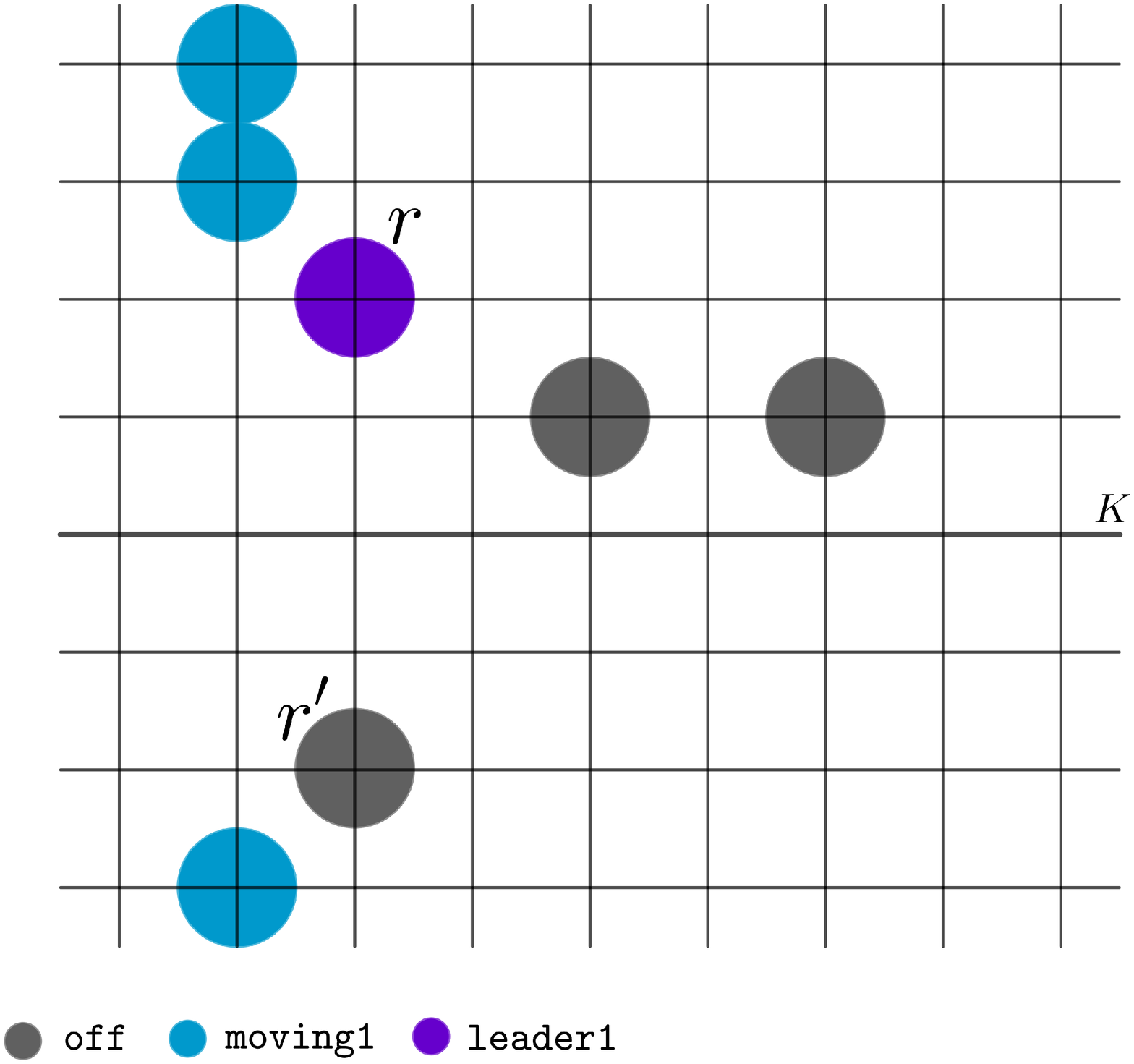}
     \caption{$r'$ changes its colour to \texttt{off}. Now, no other robot changes colour to \texttt{terminal1} and hence to \texttt{candidate} and hence to \texttt{reached}.}\label{Fig:Lemma11pic2}
   \end{minipage}
\end{figure}

Now, if $r$ has changed its colour to \texttt{leader1} from \texttt{off} at time $T$, then $r$ must have seen two robots say, $r_1$ and $r_2$ on $\mathcal{L}_I(r)$ with colour \texttt{call} at some time $T_1 < T$. Now upon activation after time $T$, both $r_1$ and $r_2$ see $r$ on $\mathcal{R}_I(r_1)$ and turn their colours to \texttt{off}. Now for $r_1$ and $r_2$ to ever have the colour \texttt{reached} again must have colour \texttt{candidate} first. But by Lemma \ref{flemma10}, after $r$ changes its colour to \texttt{leader1}, the configuration can never have a robot with colour \texttt{candidate}. Hence, $\mathbb{C}(T')$ ($T' \ge T$) has no robot with colour \texttt{reached} if $r$ changed its colour to \texttt{leader1} at time $T$.
\end{proof}

\begin{lemma}
At any time $T$, there can be at most one robot with colour \texttt{leader1} and at most one robot with colour \texttt{leader} in the configuration.
\end{lemma}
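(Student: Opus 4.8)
The plan is to reduce both claims to the single statement that \emph{at most one robot ever acquires the colour \texttt{leader1} during \textit{Phase 1}}. This suffices: inspecting Algorithm~\ref{leader_selection}, the colour \texttt{leader} is produced only inside the \texttt{leader1}-branch, and a robot with colour \texttt{leader1} never changes colour except to \texttt{leader}; so if only one robot is ever \texttt{leader1}, then only one robot ever becomes \texttt{leader}, and at any fixed time at most one robot carries \texttt{leader1} and at most one carries \texttt{leader}.

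To prove the reduced statement, I would linearly order all \textsc{Look}-\textsc{Compute}-\textsc{Move} cycles by the time of their \textsc{Look} phase and let $\ell$ be the robot performing the first cycle whose \textsc{Compute} changes a colour to \texttt{leader1}, with \textsc{Look} at time $T^\star$ (if no such cycle exists the statement is vacuous). By Lemma~\ref{flemma2}, $\ell$ is \texttt{candidate} or \texttt{off} at $T^\star$. The only rules of Algorithm~\ref{leader_selection} that output \texttt{leader1} are: (T1) a \texttt{candidate} robot that is a singleton in its left closed half with all of $\mathcal{R}_I$ coloured \texttt{off}; (T2) a \texttt{candidate} robot in the dominant half having a \texttt{candidate}/\texttt{call} partner on its own vertical line, with $\mathcal{R}_I$ all \texttt{off} and asymmetric about $K$; (T3) an \texttt{off} robot lying on $K$ that sees exactly two \texttt{call} robots on $\mathcal{L}_I$. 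Here I would record two structural facts, both already present in the proofs of Theorem~\ref{fth1} and Lemmas~\ref{flemma10} and~\ref{flemma11}: (i) the only robots that ever hold a colour from $\{\texttt{terminal1},\texttt{candidate},\texttt{call},\texttt{reached}\}$ are the one or two robots that were terminal on $\mathcal{L}_1$ in $\mathbb{C}(0)$; and (ii) whenever two robots simultaneously hold \texttt{candidate} (or \texttt{call}), they lie on a common vertical line and are terminal on it.

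The core argument is then that no robot transitions to \texttt{leader1} at any time $\ge T^\star$. Once $\ell$ holds \texttt{leader1}, Lemma~\ref{flemma11} shows no robot is ever again coloured \texttt{reached}, and the reasoning inside the proofs of Lemmas~\ref{flemma10} and~\ref{flemma11} shows no robot is ever again coloured \texttt{candidate} or \texttt{terminal1}: on every later activation, the at-most-two robots capable of becoming \texttt{terminal1}/\texttt{candidate} either see $\ell$ coloured \texttt{leader1} on their own vertical line or on their right-immediate line, or have a non-empty left open half, so the rules \texttt{candidate}$\rightarrow$\texttt{off} and \texttt{call}$\rightarrow$\texttt{off} fire and the rules \texttt{off}/\texttt{terminal1}$\rightarrow$\texttt{candidate} never fire again. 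Consequently, after $T^\star$ there is never a \texttt{candidate} robot (killing T1 and T2) and never two simultaneous \texttt{call} robots (killing T3, since a \texttt{call} robot must have had a \texttt{candidate} precursor). It remains to handle the possible asynchronous overlap between $T^\star$ and the instant $\ell$'s new colour becomes visible; here I would use the exact local view that triggered $\ell$: if T1, then $\ell$ is alone on the leftmost line with $\mathcal{R}_I(\ell)$ entirely \texttt{off}, and using fact~(i) together with the fact that a \texttt{candidate} robot never moves one checks that at $T^\star$ there is no second \texttt{candidate} robot and no pair of \texttt{call} robots, and none can be created before $\ell$ moves; if T2, the only other robot of that colour class is $\ell$'s partner $r'$ on $\mathcal{L}_V(\ell)$, which lies in the non-dominant half and so cannot fire T2, is not \texttt{off} and so cannot fire T3, and cannot fire T1 while $\ell$ shares its line; if T3, the only robots of that class are the two \texttt{call} robots $\ell$ sees, neither of which can fire T1 or T2, and being on $K$ is necessary for T3, so a second firing would force a robot on $K\cap\mathcal{R}_I$, contradicting the condition that let $\ell$ itself fire T3. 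Hence $\ell$ is the unique robot ever to acquire \texttt{leader1}, which gives the lemma.

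I expect the main obstacle to be precisely this last bookkeeping: ruling out, across the asynchronous window around $T^\star$, a second \textsc{Look} that simultaneously satisfies one of T1--T3, and making fully rigorous fact~(i) --- that \texttt{terminal1}, \texttt{candidate}, \texttt{call}, \texttt{reached} can only be held by the at-most-two initial terminal robots of $\mathcal{L}_1$ --- which is used implicitly throughout the preceding lemmas but never isolated as a standalone claim.
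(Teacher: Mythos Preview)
Your proposal is correct and follows essentially the same route as the paper. The paper's proof is organised as Case~I(a), Case~I(b), Case~II, which are exactly your transitions T1, T2, T3; for each it argues (using the same structural observation you call fact~(i)) that once one robot fires the rule the other potential firer either sees \texttt{leader1} on its own line or on $\mathcal{R}_I$, or loses the empty-left-half condition, and hence reverts to \texttt{off} and can never re-enter the \texttt{terminal1}/\texttt{candidate}/\texttt{call} chain. Your framing around ``the first \textsc{Look} producing \texttt{leader1}'' and your explicit appeal to Lemmas~\ref{flemma10} and~\ref{flemma11} make the structure a bit cleaner, but the substance --- the three-way case split and the blocking argument via fact~(i) --- is identical to the paper's.
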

\begin{proof}
 Note that by Lemma \ref{flemma2}, a robot can change its colour to \texttt{leader1} only from the colour \texttt{off} or \texttt{candidate}. Let us consider the following cases:
% From theorem \ref{fth1} for any initial configuration $\mathbb{C}(0)$, there exist a time $T$ such that $\mathbb{C}(T)$ has either one robot with colour \texttt{leader1}  who has changed its colour to \texttt{leader1} from \texttt{candidate} or two robots with colour \texttt{candidate} on the same vertical line. 

\textbf{Case-I:} Consider the case where a robot changes its colour to \texttt{leader1} from the colour \texttt{candidate}. Now from Theorem \ref{fth1}, for any initial configuration $\mathbb{C}(0)$, there exist a time $T$ such that $\mathbb{C}(T)$ has either one robot with colour \texttt{leader1}  who has changed its colour to \texttt{leader1} from \texttt{candidate} or two robots with colour \texttt{candidate} on the same vertical line.

\textbf{Case-I(a):} Let $r$ is a robot with colour \texttt{leader1} in $\mathbb{C}(T)$ who has changed its colour from \texttt{candidate}. We claim that in $\mathbb{C}(T')$ where $T' \ge T$, there is no other robot who changes its colour to \texttt{leader1}. For this, we first show that  no other robot with colour \texttt{terminal1} ever change their colour to \texttt{candidate}. This is because no other robot with colour \texttt{terminal1} will find its left open half empty (as the robot with colour \texttt{leader1} is there) (Figure \ref{fig:Lemma12case1a}). So $\mathbb{C}(T')$, where $T' \ge T$, will not have any robot with colour \texttt{candidate} who can change further to \texttt{leader1}. Also observe that after at $T' \ge T$, no other robot with colour \texttt{off} changes its colour to \texttt{leader1} as they will not see any robot with colour \texttt{call} on their left immediate occupied vertical line. This is also for the reason that $\mathbb{C}(T')$ where $T' \ge T$ will not have any other robot with colour \texttt{candidate} who can change its colour to \texttt{call} further. So, there will be exactly one robot with colour \texttt{leader1} which eventually changes its colour to \texttt{leader}.

\begin{figure}[ht]
    \centering
    \includegraphics[height=6cm,width=7cm]{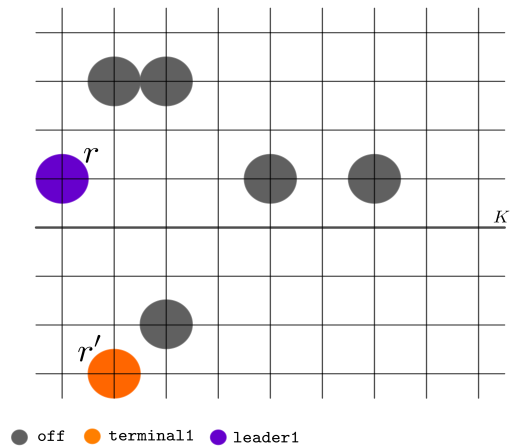}
    \caption{$r'$ with colour \texttt{terminal1} sees $r$ with colour \texttt{leader1} on $\mathcal{L}_I(r')$. So, $r'$ does not change its colour as it does not have its left open half empty.}
    \label{fig:Lemma12case1a}
\end{figure}

\textbf{Case-I(b):} Let us now assume the case where there are two robots $r_1$ and $r_2$ with colour \texttt{candidate} both on the same vertical line $\mathcal{L}_V(r_1)$(i.e. $\mathcal{L}_V(r_2)$). In this case, we will first show that both $r_1$ and $r_2$ can not change their colour to \texttt{leader1}. Then we will show if one of $r_1$ or $r_2$ changes its colours to \texttt{leader1}, then no other robot with colour \texttt{off} changes its colour to \texttt{leader1}.

\begin{figure}[!htb]\centering
   \begin{minipage}{0.45\textwidth}
    \includegraphics[height=6cm,width=6cm]{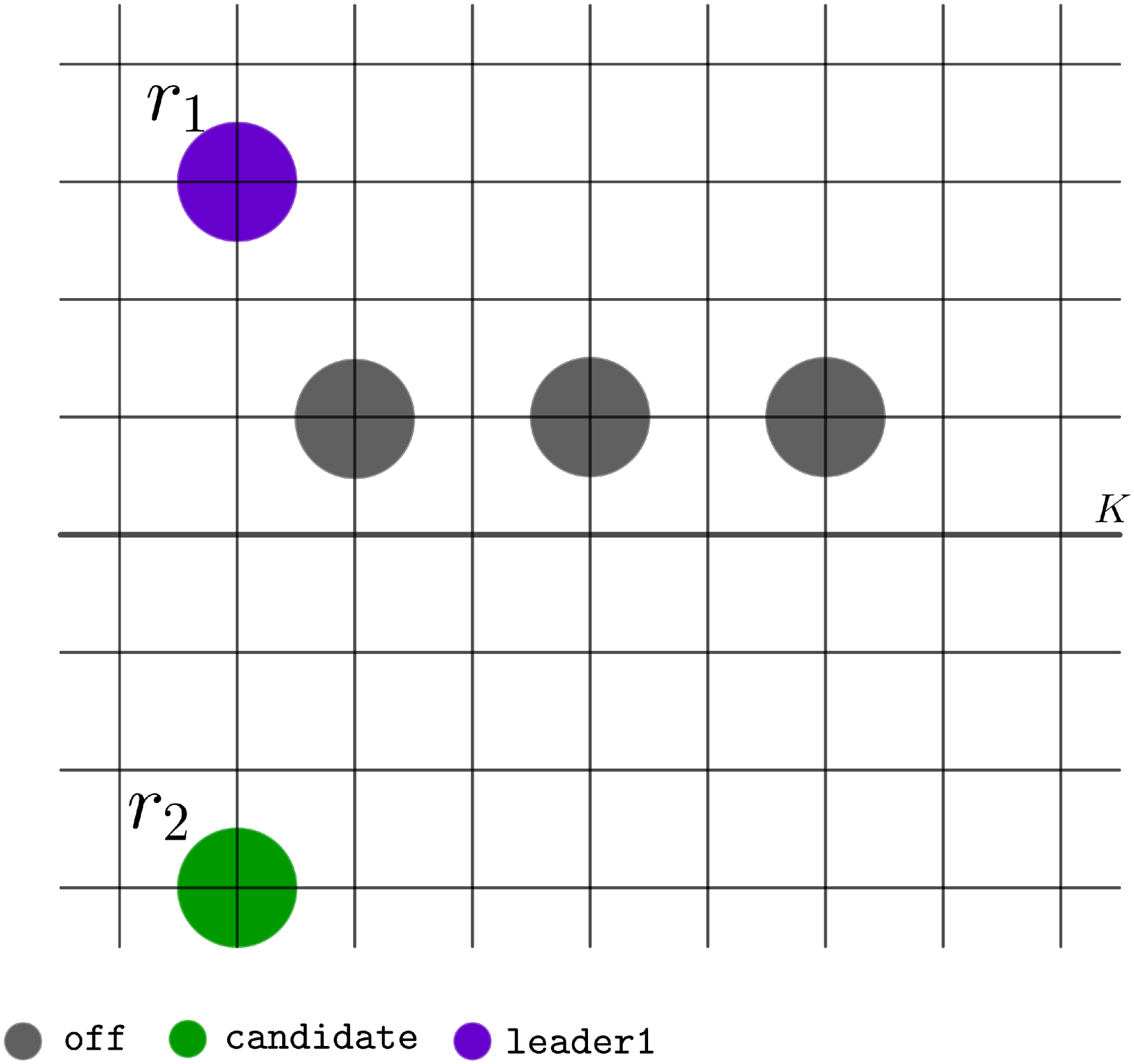}
     \caption{$r_2$ with colour \texttt{candidate} sees $r$ with colour \texttt{leader1} on $\mathcal{L}_V(r_2)$. No robot with colour \texttt{call} in the configuration.}\label{fig:Lemma12case1bpic1}
   \end{minipage}
   \hfill
   \begin {minipage}{0.45\textwidth}
    \includegraphics[height=6cm,width=6cm]{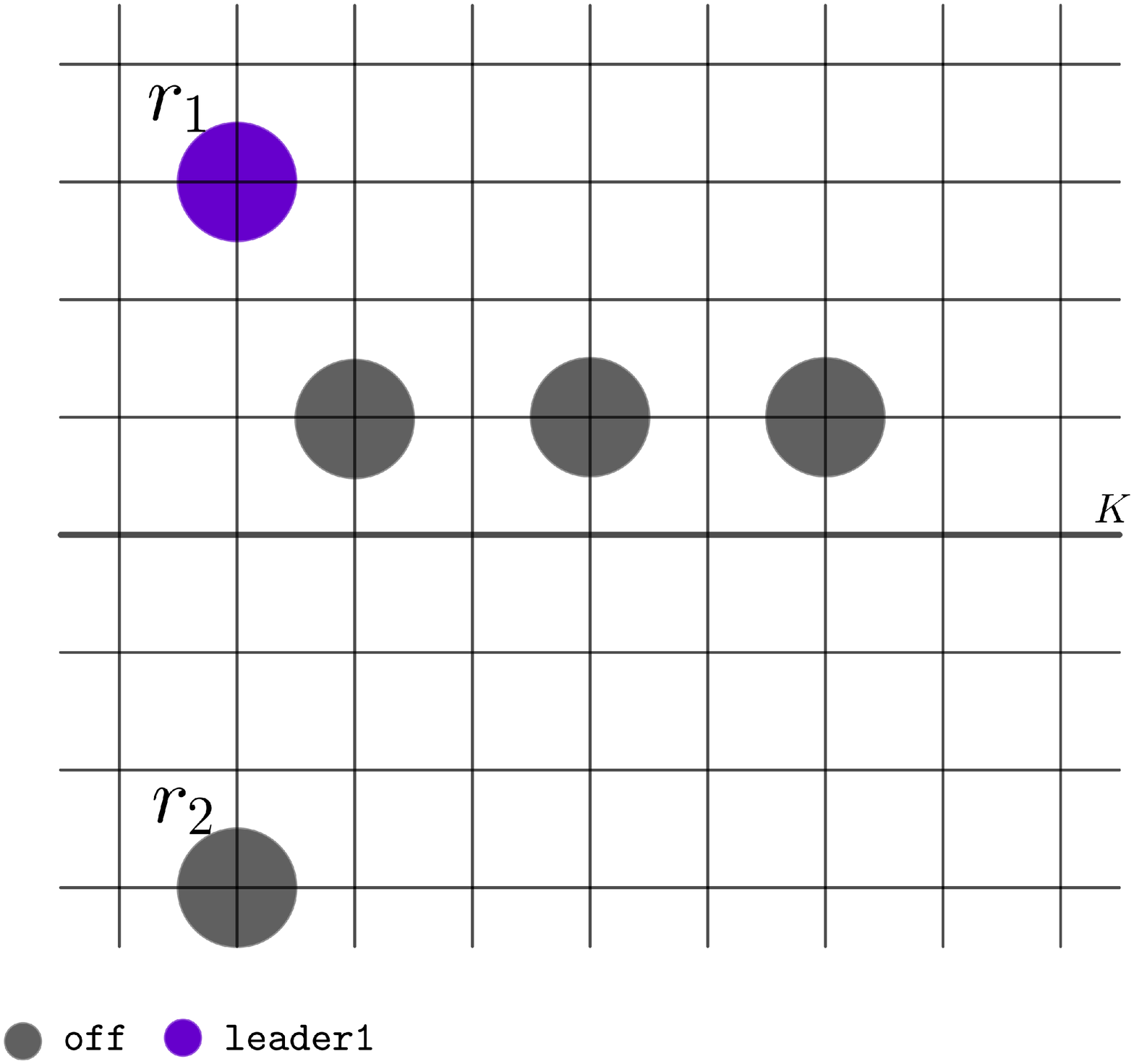}
     \caption{$r_2$ changes its colour to \texttt{off}. No robot with colour \texttt{call} in the configuration and no other robot with colour \texttt{off} changes colour to \texttt{terminal1} and hence to \texttt{candidate} and hence to \texttt{leader1} or \texttt{call} . }\label{Fig:Lemma12case1bpic2}
   \end{minipage}
\end{figure}

In this case, $r_1$ and $r_2$ check the symmetry of the line $\mathcal{R}_I(r_1)$(i.e $\mathcal{R}_I(r_2)$). If $\mathcal{R}_I(r)$ is asymmetric, then the robot ($r_1$ or, $r_2$) whichever is  on the dominant half changes its colour to \texttt{leader1}. Without loss of generality. let at some time $T_1$, $r_1$  changes its colour to \texttt{leader1} from \texttt{candidate}. Then $r_2$ must have colour \texttt{candidate} in $\mathbb{C}(T_1)$ (Figure \ref{fig:Lemma12case1bpic1}). Now when $r_2$ wakes again at a time say $T_2 > T_1$, it sees $r_1$ with colour \texttt{leader1} on $\mathcal{L}_V(r_2)$ and changes its colour to \texttt{off} (Figure \ref{Fig:Lemma12case1bpic2}). Note that between time $T_1$ and $T_2$ even if $r_1$ awakes again, it does not move as it sees $r_2$ with colour \texttt{candidate} on $\mathcal{L}_V(r_1)$. Now even if $r_2$ is terminal with colour \texttt{off} and has left open half empty, it would not change its colour to \texttt{terminal1} and then to \texttt{candidate} again as it sees $r_1$ with colour \texttt{leader1} on the same vertical line. So, between two robots with colour \texttt{candidate} only one can change its colour to \texttt{leader1}.

% Let $r_1$ be the robot which has changed its colour to \texttt{leader1} at a time say $T_1$ from colour \texttt{candidate}. Let $r_2$ be another robot with colour \texttt{candidate} on $\mathcal{L}_V(r_1)$. Now when $r_2$ awakes it sees $r_1$ with colour \texttt{leader1} on same vertical line and changes its colour to \texttt{off}. 

Now a robot say $r$ with colour \texttt{off} can never change its colour to \texttt{leader1} as it would not see exactly two robots with colour \texttt{call} on $\mathcal{L}_I(r)$. This is because a robot can only change its colour to \texttt{call} from colour \texttt{candidate} and no other robot with colour \texttt{off} will ever change its colour to \texttt{terminal1} and then to \texttt{candidate}  as in this case even if a robot with colour \texttt{off} has its left open half empty and is terminal on its vertical line, it will see $r_1$ with colour \texttt{leader1} on the same vertical line (Figure \ref{Fig:Lemma12case1bpic2}). So, it would not change its colour to \texttt{terminal1}. So, if a robot changes its colour to \texttt{leader1} from \texttt{candidate}, then no other robot will change its colour to \texttt{leader1}.

\textbf{Case-II:} Next we show that if a robot say $r$ has changed its colour to \texttt{leader1} from \texttt{off}, then no other robot with colour \texttt{candidate} or \texttt{off} ever changes its colour to \texttt{leader1}.

 Let $r$ changed its colour to \texttt{leader1} from colour \texttt{off} at some time $T_2$. This implies $r$ must have seen exactly two robots say $r_1$ and $r_2$ with colour \texttt{call} on $\mathcal{L}_I(r)$ and $r$ is on $K \cap \mathcal{L}_V(r)$. Also $\mathbb{C}(T_2)$ has no robot with colour \texttt{candidate} (Figure \ref{fig:Lemma12case2pic1}). We will now show that no robot will ever change its colour to \texttt{candidate} again. Now when $r_1$ and $r_2$ wake again (lets say at time $T_2' > T_2$), it sees $r$ with colour \texttt{leader1} on $\mathcal{R}_I(r_1)$ and so change their colours to \texttt{off} (Figure \ref{Fig:Lemma12case2pic2}). Observe that, in this scenario there is no robot with colour \texttt{reached} and \texttt{terminal1} in the configuration $\mathbb{C}(T_2')$ and no robot with colour \texttt{off} will ever change its colour to \texttt{terminal1} and then to \texttt{candidate} eventually. This is because even if a robot with colour \texttt{off} finds its left open half empty and it is terminal on its vertical line, it sees $r$ with colour \texttt{leader1} on its right immediate vertical line or on its own vertical line. So after time $T_2'$, the configuration has no robot with colour \texttt{candidate}, so no robot with colour \texttt{call} or \texttt{reached}. Thus if $r$ changed its colour to \texttt{leader1}, no other robot can change its colour to \texttt{leader1} from \texttt{candidate}.

\begin{figure}[!htb]\centering
   \begin{minipage}{0.45\textwidth}
    \includegraphics[height=6cm,width=6cm]{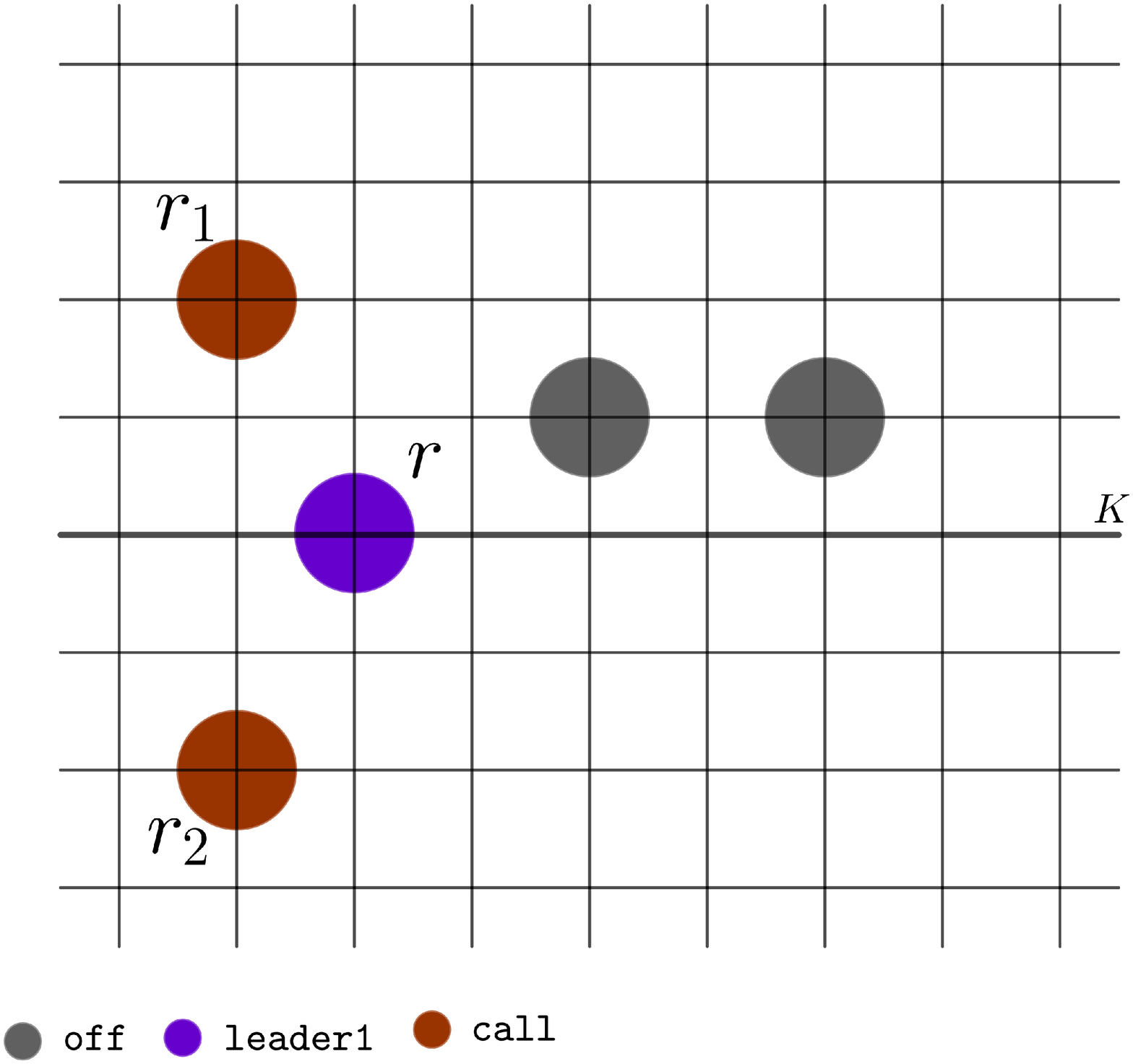}
     \caption{$r_1$ and $r_2$ with colour \texttt{call} see $r$ with colour \texttt{leader1} on $\mathcal{R}_I(r_1)$. No robot with colour \texttt{candidate} in the configuration.}\label{fig:Lemma12case2pic1}
   \end{minipage}
   \hfill
   \begin {minipage}{0.45\textwidth}
    \includegraphics[height=6cm,width=6cm]{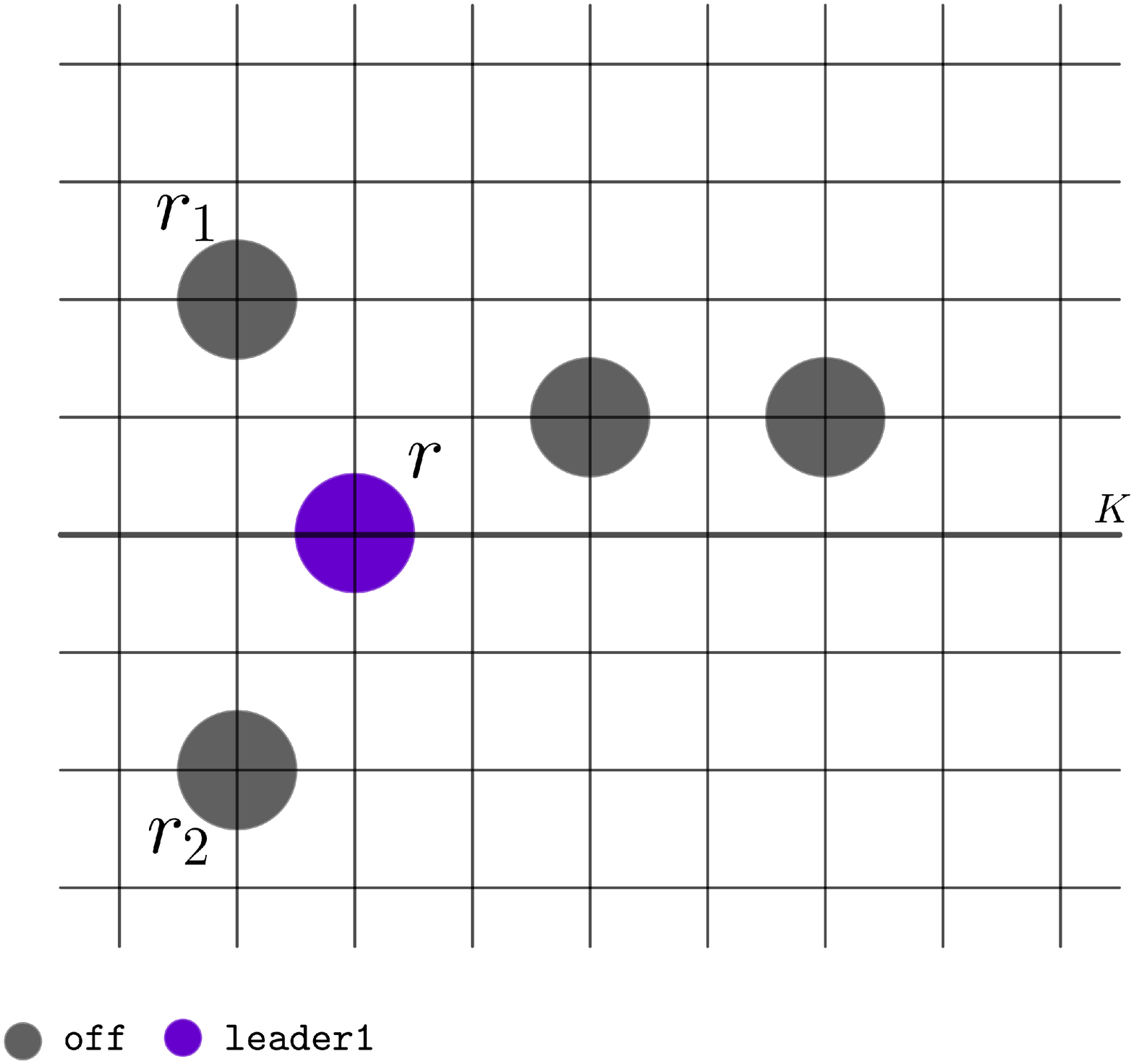}
     \caption{$r_1$ and $r_2$ change their colour to \texttt{off}. No robot with colour \texttt{reached} or \texttt{terminal1} or \texttt{candidate} or \texttt{call} in the configuration and no other robot with colour \texttt{off} changes colour to \texttt{terminal1}  and hence to \texttt{candidate} and hence to \texttt{leader1} or \texttt{call} or directly to \texttt{leader1}. }\label{Fig:Lemma12case2pic2}
   \end{minipage}
\end{figure}

Again this scenario, all robots with colour \texttt{off} are either on $H_R^C(r)$ or on $\mathcal{L}_I(r)$.  Note that all robots with colour \texttt{off} that are on $\mathcal{L}_I(r)$, never change their colours as they either see $r$ on their right immediate vertical line or on the same vertical line or they find their left open half is non-empty. Similarly, the robots with colour \texttt{off} on $H_R^C(r)$ never change their colours again as they find their left open half non-empty, never see exactly two robots with colour \texttt{call} on their left immediate vertical line and never sees a robot with colour \texttt{moving1} on their same vertical line. So, we have proved if a robot has changed its colour to \texttt{leader1} from colour \texttt{off}, no other robot ever changes its colour to \texttt{leader1} again.

So, from all the cases, it is evident that in \textit{Phase 1} any configuration can have at most one robot with colour \texttt{leader1} and since a robot changes its colour to \texttt{leader} from \texttt{leader1} only, there can be at most one robot with colour \texttt{leader} in any configuration during \textit{Phase 1}.
 \end{proof}

% \begin{lemma}
% At the end of the Phase1, the light of any robot must be \texttt{leader} or \texttt{moving1} or \texttt{off}.
% \end{lemma}
% \begin{proof}
% Since after a robot say $r$, changes its colour to \texttt{leader1} it can have robots with colour \texttt{candidate} or \texttt{call} on $\mathcal{L}_V(r)$ or on $\mathcal{L}_I(r)$. In both cases the robots with colour either \texttt{candidate} or \texttt{call} turns their colour to \texttt{off} upon the next activation. Now by lemma \ref{flemma11} since a configuration can never have a robot with colour \texttt{reached} at and after the time $r$ changes its colour to \texttt{leader1}, we can conclude there exists a time $T' >0$such that $\mathbb{C}(T')$ has exactly one robot with colour \texttt{leader1} and robots with colour \texttt{moving1} or \texttt{off}. Now after the completion of phase1, the robot with colour \texttt{leader1} changes its colour to \texttt{leader}. So, our result holds.
% \end{proof}
\begin{lemma}
If a robot $r$ changes its colour to \texttt{leader1} at a time $T$ and no robot with colour \texttt{terminal1} changes its colour to \texttt{candidate} at $T'$ where $T' \ge T$, then robots on $\mathcal{R}_I(r)$ in $\mathbb{C}(T)$ never move in \textit{Phase 1}.
\end{lemma}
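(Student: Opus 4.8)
The plan is to read off from Algorithm~\ref{leader_selection} which colours can trigger a non‑null move, and then to show that from time $T$ onwards no robot lying on $\mathcal{R}_I(r)$ in $\mathbb{C}(T)$ is ever in a state whose guard permits a move.

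First I would record that the only branches of \textsc{Phase1()} containing a move instruction are those for \texttt{terminal1} (which moves exactly in the activation in which it becomes \texttt{candidate}), \texttt{moving1}, and \texttt{leader1}. Hence a robot $q$ that sits on $\mathcal{R}_I(r)$ in $\mathbb{C}(T)$ can move during \textit{Phase 1} only if, at some $T'\ge T$, it performs the \texttt{terminal1}$\to$\texttt{candidate} step, or it carries \texttt{moving1} with a satisfied move guard, or it carries \texttt{leader1}. The \texttt{terminal1} route is excluded outright by the hypothesis (no robot at all performs that step after $T$; one may also note that any $q$ on $\mathcal{R}_I(r)$ has a robot, namely $r$, in its left open half and so can never become \texttt{terminal1}). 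The \texttt{leader1} route is excluded by the earlier lemma stating that at most one robot carries \texttt{leader1} at a time: since the \texttt{leader1}-branch only lets $r$ move or turn \texttt{leader}, $r$ remains the unique \texttt{leader1}-robot until \textit{Phase 1} ends, and $r$ lies on $\mathcal{L}_V(r)$, strictly to the left of $\mathcal{R}_I(r)$.

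So the work is to kill the \texttt{moving1} route, and this is the step I expect to be the main obstacle. For a robot $q$ on $\mathcal{R}_I(r)$ one has $\mathcal{L}_I(q)=\mathcal{L}_V(r)$ (because $\mathcal{R}_I(r)$ is the first occupied line to the right of $\mathcal{L}_V(r)$) and $\mathcal{L}_V(q)=\mathcal{R}_I(r)$. Reading the \texttt{moving1}-branch, such a $q$ moves only if it sees a robot with colour \texttt{call} on $\mathcal{L}_I(q)=\mathcal{L}_V(r)$, or a robot with colour \texttt{reached} on $\mathcal{L}_V(q)=\mathcal{R}_I(r)$; and $q$ \emph{becomes} \texttt{moving1} only by seeing \emph{two} robots with colour \texttt{call} on $\mathcal{L}_I(q)$ or a \texttt{moving1} robot on $\mathcal{L}_V(q)$ (the latter reducing, via a first‑occurrence argument on $\mathcal{R}_I(r)$, to the former). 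Lemma~\ref{flemma11} removes every robot with colour \texttt{reached} from time $T$ on, so everything reduces to the claim that \emph{$\mathcal{L}_V(r)$ never carries a robot with colour \texttt{call} after $T$} (which in particular forbids the \texttt{call}-pair needed to create \texttt{moving1} on $\mathcal{R}_I(r)$), together with the absence of a \texttt{moving1} robot on $\mathcal{R}_I(r)$ already in $\mathbb{C}(T)$. For the former: a robot reaches \texttt{call} only from \texttt{candidate}, and \texttt{candidate} only from \texttt{terminal1} (blocked after $T$ by the hypothesis) or from \texttt{reached} (blocked after $T$ by Lemma~\ref{flemma11}); hence no robot becomes \texttt{candidate}, so none becomes \texttt{call}, after $T$. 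By Theorem~\ref{fth1} (and Lemma~\ref{flemma10} when $r$ turned \texttt{leader1} from \texttt{off}), the only robot that can already carry a special colour on $\mathcal{L}_V(r)$ right after $T$ is a single companion $r'$ of $r$, present when $r$ turned \texttt{leader1} from \texttt{candidate}; arguing as in the ``at most one \texttt{leader1}'' lemma, $r'$ then carries \texttt{candidate}, reads $r$'s \texttt{leader1} on its own vertical line, and reverts to \texttt{off}, so $\mathcal{L}_V(r)$ never exhibits a \texttt{call}. For the latter, one traces how a \texttt{moving1} robot on $\mathcal{R}_I(r)$ could have been created --- necessarily during an episode in which both $r$ and $r'$ were \texttt{call} on $\mathcal{L}_V(r)$ --- and uses that $r$ was \texttt{candidate} or \texttt{off}, not \texttt{call}, just before $T$, so any such robot had already vacated $\mathcal{R}_I(r)$ by the time $r$ last became \texttt{candidate}, exactly as guaranteed by the guard ``all robots in $\mathcal{R}_I(r)$ are \texttt{off}'' on $r$'s transition and by the discussion of \textsc{Phase1()} preceding this lemma.

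The genuinely delicate points, all inside the last step, are the asynchrony/opacity corner cases: excluding a stray \texttt{moving1} robot born on $\mathcal{R}_I(r)$ during an earlier ``both‑\texttt{call}'' episode that has not yet left that line, and the transient window in which $r'$ still carries a special colour while $r$ is already \texttt{leader1}. I would close these using Lemmas~\ref{flemma1}, \ref{flemma10} and \ref{flemma11} together with the revert‑to‑\texttt{off} clauses in the \texttt{moving1}, \texttt{candidate} and \texttt{call} branches of Algorithm~\ref{leader_selection}; no idea beyond a careful time‑ordering argument should be required.
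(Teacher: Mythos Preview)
Your approach is essentially the paper's: both arguments amount to checking that no robot on $\mathcal{R}_I(r)$ ever satisfies a move guard after $T$. The paper organises the case split by the transition $r$ took (\texttt{off}$\to$\texttt{leader1} versus \texttt{candidate}$\to$\texttt{leader1}), whereas you organise it by the possible colour of the robot $q$ on $\mathcal{R}_I(r)$; both routes lead to the same two obligations, namely that $\mathcal{L}_V(r)$ carries no \texttt{call} and that no \texttt{reached} exists (the latter via Lemma~\ref{flemma11}).

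One simplification the paper exploits that you do not: in the \texttt{candidate}$\to$\texttt{leader1} case the paper does \emph{not} try to establish ``the absence of a \texttt{moving1} robot on $\mathcal{R}_I(r)$ already in $\mathbb{C}(T)$''. It simply allows that robots on $\mathcal{R}_I(r)$ may carry \texttt{moving1}, \texttt{off}, or \texttt{terminal1}, and then observes that a \texttt{moving1} robot's two move guards both fail (no \texttt{call} on its left immediate line, no \texttt{reached} anywhere). Since you have already reduced the \texttt{moving1} case to exactly these guard failures, your extra step---tracing creation of \texttt{moving1} back to an earlier both-\texttt{call} episode and arguing such robots have vacated---is unnecessary for the conclusion and is precisely the ``genuinely delicate'' asynchrony argument you flag as the main obstacle. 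Dropping it leaves your proof strictly shorter and matching the paper's. Conversely, your version is more explicit than the paper's in justifying \emph{why} $\mathcal{L}_V(r)$ carries no \texttt{call} after $T$ (you derive it from the hypothesis plus Lemma~\ref{flemma11} blocking the \texttt{reached}$\to$\texttt{candidate}$\to$\texttt{call} route), which the paper merely asserts.
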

\begin{proof}
Let $r$ changes its colour to \texttt{leader1} from colour \texttt{off} at a time $T$. Observe that in this case, $r$ must have seen two robots say $r_1$ and $r_2$ with colour \texttt{call} at some time $T_1 < T$ and it is on $K \cap \mathcal{R}_I(r_1)$. Note that in this case, all robots on $\mathcal{R}_I(r)$ have colour \texttt{off}, so they do not move. Now upon activation after time $T$, both $r_1$ and $r_2$ change their colours to \texttt{off} and never change their colours again as they see $r$ with colour \texttt{leader1} on $\mathcal{L}_V(r_1)$ or on $\mathcal{R}_I(r_1)$ or other robots on their left open half throughout completion of \textit{Phase 1}. So, robots on $\mathcal{R}_I(r)$(at $T$) never see any robot with colour \texttt{call} and also, they do not see their left open half empty after time $T$ . Thus robots on $\mathcal{R}_I(r)$ at time $T$ never change their colours and never move until completion of \textit{Phase 1}.

Now if $r$ changes its colour from \texttt{candidate} to \texttt{leader1} at time $T$ and no robot with colour \texttt{terminal1} changes its colour to \texttt{candidate} at some time $T'$ where $T' \ge T$, then all robots which are on $\mathcal{R}_I(r)$ in $\mathbb{C}(T)$ can have colour either \texttt{moving1} or \texttt{off} or, \texttt{terminal1}. Note that the robots with colour \texttt{moving1} will not move as it does not see any robot with colour \texttt{call} on its left immediate vertical line or, robot with colour \texttt{reached} on its same vertical line and on its left immediate vertical line. Similarly robots with colour \texttt{off} on $\mathcal{R}_I(r)$ does not change its colour if it wakes after $T$ as it finds out that it has its left open half non-empty and there is no robot with colour \texttt{call} on its left immediate vertical line. Note that a robot with colour \texttt{terminal1} may change its colour to \texttt{off} but after that this robot with colour \texttt{off} will not move by similar argument above. So, no robot on $\mathcal{R}_I(r)$ ever moves after time $T$ until \textit{Phase 1} is complete.
\end{proof}

\begin{lemma}
\label{last_ph1}
If a robot $r$ changes its colour to \texttt{leader1} from \texttt{candidate} at some time $T$ and another robot $r'$ changes its colour to \texttt{candidate} at a time $T'$ where $T' \ge T$, then no collision occurs even if both $r$ and $r'$ move.
\end{lemma}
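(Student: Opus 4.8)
The plan is to split the two robots' admissible moves into a small number of types and to verify, using the invariant already proved that a robot with colour \texttt{leader1} always has an empty grid point immediately to its left, together with the exact guards of Algorithm~\ref{leader_selection}, that no pair of these moves can ever put $r$ and $r'$ on the same grid point.

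First I would catalogue the moves in play. By the \texttt{leader1} branch of Algorithm~\ref{leader_selection}, the only moves of $r$ are leftward steps along $\mathcal{L}_H(r)$ and, once $r$ is the unique robot on the leftmost occupied line with $l_{next}(r)$ empty, upward steps along $\mathcal{L}_V(r)$. For $r'$, inspection of the \texttt{terminal1}, \texttt{candidate} and \texttt{reached} branches shows that a robot with colour \texttt{candidate} never executes a move, that the transition \texttt{reached}$\to$\texttt{candidate} carries no move, and that the transition \texttt{terminal1}$\to$\texttt{candidate} carries exactly one leftward step; moreover, once $r$ carries \texttt{leader1}, the hypothesis of the lemma together with the uniqueness of \texttt{leader1} and the fact that no \texttt{call} robot survives $r$'s promotion prevent $r'$ from ever regaining \texttt{moving1}, so $r'$ performs at most that single leftward step. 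Hence the only pairs of moves that could conceivably collide are (a) a leftward step of $r$ against the leftward step of $r'$, and (b) an upward step of $r$ against the leftward step of $r'$.

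Next I would dispatch (b). When $r$ takes an upward step it is the only robot on the leftmost occupied line, so $r'$ lies on a strictly different vertical line; and a leftward step of $r'$ cannot move $r'$ onto $\mathcal{L}_V(r)$, because at the instant $r'$ switches \texttt{terminal1}$\to$\texttt{candidate} it sees $H_L^O(r')$ empty, which is impossible if $r'$ occupied the vertical line immediately to the right of the leftmost line (that line would then contain $r$). Since an upward step changes only $r$'s $y$-coordinate, it cannot land on $r'$'s cell. For (a): a leftward step of $r$ moves $r$ into the grid point immediately to its left, which is empty by the cited invariant, hence is not $r'$'s cell; a leftward step of $r'$ moves $r'$ into the grid point immediately to its left, which at that instant contains no robot because $H_L^O(r')$ is empty, hence is not $r$'s cell. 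Finally, since movements are instantaneous and a leftward move of either robot targets the cell one unit to the left of that robot's present (and distinct) position, the two target cells are always different; combining this with the previous two observations, no interleaving of these moves — even overlapping ones — can place $r$ and $r'$ on a common grid point.

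The step I expect to require the most care is the interleaving in which $r$ and $r'$ share the vertical line $\mathcal{L}_V(r)=\mathcal{L}_V(r')$ while $r$ already carries \texttt{leader1}. Here I would use the guard ``no robot with colour \texttt{candidate} on $\mathcal{L}_V(r)$'': as long as $r'$ still carries \texttt{candidate}, $r$ does not take a leftward step, so $r$ remains in place; and on its next activation $r'$ turns \texttt{off} via the \texttt{candidate} branch, since it sees $r$ with colour \texttt{leader1} on $\mathcal{L}_V(r')$, and this transition involves no move. Thereafter $r'$ has a \texttt{leader1} robot on its own line, so it can never become \texttt{terminal1} again, hence never \texttt{candidate} again, hence never moves again. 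Consequently the ``both $r$ and $r'$ move'' situation is genuinely realised only through the single leftward step analysed above, and assembling this with the case analysis of the previous paragraph yields the lemma.
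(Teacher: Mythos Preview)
Your argument has a genuine gap rooted in the asynchronous model. In case~(b) you write that ``a leftward step of $r'$ cannot move $r'$ onto $\mathcal{L}_V(r)$, because at the instant $r'$ switches \texttt{terminal1}$\to$\texttt{candidate} it sees $H_L^O(r')$ empty, which is impossible if $r'$ occupied the vertical line immediately to the right of the leftmost line.'' But the emptiness of $H_L^O(r')$ is observed at $r'$'s \textsc{Look}, which must occur \emph{before} $r$ ever moved left (otherwise $r$ would already lie in $H_L^O(r')$). Between that \textsc{Look} and $r'$'s actual move, $r$ can perform several steps; in particular, $r'$ may very well be sitting on $l_{next}(r)$ --- and its pending leftward step then lands exactly on $\mathcal{L}_V(r)$. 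The same time-conflation appears in case~(a), where ``at that instant contains no robot because $H_L^O(r')$ is empty'' refers to a possibly stale snapshot. Your final paragraph also asserts that when $r'$ carries \texttt{candidate} on $\mathcal{L}_V(r)$ ``$r$ remains in place''; but the guard you cite only blocks the \emph{leftward} branch --- the \textsc{Else} branch of the \texttt{leader1} code can still trigger a vertical move.

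What your proof is missing is the structural invariant the paper's argument hinges on: $r$ and $r'$ were the two terminal robots on the original $\mathcal{L}_1$, hence they lie on \emph{different horizontal lines}, and $r'$'s single leftward step never changes that. With this in hand the paper's case split is short: if $r'$ is on $l_{next}(r)$ then the guard on $l_{next}(r)$ forces $r$ to move left rather than vertically, and two leftward moves on distinct horizontal lines cannot collide; if $r'$ is further right, then after both moves they remain on distinct vertical lines. You never invoke the horizontal-line separation, and without it the asynchrony cases you try to dispatch by ``empty at the instant of switching'' do not close.
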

\begin{proof}
Let $r$ changes its colour to \texttt{leader1} from \texttt{candidate} at a time $T$ and $r'$ changes its colour to \texttt{candidate} from colour \texttt{terminal1} at a time $T' \ge T$. Note that at time $T$, $r$ must be singleton on $\mathcal{L}_V(r)$. This implies there is a time $T_1 < T$ when $r$ had colour \texttt{off} and was terminal on $\mathcal{L}_1$ in $\mathbb{C}(T_1)$. Now when $r$ wakes at a time say $T_2$, where $ T_1 \le T_2 < T$, it changes its colour to \texttt{terminal1}  and there exist a time $T_3 > T_2 \ge T_1$ and $T_3 < T$ such that $r$ changes its colour to \texttt{candidate} and moves left and become singleton on $\mathcal{L}_V(r)$. We claim that $r'$ changes its colour to \texttt{candidate} only if $r'$ was on $\mathcal{L}_V(r)$ in $\mathbb{C}(T_2)$ and it was also terminal on $\mathcal{L}_V(r)$ (i.e $\mathcal{L}_V(r')$) as otherwise $r'$ can not see its left open half empty. Now, let $r'$ wakes before time $T_2$ and decide to changes its colour to \texttt{candidate} but it changes its colour at a time $T' \ge T$ and has a pending move. Then observe that now $r$ and $r'$ are on two different vertical lines and $r'$ has a pending move to the left. So, if there are other vertical lines in between $\mathcal{L}_V(r)$ and $\mathcal{L}_V(r')$, then even if both of them move, no collision occurs as $r$ can only move either vertically or on left and they are on different horizontal line. So, let us consider $r'$ is on $l_{next}(r)$. Then $r$ can not move vertically as $l_{next}(r)$ is non-empty. Hence, both $r$ and $r'$ move left and no collision occurs as $r$ and $r'$ are on different horizontal lines.
\end{proof}

Now, from the above lemmas and the discussions, we can conclude the following theorem.
 
 \begin{theorem}
For any initial configuration $\mathbb{C}(0)$, there exists a $T>0$ such that $\mathbb{C}(T)$ has exactly one robot with colour \texttt{leader} and it's left closed half and one of upper and bottom closed half have no other robots.
\end{theorem}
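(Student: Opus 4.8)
The plan is to assemble the final theorem from the sequence of lemmas and the earlier theorem, tracking the state of the system through the whole execution of \textit{Phase 1}. I would start from Theorem \ref{fth1}, which guarantees a time after which the configuration contains either exactly one robot with colour \texttt{leader1} in $\mathcal{L}_1$ or exactly two robots with colour \texttt{candidate}. In the second situation, I would invoke the sequence of lemmas on the \texttt{candidate}/\texttt{call}/\texttt{moving1}/\texttt{reached} cycle: when $\mathcal{R}_I(r)$ is symmetric with respect to $K$ and there is no robot on $K$, the two \texttt{call} robots eventually both become \texttt{reached}, then \texttt{candidate} again on a strictly inner line, so the process repeats with a new, strictly smaller right-immediate line. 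Since the number of occupied vertical lines strictly decreases with each such round (each round consumes one $\mathcal{R}_I$ line by moving its robots to $\mathcal{L}_1$), after finitely many rounds either a robot sits on the relevant $K$, or $\mathcal{R}_I(r)$ is asymmetric, or one of the two robots is a singleton in its left closed half. In every terminating subcase some robot changes colour to \texttt{leader1}: either a robot on $K\cap\mathcal{R}_I(r)$ from \texttt{off}, or the robot on the dominant half from \texttt{candidate}, or a singleton \texttt{candidate}. Hence $\exists\,T_0$ with exactly one \texttt{leader1} robot and, by the lemma on at-most-one-\texttt{leader1}, it stays unique.

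Next I would fix the unique \texttt{leader1} robot $r_l$ appearing at time $T_0$ and argue it reaches colour \texttt{leader} with the claimed geometric property. By the lemma that a \texttt{leader1} robot always has its left grid point empty, $r_l$ can always execute its leftward move whenever it is non-singleton on the leftmost line or $l_{next}(r_l)$ is nonempty; by Lemmas \ref{flemma10} and \ref{flemma11}, after $r_l$ turns \texttt{leader1} the configuration has no robot with colour \texttt{candidate}, \texttt{terminal1}, \texttt{call}, or \texttt{reached} ever again, and by the lemma that robots on $\mathcal{R}_I(r_l)$ never move afterwards, the obstacles to $r_l$'s motion are static. So $r_l$ moves left finitely many times, eventually becoming the unique robot on $\mathcal{L}_1$ with $l_{next}(r_l)$ empty; then it moves along its positive $y$-axis, and since there are finitely many robots, after finitely many such steps one of $H_U^C(r_l)$, $H_B^C(r_l)$ contains no other robot — at which moment, by the last branch of the algorithm, it sets its colour to \texttt{leader}. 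The leftward-emptiness is preserved, so $H_L^O(r_l)$ stays empty; combined with the stopping condition this gives the claimed property. I would also need to handle the asynchronous subtlety, already flagged in the text, that another \texttt{candidate} robot may slip onto $l_{next}(r_l)$ while $r_l$ is on $\mathcal{L}_1$; Lemma \ref{last_ph1} ensures no collision in that case, and the algorithm's priority (move left again before resuming vertical motion) guarantees $r_l$ eventually clears $l_{next}$ and is never blocked permanently, since there are only finitely many robots that can ever reach $l_{next}(r_l)$.

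Finally, uniqueness: by the at-most-one-\texttt{leader1} lemma and the fact that \texttt{leader} is reached only from \texttt{leader1}, there is at most one \texttt{leader} robot; existence follows from the termination argument above. Assembling these, I would conclude: take $T_0$ from the leader-election argument, let $T>T_0$ be the activation at which $r_l$ sets colour \texttt{leader}; then $\mathbb{C}(T)$ has exactly one \texttt{leader} robot, with $H_L^O(r_l)$ empty (hence its left closed half has no other robot) and one of $H_U^C(r_l)$, $H_B^C(r_l)$ free of other robots, which is exactly the statement.

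I expect the main obstacle to be the termination-with-progress argument for the repeated \texttt{candidate}$\to$\texttt{call}$\to$\texttt{reached}$\to$\texttt{candidate} loop under the adversarial asynchronous scheduler: one must exhibit a genuine monotone quantity (number of occupied vertical lines strictly to the right of $\mathcal{L}_1$, or equivalently the index of $\mathcal{R}_I(r)$) that strictly decreases each round, and check that the adversary cannot stall the loop forever by delaying activations or by repeatedly nudging stray \texttt{moving1}/\texttt{off} robots onto $\mathcal{R}_I(r)$ — this is where the earlier lemmas on those stray robots turning \texttt{off} upon seeing \texttt{reached}/\texttt{candidate} must be used carefully. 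The collision-freeness of $r_l$'s final walk is comparatively routine given Lemma \ref{last_ph1} and the no-movement lemma for $\mathcal{R}_I(r_l)$.
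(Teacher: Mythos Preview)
Your proposal is correct and follows the same route as the paper, which simply states that the theorem follows ``from the above lemmas and the discussions'' without spelling out the assembly. Your write-up is in fact more detailed than the paper's own treatment: you explicitly name the monotone quantity driving termination of the \texttt{candidate}$\to$\texttt{call}$\to$\texttt{reached}$\to$\texttt{candidate} loop and you correctly flag that this termination-under-asynchrony is the only nontrivial point not fully discharged by the individual lemmas.

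Two small refinements. First, your invocation of Lemma~\ref{flemma10} is slightly off: that lemma only covers the case where the colour change to \texttt{leader1} is from \texttt{off}; for the change from \texttt{candidate} you should instead cite the argument inside the proof of the at-most-one-\texttt{leader1} lemma (or the Phase~1 narrative itself), which shows the surviving \texttt{candidate} turns \texttt{off} upon seeing \texttt{leader1} on its own vertical line. Second, your monotone quantity ``number of occupied vertical lines strictly to the right of $\mathcal{L}_1$'' does not obviously strictly decrease, since the candidates' line $\mathcal{L}_V(r)$ ceases to be $\mathcal{L}_1$ once the \texttt{moving1} robots slide left of it; a cleaner invariant is the number of robots strictly to the right of $\mathcal{L}_V(r)$ (the candidates never move again after their first leftward step), which drops by at least one per completed round. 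Neither point is a genuine gap in the overall strategy.
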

\subsection{Phase 2}
After completion of \textit{Phase 1}, the configuration has exactly one robot $r_0$ with colour \texttt{leader} such that $r_0$ is singleton on $H_L^C(r_0)$ and also singleton on $\mathcal{L}_H(r_0)$ and there is no other robot on either below or above $\mathcal{L}_H(r_0)$. Note that in this configuration, all the robots who can see $r_0$ can agree on a global coordinate. Let $r_1$ be a robot which can see $r_0$. Then it assumes the position of $r_0$ as the coordinate $(0,-1)$. Now since all robots agree on the direction and orientation of the $x$-axis (i.e the horizontal lines), $r_1$ can think of the horizontal line let's say $\mathcal{H}$ which is just above $r_0$ as the $x$-axis where right half of the line of $\mathcal{L}_V(r_0)$ correspond to the positive direction of $x-$axis. Now $r_1$ agrees on the the vertical line $\mathcal{L}_V(r_0)$ as $y$-axis. Note that $r_1$ can also know the orientation of $y$-axis by assuming its own $y-coordinate$  to be  greater or equals to the $y-coordinate$ of $r_0$ (i.e if $H_U^C(r_0)$ has robots, then  the direction of $\mathcal{L}_V(r_0)$ from $r_0$ towards $H_U^C(r_0) \cap \mathcal{L}_V(r_0)$ is the direction of positive $y-axis$ and similar for the case if $H_B^C(r_0)$ has robots) (Figure \ref{fig:GlobalCoordinateSet}). In this phase, robots first form a line and then from that line move to their corresponding target positions which are embedded in the grid assuming the global coordinate which has been agreed upon by robots after seeing $r_0$. Thus the pattern formation is done. We provided a detailed description of the algorithm for \textit{Phase 2}.

\begin{algorithm}[H]
\footnotesize
%     % \setstretch{.1}
%     \SetKwInOut{Input}{Input} 
%     \SetKwInOut{Output}{Output}
%     \SetKwProg{Fn}{Function}{}{}
%     \SetKwProg{Pr}{Procedure}{}{}

%   \Input{The configuration of robots visible to me.}
    
%     \Pr{\textsc{PatternFormationFromLeaderConfiguration()}}
{

    $r \leftarrow$ myself
    
    $r_0 \leftarrow$ the robot with light \texttt{leader}
    
    \uIf{$r.light =$ \texttt{moving1} or \texttt{candidate} or \texttt{terminal1}}{
    
      \If{$($$r_0 \in {H}_B^O(r)$$)$  and $($$r$ is leftmost on $\mathcal{L}_H(r)$$)$ and $($there is no robot in ${H}_{B}^{O}(r) \cap {H}_{U}^{O}(r_0))$}
                {\If{there are $i$ robots on $\mathcal{L}_H(r_0)$ other than $r_0$ at $(1,-1), \ldots, (i,-1)$}{$r.light \leftarrow$ \texttt{off}\\move to an empty grid point towards $(i+1,-1)$ by \textsc{GotoLine}}}
      
                                       }
    
    \uElseIf{$r.light =$ \texttt{off}}{
    
      \uIf{$($$r_0 \in {H}_B^O(r)$$)$  and $($$r$ is leftmost on $\mathcal{L}_H(r)$$)$ and $($there is no robot in ${H}_{B}^{O}(r) \cap {H}_{U}^{O}(r_0))$ \label{code ss2: 0}}{
    
	  \uIf{there are no robots on $\mathcal{L}_H(r_0)$ other than $r_0$ \label{code ss2: 1}}{
	  
	    \uIf{there is a robot with light \texttt{done}\label{code ss2: 2}}{
	    
	    \uIf{$r$ is at $t_{n-2}$}{$r.light \leftarrow$ \texttt{done}}
	    \Else{move to an empty grid point towards $t_{n-2}$ by \textsc{GotoTarget}}
	    
	    }
	    \Else{move to $(1,-1)$ by \textsc{GotoLine}}
	  
	  }
	  \uElseIf{there are $i$ robots on $\mathcal{L}_H(r_0)$ other than $r_0$ at $(1,-1), \ldots, (i,-1)$\label{code ss2: 3}}{move to a empty grid point towards $(i+1,-1)$ by \textsc{GotoLine}}
	  \ElseIf{there are $i$ robots on $\mathcal{L}_H(r_0)$ other than $r_0$ at $(n-i,-1), \ldots, (n-1,-1)$\label{code ss2: 6}}{
	  
	    \uIf{$r$ is at $t_{n-i-2}$}{$r.light \leftarrow$ \texttt{done}}
	    \Else{move to an empty grid point towards $t_{n-i-2}$ by \textsc{GotoTarget}}
	  
	  }
    
      }
      
      \ElseIf{$r_0 \in \mathcal{L}_H(r)$ and ${H}_U^O(r)$ has no robots with light \texttt{off}\label{code ss2: 4}}{
      
	\If{$r$ is at $(i,-1)$}{move to $(i,0)$}
      
      }

    }

    \ElseIf{$r.light =$ \texttt{leader}}{
    
      \If{all robots, which are visible to $r$, have light \texttt{done} \label{code ss2: 5}}{
    
    \uIf{$r$ is at $t_{n-1}$}{$r.light \leftarrow$ \texttt{done}}
	    \Else{move to an empty grid point towards $t_{n-1}$ by \textsc{LeaderMove}} }
    
      }
    }

\caption{\textsc{ApfFatGrid}: Phase 2}
    \label{algo_ph3} 
\end{algorithm}

\begin{figure}[ht]
    \centering
    \includegraphics[height=6cm,width=7cm]{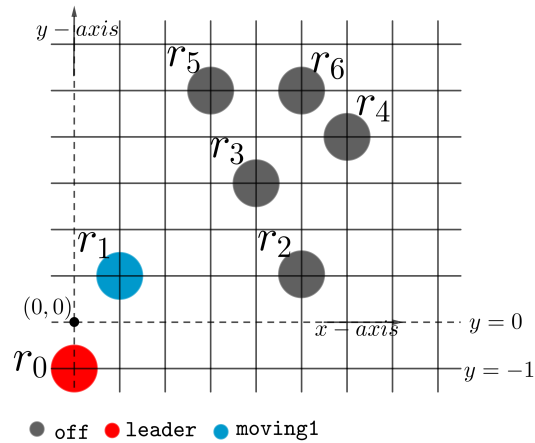}
    \caption{$r_0$ with colour \texttt{leader} is at $(0,-1)$. Any robot that sees $r_0$ can agree on a global coordinate system as shown in this diagram.}
    \label{fig:GlobalCoordinateSet}
\end{figure}

\subsubsection{Line Formation} In the beginning of \textit{Phase 2}, if a robot $r$ sees $r_0$ with colour \texttt{leader1}, it agrees on a global coordinate as mentioned above and find that $r_0$ is on $H_B^O(r)$. Now, if $r$ sees there is no robot in $H_B^O(r) \cap H_U^O(r_0)$ (i.e there is no horizontal line containing any robot between $\mathcal{L}_H(r_0)$ and $\mathcal{L}_H(r)$) and $r$ is leftmost on $\mathcal{L}_H(r)$ and also if it finds there are $i$ other robots on $ (1,-1),(2,-1), \dots ,(i,-1)$ and no robot with colour \texttt{done}, then it changes its colour to \texttt{off} (if $r$ has colour \texttt{off}, it would not change the colour) and moves to $(i+1,-1)$ by a method \textsc{Gotoline()}. The method \textsc{Gotoline()} is described as follows. In this method, if $r$ is above the horizontal line where $y=0$, then it moves vertically downwards until it reaches the line where $y=0$. Note that no collision occurs during this vertical movement as there are no robots between $\mathcal{L}_H(r)$ and $\mathcal{L}_H(r_0)$ and no other robot will move until it reaches at $(i+1,-1)$. This is because other robots even if gets activated before $r$ reaches $(i+1,-1)$, sees $r$ between its horizontal line and $\mathcal{L}_H(r_0)$. Now, when $r$ is at the horizontal line where $y = 0$, it moves horizontally to the position $(i+1,0)$. Note that the horizontal line $y=0$ may not be initially empty at the beginning of \textit{Phase 2}. Also, initially all robots on line $y=0$ have $x-coordinate > 0$. Now let at some time $T$, the robot $r$ which is leftmost on the  line  $y=0$ sees robots on $(1,-1),(2,-1), \dots (i,-1)$. Now if there is no other robot except $r$ on line $y=0$ and $r$ is not on $(i+1,0)$, then $r$ can simply move horizontally without collision to reach $(i+1,0)$. Note that during this movement, no other robot from above moves as they see $r$ in $H_B^O(r) \cap H_U^O(r_0)$. Now if there are other robots on the line $y=0$ other than $r$, then this implies $r$ has $x-coordinate > i$. This is because the $i$ robots say $r_1,r_2,\dots ,r_i$ must have reached their current position  at $(1,-1),(2,-1),\dots (i,-1)$ from the line $y=0$ and all robots on line $y=0$ have $x-coordinate >0$. Now since $r$ has $x-coordinate > i$, $r$ will move to its left until it reaches $(i+1,0)$. Note that during this movement, no collision occurs as $r$ is leftmost robot on $\mathcal{L}_H(r)$ and no other robot on $\mathcal{L}_H(r)$ moves as they are not leftmost on $\mathcal{L}_H(r)$. Next after $r$ reaches the position $(i+1,0)$, it moves vertically downward to $(i+1,-1)$ (Figure \ref{fig:LineFormation}). So after a finite time, all robots will reach on the line $y=-1$ in such a way that there is no empty grid point between two robots on the line $y=-1$ (Figure \ref{Fig:LineFormed}). From the above discussion, we have the following Lemmas \ref{flemmap2col1} and \ref{flemmap2col11}.

\begin{figure}[!htb]\centering
   \begin{minipage}{0.45\textwidth}
    \includegraphics[height=6cm,width=6cm]{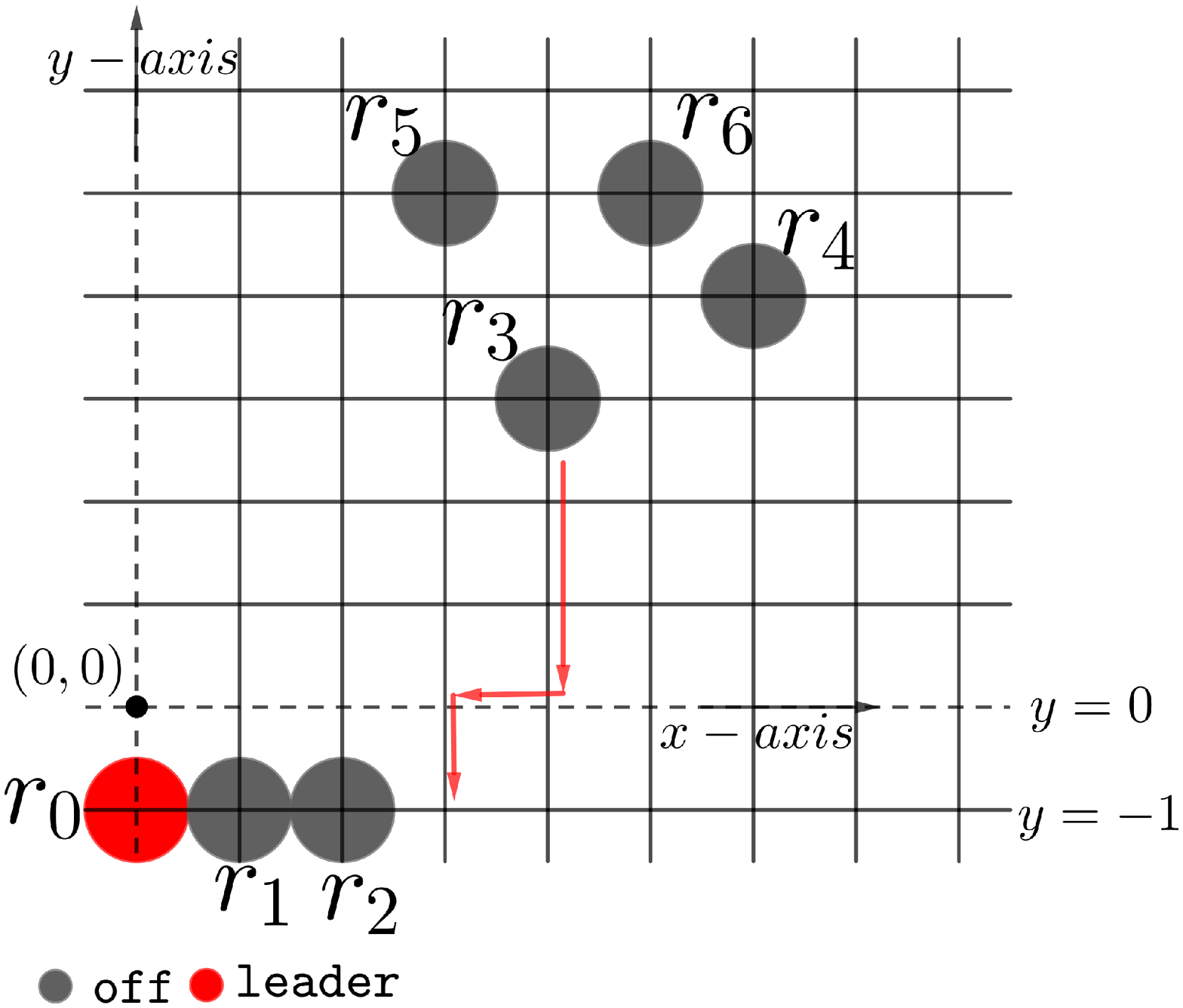}
     \caption{Movement of $r_3$ to $\mathcal{L}_H(r_0)$ by executing the method \textsc{GoToLine()}.}\label{fig:LineFormation}
   \end{minipage}
   \hfill
   \begin {minipage}{0.45\textwidth}
    \includegraphics[height=6cm,width=6cm]{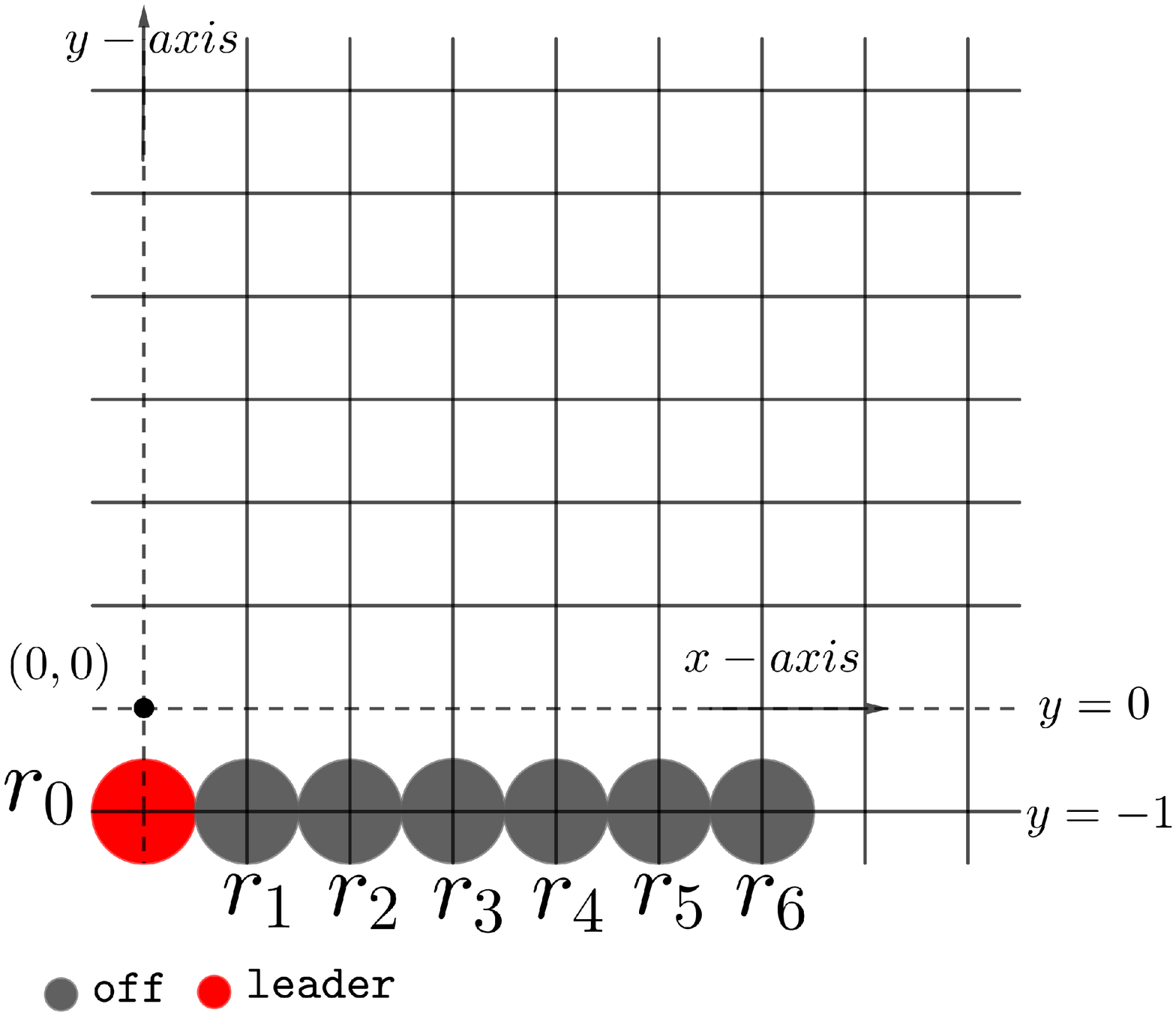}
     \caption{All robots formed a line on $\mathcal{L}_V(r_0)$. There is no empty grid point between any two robots on the line.}\label{Fig:LineFormed}
   \end{minipage}
\end{figure}

\begin{lemma}
\label{flemmap2col1}
During movement of a robot $r$, that is executing the method \textsc{Gotoline()} in \textit{Phase 2}, $r$ does not collide with any other robot in the configuration.
\end{lemma}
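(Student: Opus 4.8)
The plan is to decompose the trajectory prescribed by \textsc{Gotoline()} into its three consecutive legs --- a downward vertical segment from $\mathcal{L}_H(r)$ to the line $y=0$, a horizontal segment along $y=0$ to the column $x=i+1$, and a single downward step from $(i+1,0)$ to $(i+1,-1)$ --- and to argue that each grid point $r$ steps onto is unoccupied at that instant, and that no other robot moves while $r$ is executing the method. Two facts carry the argument: (i) the guard under which $r$ invokes \textsc{Gotoline()} forces $H_B^O(r)\cap H_U^O(r_0)$ (equivalently the strip $\{-1<y<y_r\}$, which contains the whole line $y=0$ when $y_r\ge 1$) to be empty at $r$'s \textsc{Look}; and (ii) an invariant of the line-formation stage stating that at all times the occupied points of $\mathcal{L}_H(r_0)$ form a prefix $\{(0,-1),(1,-1),\dots,(k,-1)\}$ while every robot currently on $y=0$ has abscissa strictly greater than $k$.

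First I would establish the freezing claim: while $r$ has a pending \textsc{Gotoline()} move, no other robot performs any move. Consulting Algorithm~\ref{algo_ph3}, a robot that could move during line formation is one with light \texttt{off} executing \textsc{Gotoline()} or \textsc{GotoTarget()}, or one stepping up from $(i,-1)$ to $(i,0)$; the latter two are ruled out because no robot has light \texttt{done} yet and because the robots on $\mathcal{L}_H(r_0)$ still see robots with light \texttt{off} above them. A robot about to invoke \textsc{Gotoline()} must both see its own strip $H_B^O(\cdot)\cap H_U^O(r_0)$ empty \emph{and} be leftmost on its horizontal line: the former fails for every robot lying strictly above the current height of $r$ (that height is $>-1$, so $r$ lies in such a robot's strip), and the latter fails for every robot sharing $r$'s horizontal line, since $r$ is itself leftmost there. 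Robots strictly below $r$'s current height but above $y=-1$ simply do not exist, by (i) together with what has just been shown. Hence throughout $r$'s execution the configuration $r$ faces is static.

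Next I would run the three legs against this static configuration. Leg~1 (present only when $y_r\ge 1$): by (i) the cells $(x_r,y_r-1),\dots,(x_r,0)$ are empty, so $r$ descends collision-free. Leg~2: if $r$ was already on $y=0$ when it started, then (i) is vacuous but (ii) forces $r$, being leftmost on $y=0$, to have abscissa $>i$, so $r$ slides leftward to $(i+1,0)$ over cells that are empty because nothing lies to the left of the leftmost robot on $y=0$; if instead $r$ reached $y=0$ via leg~1, then (i) made $y=0$ empty at its \textsc{Look} and the freezing claim keeps it so, whence $r$ is the only robot on $y=0$ and its slide to $(i+1,0)$ is trivially safe. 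Leg~3: $(i+1,-1)$ is the first empty cell of $\mathcal{L}_H(r_0)$ by (ii), so the final step lands on an empty point. I would then close the loop by checking that (ii) is preserved: when $r$ lands at $(i+1,-1)$ the occupied prefix grows by exactly one, and since $r$ was the leftmost robot on $y=0$ with abscissa $>i$, every remaining robot on $y=0$ has abscissa $\ge i+2$.

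The main obstacle I anticipate is not any single computation but the bookkeeping required to make the freezing claim airtight under the asynchronous scheduler: one must rule out, for every light class and every guard of Algorithm~\ref{algo_ph3}, that some robot slips a move into $r$'s path between two of $r$'s activations, and in particular handle the edge case in which $r$ has only a recorded pending move (its light perhaps having just switched to \texttt{off}) rather than being mid-route. Once invariant (ii) and the freezing claim are in hand, the per-leg emptiness checks are immediate.
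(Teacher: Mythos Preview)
Your proposal is correct and follows essentially the same approach as the paper: the paper also argues leg by leg (vertical descent to $y=0$, horizontal slide to $(i{+}1,0)$, final drop to $(i{+}1,-1)$), uses the guard $H_B^O(r)\cap H_U^O(r_0)=\emptyset$ for the vertical part, invokes the same freezing observation (other robots see $r$ in their strip, or are not leftmost on their line), and handles the horizontal leg by the same case split on whether $r$ started on $y=0$ or descended to it, deriving $x_r>i$ in the former case from the fact that the $i$ robots already on $y=-1$ came through $y=0$. Your write-up is more explicit about the prefix invariant and the asynchronous bookkeeping, but the underlying ideas coincide.
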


\begin{lemma}
\label{flemmap2col11}
There exists a $T>0$ such that $\mathbb{C}(T)$ has one robot with light \texttt{leader} and all other robots with light \texttt{off} in same horizontal line.
\end{lemma}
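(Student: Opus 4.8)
The plan is to bootstrap off Lemma~\ref{flemmap2col1} together with a well-founded progress measure: I will argue that, as long as some non-leader robot is not yet on $\mathcal{L}_H(r_0)$, exactly one robot is enabled to execute \textsc{Gotoline()}, it completes its descent in finitely many of its own activations without collision, a contiguous-prefix invariant on the already-placed robots is preserved, and no robot ever acquires light \texttt{done}; the measure then forces the line to be completed in finite time.

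\emph{Setup.} By the theorem that closes \emph{Phase 1}, at the start of \emph{Phase 2} there is a unique robot $r_0$ with light \texttt{leader}; fixing the global coordinate system described above so that $r_0$ is at $(0,-1)$, every other robot lies in the quadrant $\{x\ge 1,\,y\ge 0\}$. Besides \texttt{leader}, the only lights present are \texttt{off} and the \emph{Phase 1} leftovers \texttt{moving1}, \texttt{candidate}, \texttt{terminal1} (the \emph{Phase 1} lemmas, e.g.\ Lemma~\ref{flemma11}, leave no robot with light \texttt{call} or \texttt{reached} once a \texttt{leader1}/\texttt{leader} exists), and the first branch of Algorithm~\ref{algo_ph3} resets each such leftover light to \texttt{off} exactly at the moment its carrier starts its descent; hence it suffices to track positions.

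\emph{Invariant and enabledness.} I would show by induction on activations that (a) the non-leader robots already lying on $\mathcal{L}_H(r_0)$ occupy a contiguous block $(1,-1),\dots,(i,-1)$, and (b) at most one robot is in transit, i.e.\ strictly below its original horizontal line but not yet at $y=-1$. A robot begins \textsc{Gotoline()} only when it is leftmost on $\mathcal{L}_H(r)$ and $H_B^O(r)\cap H_U^O(r_0)$ is empty, so only the leftmost robot on the lowest occupied horizontal line strictly above $\mathcal{L}_H(r_0)$ is enabled; and the instant such a robot $r$ steps onto the line $y=0$ it lies in $H_B^O(r')\cap H_U^O(r_0)$ for every off-line robot $r'$, freezing all of them until $r$ reaches $(i+1,-1)$. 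In particular the count $i$ that $r$ read in its \textsc{Look} phase cannot become stale; using the description of \textsc{Gotoline()} preceding the lemma for the horizontal leg and Lemma~\ref{flemmap2col1} for collision-freeness, $r$ does land on $(i+1,-1)$ and the block grows by one. A robot already at some $(j,-1)$ has $r_0\in\mathcal{L}_H(r)$ rather than $H_B^O(r)$ and still sees a robot with light \texttt{off} in $H_U^O(r)$, so the only branch of Algorithm~\ref{algo_ph3} applicable to it is disabled and it stays put. Finally, no robot acquires light \texttt{done} in this phase: the \texttt{done}-producing sub-branches demand either a robot that already carries \texttt{done} or a block of robots at the \emph{right} end $(n-i,-1),\dots,(n-1,-1)$ of $\mathcal{L}_H(r_0)$, which by (a) can happen only once $i=n-1$, i.e.\ after line formation is already complete, while the \texttt{leader} robot does not move while any visible robot is non-\texttt{done}.

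\emph{Termination.} Order configurations by the lexicographic measure consisting of the number of non-leader robots not on $\mathcal{L}_H(r_0)$, then the index of the lowest occupied line above $\mathcal{L}_H(r_0)$, then the $x$-coordinate of its leftmost robot. While the first component is positive, the previous paragraph shows a \textsc{Gotoline()} execution is enabled, completes after finitely many activations of a single robot, and strictly decreases the measure; hence in finite time the first component hits $0$, i.e.\ all $n-1$ non-leader robots occupy $(1,-1),\dots,(n-1,-1)$, all with light \texttt{off}. Taking $T$ to be the instant at which the last descent completes, before any robot begins the subsequent upward move, $\mathbb{C}(T)$ consists of the single robot $r_0$ with light \texttt{leader} together with all other robots on the horizontal line $\mathcal{L}_H(r_0)$ with light \texttt{off}, as claimed. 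I expect the real work to be the asynchrony bookkeeping of the middle paragraph---formally excluding concurrent descents and stale reads of $i$ so that the contiguous-prefix invariant genuinely survives partial moves---whereas the well-foundedness argument layered on Lemma~\ref{flemmap2col1} is routine.
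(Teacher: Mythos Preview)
Your proposal is correct and follows the same line of argument as the paper: the paper does not give a separate proof of this lemma at all but simply declares that it, together with Lemma~\ref{flemmap2col1}, follows ``from the above discussion'' (the informal description of \textsc{GoToLine()} in the Line Formation subsection). Your invariant-plus-progress-measure write-up is just a formalisation of that same discussion---one robot descends at a time, the contiguous prefix $(1,-1),\dots,(i,-1)$ is maintained, nobody turns \texttt{done}, and iteration terminates---so there is nothing substantively different to compare.
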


\begin{lemma}
The leftmost robot $r$ of a horizontal line can always see all the robots on the horizontal line $y = -1$ if there is no robot in $H_B^O(r) \cap H_U^O(r_0)$, where $r_0$ is the robot with colour \texttt{leader} on the line $y=-1$.
\end{lemma}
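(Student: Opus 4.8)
\section*{Proof proposal}

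The plan is to adapt the tangent--line / robot--free convex--region technique used in the proof of Lemma~\ref{flemma1}. Set up the global coordinate system so that $r_0$ lies at $(0,-1)$ and $\mathcal{L}_H(r_0)$ is the line $y=-1$; write the centre of $r$ as $(x_r,y_r)$, and note that since $r$ lies strictly above $\mathcal{L}_H(r_0)$ we have $y_r\ge 0$, while every robot on $\mathcal{L}_H(r_0)$ has centre $(j,-1)$ with $j\ge 0$ an integer. I will lean on three structural facts, two of which are immediate and the third recorded in the Line--Formation discussion preceding the lemma: (i) by hypothesis no robot lies on any grid line strictly between $\mathcal{L}_H(r)$ and $\mathcal{L}_H(r_0)$; (ii) since $r$ is leftmost on $\mathcal{L}_H(r)$, every other robot on $\mathcal{L}_H(r)$ has centre $(x',y_r)$ with $x'\ge x_r+1$, hence a disc contained in $\{x>x_r\}\cap\{y\ge y_r-rad\}$, while every robot strictly above $\mathcal{L}_H(r)$ has a disc contained in $\{y\ge y_r+1-rad\}$; and (iii) if $\mathcal{L}_H(r)$ carries a robot other than $r$, then the $x$--coordinate of $r$ strictly exceeds every occupied column of $\mathcal{L}_H(r_0)$, so the column directly below $r$ is empty and every robot of $\mathcal{L}_H(r_0)$ lies strictly to the lower--left of $r$. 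For each robot $r''$ on $\mathcal{L}_H(r_0)$ I will exhibit one explicit unobstructed segment from $\partial r$ to $\partial r''$.

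First suppose $y_r-rad>-1+rad$ (equivalently $rad<\tfrac{1}{2}$ or $y_r\ge 1$). Let $\ell_1$ be the line $y=-1+rad$, the common upper tangent of all robots on $\mathcal{L}_H(r_0)$, and $\ell_2$ the line $y=y_r-rad$, the common lower tangent of $r$ and of the robots on $\mathcal{L}_H(r)$. By (i) and (ii) the open strip $S=\{-1+rad<y<y_r-rad\}$ contains no point of any robot. Given $r''$ with centre $(j,-1)$, take $\sigma$ to be the segment joining the bottom point $(x_r,y_r-rad)$ of $r$ to the top point $(j,-1+rad)$ of $r''$. If $j=x_r$, then $\sigma$ is vertical and, by (i), the column between $r$ and $r''$ is empty, so $\sigma$ is clear. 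If $j\ne x_r$, then $\sigma$ is strictly monotone in $y$, descending from $y_r-rad$ to $-1+rad$, so its relative interior lies in $S$ and is robot--free; its two endpoints lie on $\partial r$ and $\partial r''$. It remains to see that no other robot meets $\sigma$: a robot strictly above $\mathcal{L}_H(r)$ has its disc in $\{y\ge y_r+1-rad\}$, above all of $\sigma$; a robot on $\mathcal{L}_H(r)$ other than $r$ meets the abscissa range of $\sigma$ only for $x\ge x_r+1-rad>x_r$, where $\sigma$ has already dropped strictly below $y_r-rad$, hence below that disc; a robot on $\mathcal{L}_H(r_0)$ other than $r''$ has its disc in $\{y\le -1+rad\}$, attaining $y=-1+rad$ only on its own vertical line, whereas $\sigma$ is strictly above $-1+rad$ off its endpoint on $r''$. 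Hence $\sigma$ avoids every robot but $r$ and $r''$, so $r$ sees $r''$.

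The delicate case is $rad=\tfrac{1}{2}$ together with $y_r=0$, where $\ell_1=\ell_2$ is the single line $y=-\tfrac{1}{2}$, the strip $S$ is empty, and the discs of robots on consecutive grid lines are mutually tangent along $y=-\tfrac{1}{2}$, so the horizontal sightline above is blocked by those tangencies. Here $\mathcal{L}_H(r)=\{y=0\}$, and I invoke (iii): either $r$ is alone on $\{y=0\}$, in which case no robot can be stacked above a robot of $\mathcal{L}_H(r_0)$, or $r$ lies strictly to the right of every occupied column of $\mathcal{L}_H(r_0)$, with empty column beneath it. In either case, for $r''$ with centre $(j,-1)$ and $j\ne x_r$ I replace the bottom point of $r$ by the point $(x_r+\varepsilon\,\mathrm{sgn}(j-x_r),\,-\sqrt{\tfrac{1}{4}-\varepsilon^{2}})$ of $\partial r$, slightly off the bottom on the side facing $r''$, and join it to the top point $(j,-\tfrac{1}{2})$ of $r''$; for small $\varepsilon>0$ this segment is strictly monotone in $y$, lies strictly above $y=-\tfrac{1}{2}$ except at its endpoint on $r''$, hence passes strictly above the top of every intermediate robot on $\mathcal{L}_H(r_0)$, and by the same estimates as before it is missed by every robot on or above $\{y=0\}$. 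The case $j=x_r$ is again the vertical segment handled by (i). This establishes visibility of $r''$ in the degenerate case as well.

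I expect the main obstacle to be precisely this $rad=\tfrac{1}{2}$ degeneracy: with tangent discs there is no open corridor between $\mathcal{L}_H(r)$ and $\mathcal{L}_H(r_0)$, so one cannot argue with a single robot--free region as in Lemma~\ref{flemma1}; instead one must exhibit a carefully perturbed sightline and rely on the positional information about $r$ coming from the Line--Formation phase, namely that $r$ sits to the right of the already--placed block on $\mathcal{L}_H(r_0)$ with nothing directly beneath it, in order to guarantee that such a perturbation exists. Everything else is a routine separation estimate between the chosen segment and the finitely many discs that could a priori obstruct it.
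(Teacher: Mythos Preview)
Your proposal is correct and matches the paper's approach: both split into the case where $\mathcal{L}_H(r)$ lies strictly above $y=0$ (equivalently, there is vertical room between the two horizontal lines) versus the adjacent case $y_r=0$, and both rely on the Line-Formation observation that a non-singleton $r$ on $y=0$ sits strictly to the right of every robot already placed on $y=-1$. The paper's own proof is a three-line sketch that declares each case ``obvious'' once this positional fact is in hand; you supply the explicit tangent-segment constructions in the style of Lemma~\ref{flemma1}, including the $rad=\tfrac{1}{2}$ perturbation, so your argument is considerably more detailed but not different in substance.
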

\begin{proof}
Let $r$ be the leftmost robot on a horizontal line $y = s$ where $s>0$. Then it is obvious that $r$ will see all robots on $y = -1$. 
Now if $r$ is on $y=0$ and is singleton, then also it is obvious that $r$ can see all robots on $y = -1$. Now if $r$ is not singleton on $y=0$. Then from the discussions above, it is clear that $x-coordinate$ of $r > x-coordinate$ of rightmost robot on $y=-1$. So, $r$ can see all robots on $y =-1$.
\end{proof}

\subsubsection{Target Pattern Formation}
After the line is formed, all the robots except $r_0$ with colour \texttt{leader} on $(0,-1)$ have colour \texttt{off} and they are all placed on the horizontal line $y = -1$ in such a way that $r_0$ is the leftmost robot on $y = -1$ and any two robots do not have any unoccupied grid points between them. Now, the target pattern is embedded on the grid (based on the global coordinate system described above) such that $x-coordinate, y-coordinate$  of any target position say, $t_i$ is greater than 0. Also, the target position of robot with colour \texttt{leader}, $t_{n-1}$ is on line $y=1$. And for any two other robots, let there be two target positions $t_i$ and $t_j$. If both $t_i$ and $t_j$ are on the same horizontal line and $i < j$, then $t_i$ is at the right of $t_j$. Also, if $t_i$ and $t_j$ are on two different horizontal lines and $i < j$, then $t_i$ is on the above horizontal line (Figure \ref{fig:targetEmbedding}). 
\begin{figure}[ht]
    \centering
    \includegraphics[height=6cm,width=7cm]{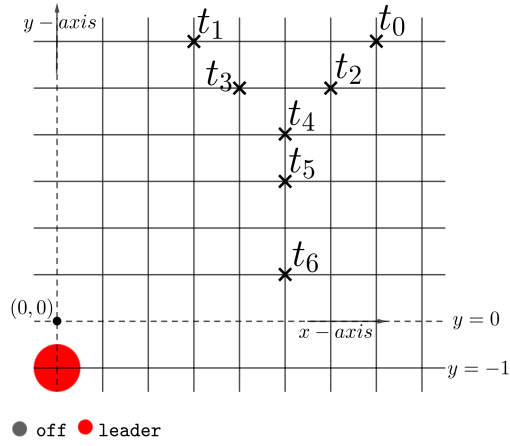}
    \caption{The target pattern embedded in the coordinate system.}
    \label{fig:targetEmbedding}
\end{figure}

Note that a robot say $r$ who can see the robot $r_0$ with colour \texttt{leader} can agree on the global coordinate system as described earlier. So, $r$ can agree on its target position which is embedded target positions on the grid. Now a robot $r$, who can see $r_0$ and has coordinate $(s, -1)$ on $\mathcal{L}_H(r)$ and sees $H_U^O(r)$ has no robot with colour \texttt{off}, moves to position $(s,0)$ (Figure \ref{fig:targetform1}). Now, $r$ can see all robots on $\mathcal{L}_H(r_0)$ as $r$ is singleton on $y=0$. Now, if $r$ finds out that there are $i$ robots on $\mathcal{L}_H(r_0)$ other than $r_0$ at the positions $(n-i , -1), (n-i-1, -1), \dots (n-1, -1)$, then $r$ moves to $t_{n-i-2}$ by a method \textsc{GotoTarget()} (Figure \ref{Fig:targetform2}) and changes its colour to \texttt{done}. The method \textsc{GotoTarget()} is described as follows. When executing the method \textsc{GoToTarget()}, a robot $r$ first moves vertically to the horizontal line that is just below the horizontal line of its target location say $t_r$. Note that during this movement, no other robot on $\mathcal{L}_H(r_0)$ moves even if they see $r_0$ as they see $r$ with colour \texttt{off} on their upper open half. Now, let coordinate of $t_r$ be $(x_{t_r}, y_{t_r})$. Note that after the vertical movement, $r$ is now singleton on the line $y = y_{t_r}-1$. Now, if $r$ is not already on the position $(x_{t_r}, y_{t_r}-1)$, it moves horizontally to the position $(x_{t_r}, y_{t_r}-1)$. Note that since $r$ is singleton on $y = y_{t_r}-1$ and no robot from $\mathcal{L}_H(r_0)$ has started its vertical movement (this is because they still see $r$ with colour \texttt{off} on their upper open half), no collision occurs during this movement by $r$. Now when $r$ reaches the position $(x_{t_r}, y_{t_r}-1)$, it moves above once and reaches the designated target position $t_r$ of $r$. Observe that the last robot say $r_{n-1}$ on $(n-1, -1)$ (i.e on $\mathcal{L}_H(r_0)$) sees no other robot except $r_0$ on $\mathcal{L}_H(r_0)$ after it starts moving vertically above. Now, the problem is if it can distinguish whether $r_{n-1}$ is meant to execute \textsc{GoToLine()} or \textsc{GoToTarget()} when it is above the line $\mathcal{L}_H(r_0)$ . Note that $r_{n-1}$ will see at least one robot having colour \texttt{done} above it or on the same line while it is meant to execute \textsc{GoToTarget()} as $n > 2 \implies n-1 > 1$ which implies $\mathcal{L}_H(r_0)$ had at least one other robot $r_{n-2}$ between $r_0$ and $r_{n-1}$ which already executed \textsc{GoToTarget()} and changed its colour to \texttt{done} before $r_{n-1}$ started executing \textsc{GoToTarget()}. So in this case, $r_{n-1}$ sees at least one robot with colour \texttt{done}  and moves to its designated target location $t_{r_{n-1}}= t_{n-2}$. So, we can conclude that after a finite time, all robots except the robot $r_0$ with colour \texttt{leader} move to their designated target locations embedded on the grid as described earlier. 

\begin{figure}[!htb]\centering
   \begin{minipage}{0.45\textwidth}
    \includegraphics[height=5.5cm,width=6cm]{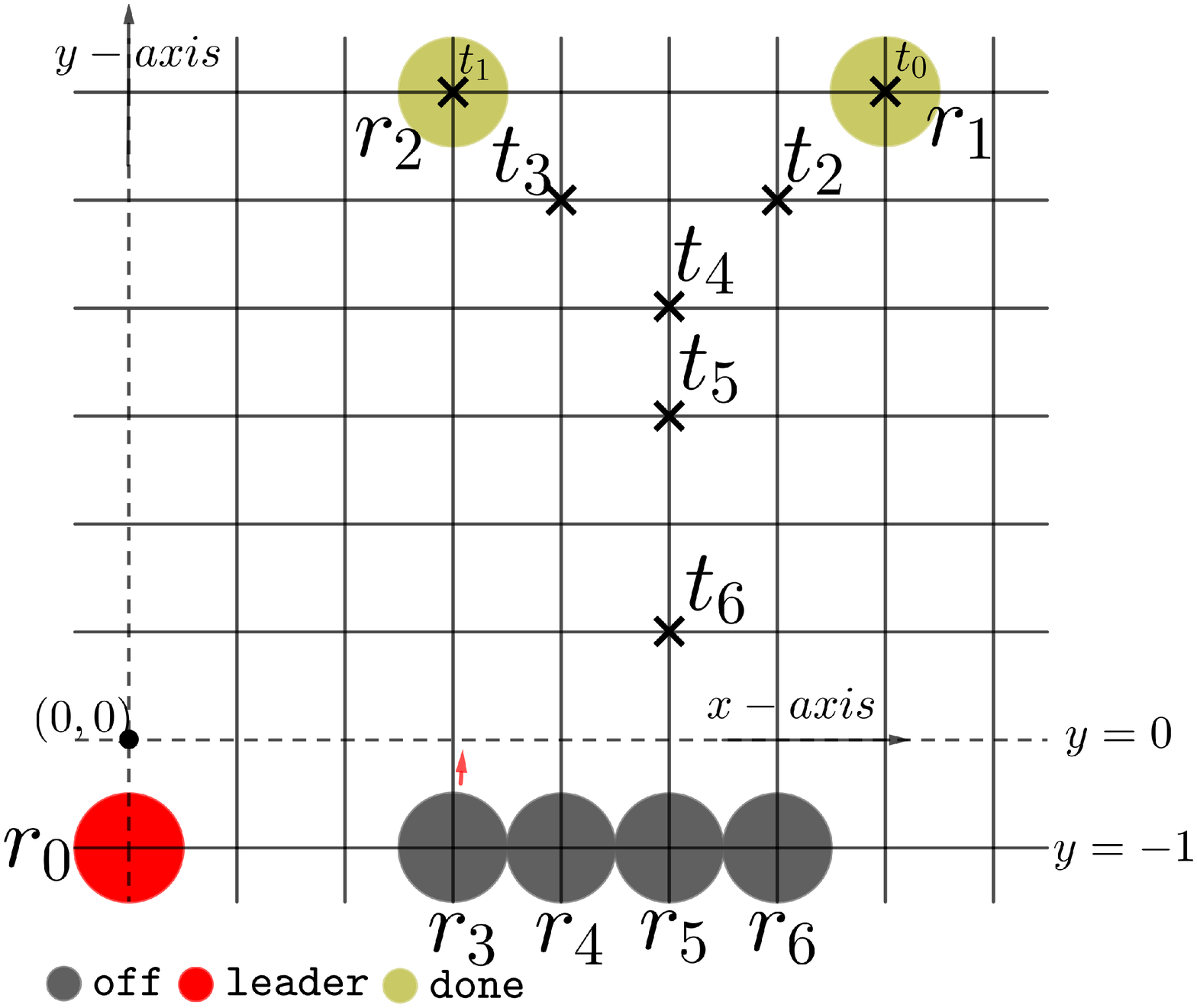}
     \caption{$r_1$ and $r_2$ already moved to their target position and changed their colour to \texttt{done}. $r_3$ sees $r_0$ with colour \texttt{leader} and moves to line $y=0$ by moving vertically.}\label{fig:targetform1}
   \end{minipage}
   \hfill
   \begin {minipage}{0.45\textwidth}
    \includegraphics[height=5.5cm,width=6cm]{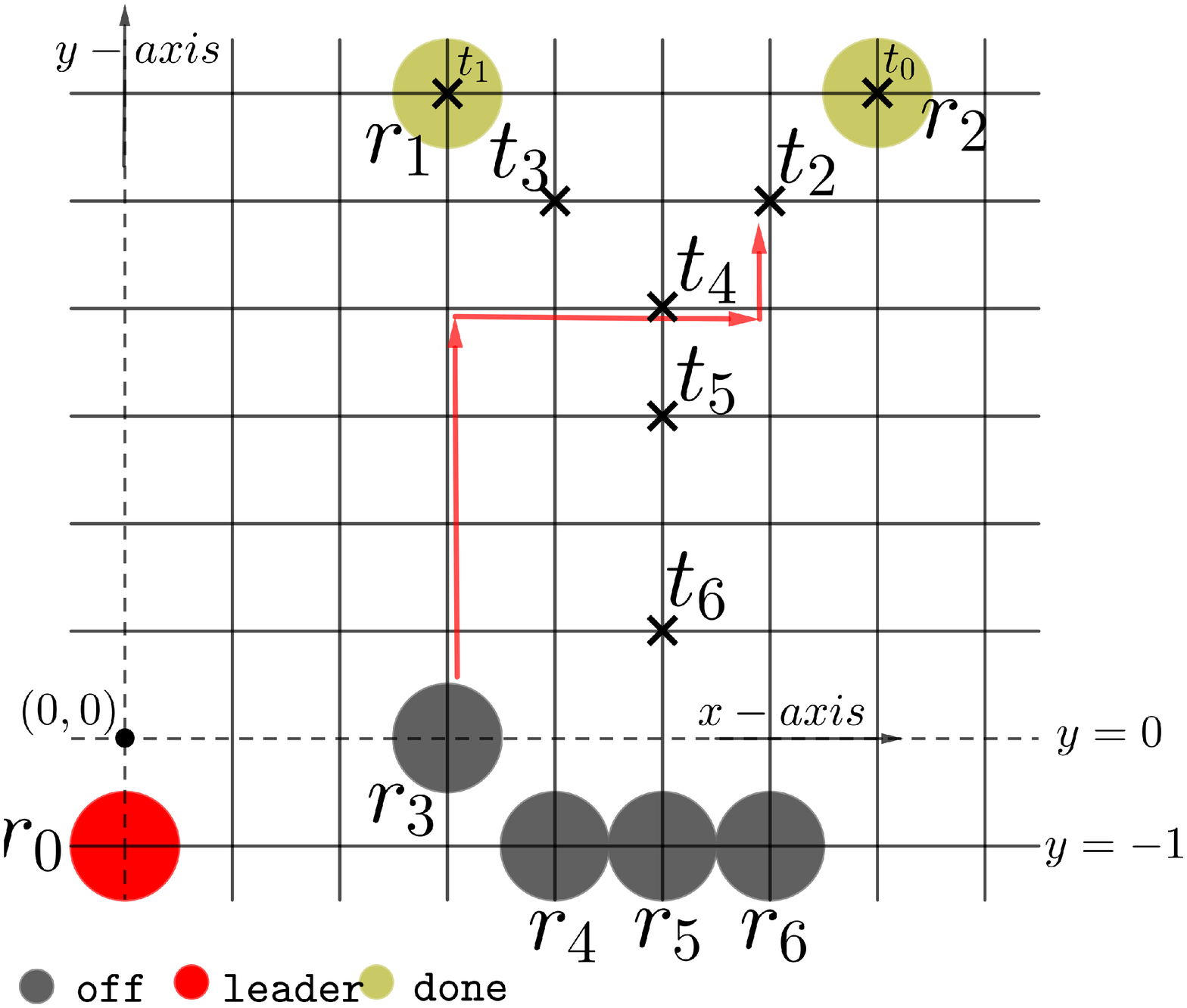}
     \caption{From $(3,0)$, $r_3$ can see 3 robots on $(4,-1), (5,-1), (6,-1)$ and moves to $t_{2}$ by executing the method \textsc{GoToTarget()}. }\label{Fig:targetform2}
   \end{minipage}
\end{figure}
\begin{figure}[!htb]\centering
   \begin{minipage}{0.45\textwidth}
    \includegraphics[height=5cm,width=6cm]{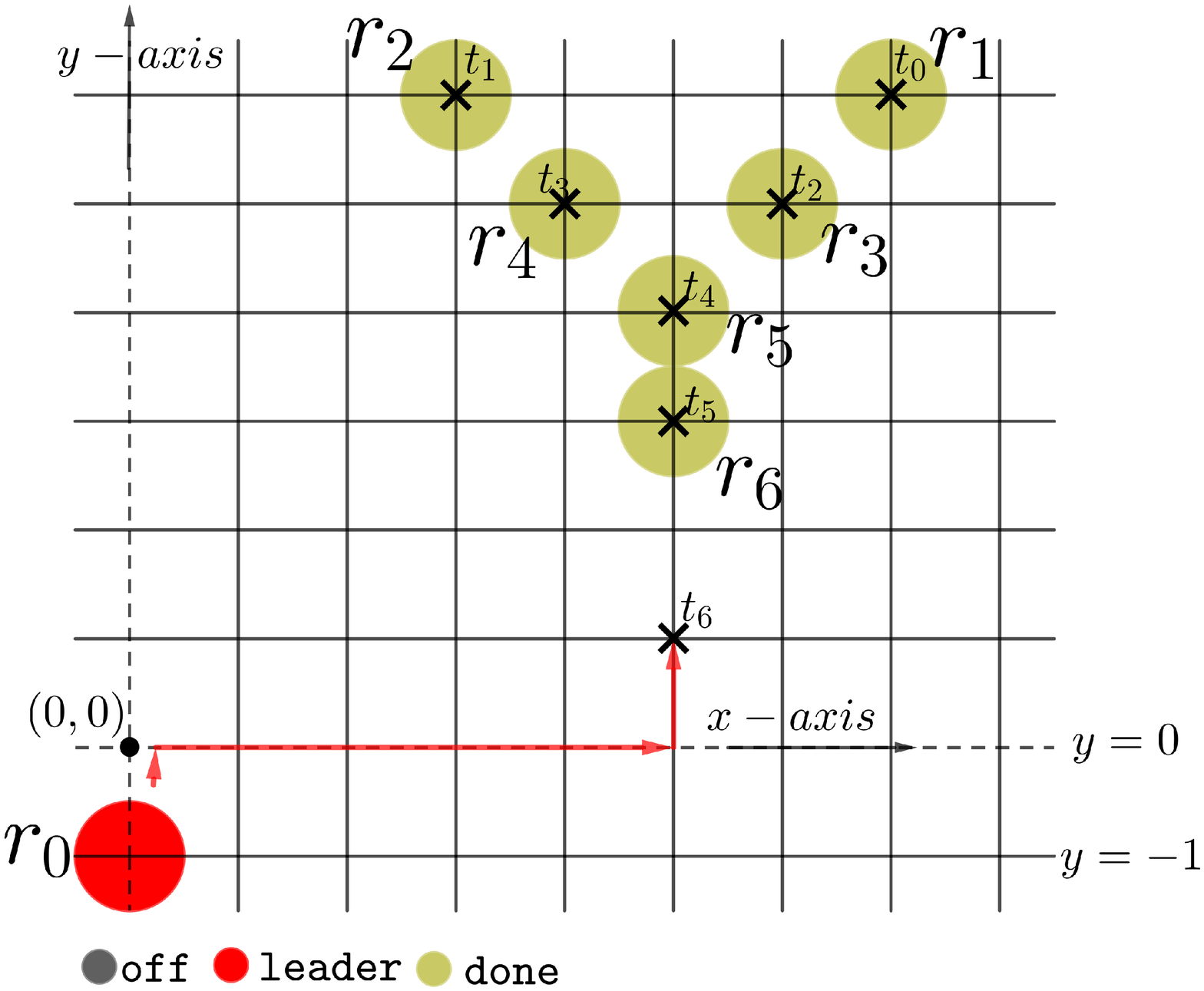}
     \caption{When all other robots except $r_0$ with colour \texttt{leader} reach their corresponding target positions, $r_0$ moves to $t_6$ executing the method \textsc{LeaderMove()}.}\label{fig:targetform3}
   \end{minipage}
   \hfill
   \begin {minipage}{0.45\textwidth}
    \includegraphics[height=5cm,width=6cm]{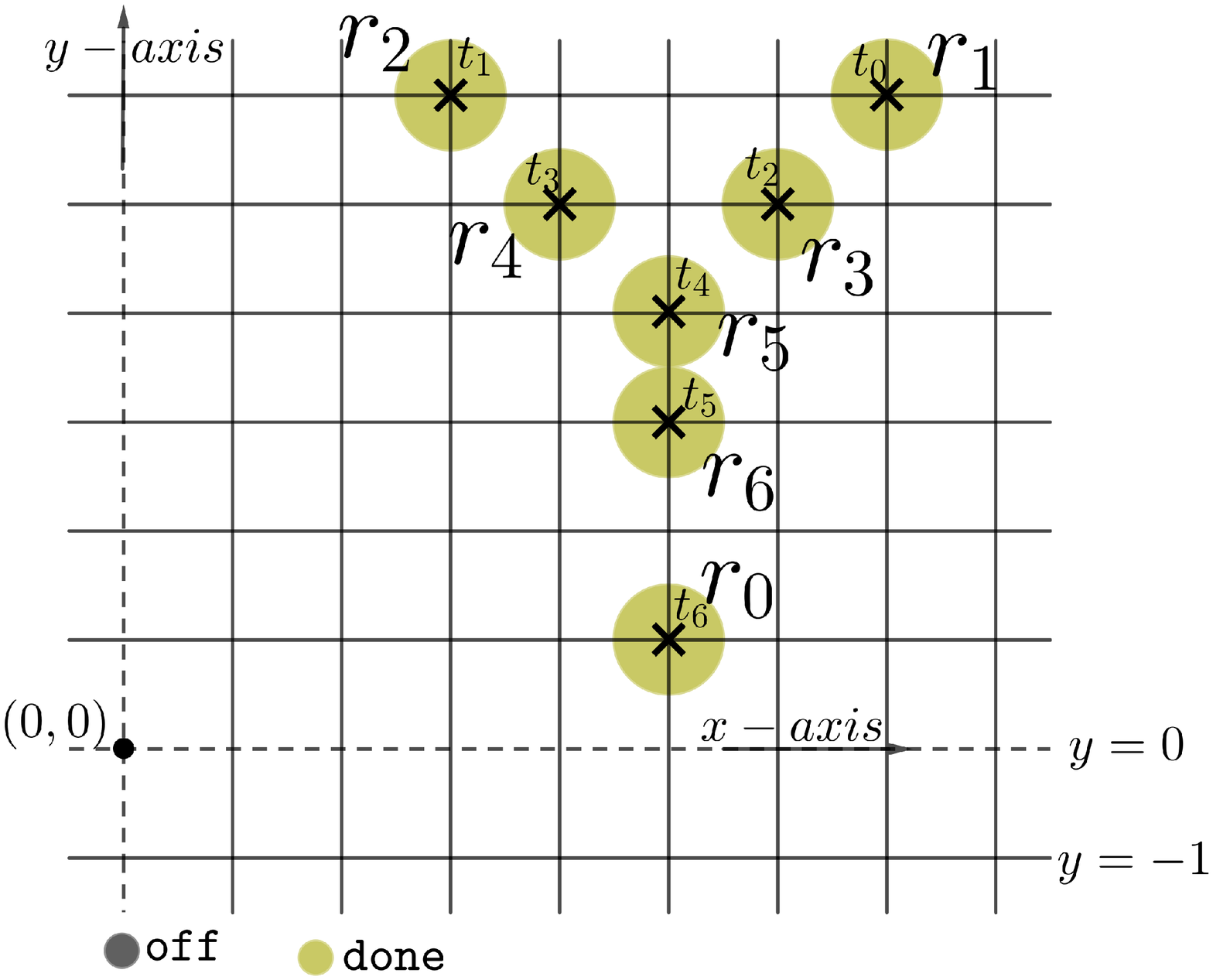}
     \caption{After $r_0$ reaches $t_6$, it changes its colour \texttt{done} and the target pattern has been formed.}\label{Fig:targetformed}
   \end{minipage}
\end{figure}

From the above discussion, we can conclude the following lemma. 
\begin{lemma}
\label{flemmap2col2}
During the execution of the method \textsc{GoToTarget()}, a robot $r$ never collides with another robot in the configuration.
\end{lemma}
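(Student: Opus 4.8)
The plan is to decompose an execution of \textsc{GoToTarget} into its three consecutive elementary moves, show that each is collision-free, and then use the guards of Algorithm~\ref{algo_ph3} to rule out that any \emph{other} robot enters the path of $r$ while $r$ performs these moves (possibly slowly, under the $\mathcal{ASYNC}$ scheduler). Recall the situation in which \textsc{GoToTarget} is invoked, as set up by the discussion preceding this lemma: $r$ is the unique robot currently off the line $\mathcal{L}_H(r_0)$, it sits at some point $(s,0)$, is singleton on the line $y=0$, and still carries the colour \texttt{off}; the robots already placed occupy the targets $t_0,\dots,t_{k-1}$ with colour \texttt{done}, where $t_r=t_k$ is the target assigned to $r$; and every remaining undecided robot, together with $r_0$, lies on $\mathcal{L}_H(r_0)$. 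The three moves are: (a) a vertical move upward from $(s,0)$ to $(s,\,y_{t_r}-1)$; (b) a horizontal move from $(s,\,y_{t_r}-1)$ to $(x_{t_r},\,y_{t_r}-1)$; and (c) a single step upward to $t_r=(x_{t_r},y_{t_r})$.

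For moves (a) and (c) I would invoke the target-embedding rule: for indices $i<j$ the target $t_i$ lies on a horizontal line at the same height as $t_j$ or strictly above it, and every target has $y$-coordinate at least $1$. Hence each already occupied target $t_j$ with $j<k$ lies on a line of height $\ge y_{t_r}$, i.e. strictly above the line $y=y_{t_r}-1$; together with the facts that $y=0$ carries only $r$ and that $\mathcal{L}_H(r_0)$ lies below $r$, this shows the whole column segment $\{(s,y): 1\le y\le y_{t_r}-1\}$ traversed in move~(a) is empty and that $r$ comes to rest strictly below any \texttt{done} robot sharing its column. After move~(a), $r$ is singleton on the line $y=y_{t_r}-1$ (every \texttt{done} robot is strictly above it, every undecided robot strictly below it), so the horizontal segment of move~(b) is empty. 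Finally $t_r=t_k$ is the next target to be filled and is therefore vacant, so move~(c) lands on an empty grid point. Thus, as long as no other robot moves, $r$ meets no one; this parallels the argument of Lemma~\ref{flemmap2col1} for \textsc{GoToLine}.

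It then remains to handle asynchrony. Robots with colour \texttt{done} never move again, and $r_0$ does not move until all robots visible to it carry \texttt{done}, so neither interferes with $r$. Every still-undecided robot on $\mathcal{L}_H(r_0)$ has $r$ strictly in its upper open half with colour \texttt{off}, so the guard ``$r_0\in\mathcal{L}_H(r)$ and $H_U^O(r)$ has no robot with light \texttt{off}'' fails for it and it makes a null move; hence no robot leaves $\mathcal{L}_H(r_0)$ while $r$ is in transit, and $r$ is the only moving robot throughout its execution of \textsc{GoToTarget}. Combined with the previous paragraph, this gives the claim.

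The \emph{main obstacle}, I expect, is making the visibility statement just used rigorous — that every undecided robot on $\mathcal{L}_H(r_0)$ genuinely sees the in-transit robot $r$, given that the robots are opaque and fat and that arbitrarily many \texttt{done} robots may sit above. The plan is to reuse the tangent-line/convex-region technique of Lemma~\ref{flemma1} and the visibility lemmas following it: since $r$ stays strictly above $\mathcal{L}_H(r_0)$ and no higher than the line $y=y_{t_r}-1$, while every \texttt{done} robot lies strictly above $y=y_{t_r}-1$, no \texttt{done} robot can intersect a segment joining a boundary point of $r$ to a boundary point of a robot on $\mathcal{L}_H(r_0)$, which yields mutual visibility. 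A companion point requiring the same care is the invariant that at most one robot is off $\mathcal{L}_H(r_0)$ at any time: this follows by induction along the left-to-right order in which robots leave the line, the handoff to the next robot being enabled only after the current in-transit robot has reached its target and switched to \texttt{done}, since until then its \texttt{off} colour in the common upper half-plane blocks the next upward move.
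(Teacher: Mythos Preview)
Your proposal is correct and follows essentially the same route as the paper, whose proof is nothing more than the discussion immediately preceding the lemma: decompose \textsc{GoToTarget} into the vertical rise to $y=y_{t_r}-1$, the horizontal slide to $(x_{t_r},y_{t_r}-1)$, and the final step up, then use the target-ordering rule to place all \texttt{done} robots at height $\ge y_{t_r}$ and the guard on line~\ref{code ss2: 4} to freeze the robots on $\mathcal{L}_H(r_0)$. You are in fact more careful than the paper, which simply asserts that the robots on $\mathcal{L}_H(r_0)$ ``see $r$ with colour \texttt{off} on their upper open half'' without addressing the fat-opaque visibility issue you flag; your plan to handle it via the convex-region argument of Lemma~\ref{flemma1} (noting that every potential obstructer is either on $\mathcal{L}_H(r_0)$ itself or strictly above $y=y_{t_r}-1$) is the right one, and your inductive maintenance of the ``at most one robot off $\mathcal{L}_H(r_0)$'' invariant is likewise only implicit in the paper.
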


Now the robot $r_0$ with colour \texttt{leader} sees that all the visible robots  have colour \texttt{done}. So, it now moves to its designated target location $t_{n-1}$ by a method \textsc{LeaderMove()}. The method \textsc{LeaderMove()} is described as follows. In this method, $r_0$ first moves to $(0,0)$. Now, let the lowest horizontal line be $\mathcal{H}_{last}$ having a robot with colour \texttt{done}. Now, note that $r_0$ can always see the leftmost robot $r_{n-1}$ on the horizontal line $\mathcal{H}_{last}$. So, $r_0$ can always know its own position on the global coordinate as it knows the target position of $r_{n-1}$ from the input even if it is not at $(0,-1)$. Now the target was embedded in such a way that the target position $t_{n-1}$ of $r_0$ is on line $y=1$. Let $(x_{t_0},1)$ be the target position of $r_0$. Now from $(0,0)$, $r_0$ moves horizontally to the location $(x_{t_0},0)$ and then moves vertically once to $t_{n-1} = (x_{t_0},1)$ and changes the colour to \texttt{done}. Note that below $y=1$, there is no other robot while $r_0$ starts moving (Figure \ref{fig:targetform3}). So, we can conclude the following lemma.

\begin{lemma}
\label{flemmap2col3}
While executing the method \textsc{LeaderMove()}, a robot $r$ never collides with other robots in the configuration.
\end{lemma}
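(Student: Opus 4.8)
The plan is to follow the three–leg decomposition of \textsc{LeaderMove()} spelled out immediately before the statement: starting from its position $(0,-1)$, the leader $r_0$ (i) moves once to $(0,0)$; (ii) moves horizontally along the line $y=0$ to $(x_{t_0},0)$; and (iii) moves once more upward to $t_{n-1}=(x_{t_0},1)$, then turns \texttt{done}. To prove the lemma it suffices to show that every grid point $r_0$ steps onto after leaving $(0,-1)$ is unoccupied at the instant $r_0$ arrives, and that no other robot ever moves onto the cell currently holding $r_0$.

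First I would pin down the configuration at the instant $r_0$ activates and executes \textsc{LeaderMove()}. By the guard in Algorithm~\ref{algo_ph3}, every robot visible to $r_0$ has colour \texttt{done}. Using the visibility guarantees established above — in particular that $r_0$, being the leftmost robot on the lowest occupied horizontal line, sees the leftmost robot of each higher horizontal line, together with the single-file discipline of \textsc{GoToLine()}/\textsc{GoToTarget()} (at most one robot at a time traverses the line $y=0$, and robots leave $\mathcal{L}_H(r_0)=\{y=-1\}$ one by one) — I would argue that in fact \emph{every} non-leader robot is already \texttt{done}: a robot still carrying colour \texttt{off} would either sit on $y=-1$ to the right of $r_0$ or be the unique robot then moving along $y=0$, and in either situation it is visible to $r_0$, contradicting the guard. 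Since a robot with colour \texttt{done} has no enabled action in Algorithm~\ref{algo_ph3}, it never moves again; hence throughout \textsc{LeaderMove()} the other $n-1$ robots remain fixed at $t_0,\dots,t_{n-2}$.

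Next I would invoke the embedding rules for Phase~2: every target position has strictly positive $y$-coordinate, hence $y$-coordinate $\ge 1$ on the grid, so the lines $y=-1$ and $y=0$ carry no robot other than $r_0$ itself; and $t_{n-1}=(x_{t_0},1)$ is a target position, hence distinct from each of $t_0,\dots,t_{n-2}$, so the cell $(x_{t_0},1)$ is empty. The three legs are then immediate. Leg (i) ends on $(0,0)$, which lies on the empty line $y=0$, and nothing moves onto the vacated $(0,-1)$. Leg (ii) stays entirely on the empty line $y=0$, and $r_0$ is the only robot that ever moves there in this phase, so no cell along it is contested. Leg (iii) ends on the empty cell $(x_{t_0},1)$. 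Therefore no collision occurs during \textsc{LeaderMove()}, and the lemma follows.

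I expect the main obstacle to be the first part: certifying that $r_0$ does not fire \textsc{LeaderMove()} prematurely. The guard constrains only the \emph{visible} robots, so the argument must exclude an invisible, not-yet-\texttt{done} robot lurking on $y=-1$ or $y=0$ where it could block $r_0$ or be struck by it — and this is exactly where the visibility lemma for the leftmost robot of a horizontal line and the ordered, one-at-a-time evacuation of $\mathcal{L}_H(r_0)$ and the line $y=0$ are needed. The geometric bookkeeping for the three legs, by contrast, is routine once the configuration at the start of \textsc{LeaderMove()} has been nailed down.
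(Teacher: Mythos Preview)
Your proposal is correct and follows essentially the same three-leg decomposition as the paper: the paper's argument is the single sentence ``Note that below $y=1$, there is no other robot while $r_0$ starts moving,'' from which collision-freeness of legs (i) and (ii) is immediate, with leg (iii) landing on the unoccupied target $t_{n-1}$. Your treatment is more careful than the paper's on one point --- you explicitly justify why the guard ``all visible robots are \texttt{done}'' implies \emph{all} non-leader robots are \texttt{done}, whereas the paper simply asserts the conclusion --- but the underlying route is the same.
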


% \begin{lemma}
% At any time $T$ in \textit{Phase 2}, a robot with light \texttt{off} can always decide whether it is going to same horizontal line with light \texttt{leader} or going to its target position. 
% \end{lemma}

So, from Lemmas \ref{flemmap2col1}, \ref{flemmap2col2} and \ref{flemmap2col3}, we can directly conclude the following result.
\begin{lemma}
During movement of robots in \textit{Phase 2}, no collision occurs.
\end{lemma}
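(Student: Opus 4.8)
The plan is to reduce the statement to the three movement primitives whose collision-freeness has already been established. First I would inspect Algorithm~\ref{algo_ph3} and check that every non-null displacement performed by a robot during \textit{Phase 2} is a single step of exactly one of the procedures \textsc{GoToLine()}, \textsc{GoToTarget()} and \textsc{LeaderMove()}. Concretely: a robot with light \texttt{moving1}, \texttt{candidate}, \texttt{terminal1} or \texttt{off} that is leftmost on its horizontal line, sees $r_0$ below it, and has the strip $H_B^O(r)\cap H_U^O(r_0)$ empty, moves by \textsc{GoToLine()} as long as the line $y=-1$ is not yet saturated and by \textsc{GoToTarget()} afterwards (together with the single vertical step from $(i,-1)$ to $(i,0)$, which is the first move of \textsc{GoToTarget()}); the unique robot with light \texttt{leader} moves only by \textsc{LeaderMove()}; and no other branch of Algorithm~\ref{algo_ph3} produces motion. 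Hence the moves in \textit{Phase 2} are partitioned among the three primitives.

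Next, a collision occurs only if two distinct robots simultaneously occupy the same grid point; since the domain is discrete and moves are instantaneous, it suffices to show that whenever a robot $r$ takes a step, the target grid point is free of other robots at that instant. By the partition above, that step belongs to one of the three primitives, and Lemmas~\ref{flemmap2col1}, \ref{flemmap2col2}, \ref{flemmap2col3} respectively guarantee that, throughout an execution of \textsc{GoToLine()}, \textsc{GoToTarget()} or \textsc{LeaderMove()}, the executing robot collides with \emph{no} other robot in the configuration, regardless of what the remaining robots are doing. Applying the appropriate lemma to each moving robot closes the argument, so the conclusion follows directly from the three lemmas, exactly as anticipated in the text preceding the statement.

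The part that deserves care—and where I would concentrate the verification—is the exhaustiveness of the partition together with the robustness of the per-primitive lemmas under the asynchronous interleaving of different robots. For exhaustiveness one must argue that no leftover \textit{Phase 1} behaviour can coexist with \textit{Phase 2} motion: \textit{Phase 1} has terminated, the only surviving \textit{Phase 1} colour is \texttt{leader}, and every remaining robot re-enters the \textit{Phase 2} rules with colour \texttt{off} (or is relabelled to \texttt{off} on first activation in \textit{Phase 2}), so each of its displacements is routed through one of the three procedures. For robustness, note that Lemmas~\ref{flemmap2col1}--\ref{flemmap2col3} are stated against an arbitrary configuration, not under the assumption that the other robots stand still; their proofs instead exploit structural invariants—emptiness of $H_B^O(r)\cap H_U^O(r_0)$ during a \textsc{GoToLine()}/\textsc{GoToTarget()} run, singleton-ness of $r$ on the intermediate horizontal lines, and the fact that below $y=1$ no robot is present while the \texttt{leader} moves—and each such invariant is preserved precisely because any other robot capable of breaking it is itself blocked, in Algorithm~\ref{algo_ph3}, by seeing $r$ (with colour \texttt{off} or \texttt{leader}) inside the forbidden region. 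Once this mutual-blocking observation is spelled out, the three lemmas combine with no further work.
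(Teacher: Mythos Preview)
Your proposal is correct and follows the same approach as the paper: the paper's own proof is the single sentence ``from Lemmas~\ref{flemmap2col1}, \ref{flemmap2col2} and \ref{flemmap2col3}, we can directly conclude the following result,'' and your argument is precisely this reduction, supplemented with the exhaustiveness and mutual-blocking justifications that the paper leaves implicit in the surrounding discussion. One small wording slip: you write that ``the only surviving \textit{Phase 1} colour is \texttt{leader},'' but Algorithm~\ref{algo_ph3} (and your own first paragraph) explicitly handles robots that still carry \texttt{moving1}, \texttt{candidate} or \texttt{terminal1} at the start of \textit{Phase 2}; this does not affect the argument since those robots are routed through \textsc{GoToLine()} as you note.
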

Now from the above lemmas and the discussions, we can now finally conclude the following theorem.
 
 \begin{theorem}
There exists a $T>0$ such that $\mathbb{C}(T)$ is a final configuration similar to the given pattern and has all robots with light \texttt{done} (Figure \ref{Fig:targetformed}).
\end{theorem}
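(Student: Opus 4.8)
The plan is to compose the two phases together with the collision-freeness lemmas, and to prove termination of Phase~2 by induction on the number of non-leader robots that have not yet reached their target.

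First I would invoke the theorem stated just above the Phase~2 subsection: there is a time $T_0>0$ at which $\mathbb{C}(T_0)$ contains a unique robot $r_0$ with colour \texttt{leader}, with $H_L^C(r_0)$ empty and one of $H_U^C(r_0)$, $H_B^C(r_0)$ empty. As described at the beginning of Phase~2, every robot seeing $r_0$ fixes the same global coordinate system with $r_0$ at $(0,-1)$, so the embedded targets $t_0,\dots,t_{n-1}$ are well defined and, being a common scaled/reflected/translated copy of the input, determine a pattern similar to the given one. I would then invoke Lemma~\ref{flemmap2col11}: there is $T_1\ge T_0$ such that in $\mathbb{C}(T_1)$ the non-leader robots are all coloured \texttt{off} and occupy, without gaps, the cells $(1,-1),\dots,(n-1,-1)$ on the line $y=-1$, with $r_0$ at $(0,-1)$; Lemma~\ref{flemmap2col1} certifies that no collision occurred along the way.

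For target formation I would maintain the invariant that, at any time after $T_1$ and before $r_0$ moves, the robots still on $\mathcal{L}_H(r_0)$ form a block $(j+1,-1),\dots,(n-1,-1)$ of colour \texttt{off}, the robots already placed occupy exactly $t_0,\dots,t_{j-1}$ with colour \texttt{done}, and at most one robot is in transit (on $y=0$ or on a \textsc{GoToTarget()} path). The inductive step: as long as the block is non-empty, a robot with no \texttt{off} robot in its upper open half ascends from $(s,-1)$ to $(s,0)$, becomes singleton there, hence sees all of $\mathcal{L}_H(r_0)$, reads which suffix $(n-i,-1),\dots,(n-1,-1)$ remains, moves to its designated target $t_{n-i-2}$ by \textsc{GoToTarget()}, and turns \texttt{done}; the index bookkeeping under obstructed visibility, and the disambiguation of the last robot $r_{n-1}$ between \textsc{GoToLine()} and \textsc{GoToTarget()} (using $n>2$ and the presence of an already-\texttt{done} robot), are exactly the content of the paragraph preceding Lemma~\ref{flemmap2col2}. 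Collision-freeness of each such move is Lemma~\ref{flemmap2col2}, and since every $t_i$ has strictly positive coordinates the placed \texttt{done} robots never obstruct the block while it empties. As each step strictly decreases the number of non-\texttt{done} non-leader robots, after finitely many steps all $n-1$ of them are \texttt{done} and occupy $t_0,\dots,t_{n-2}$.

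Finally, once every robot visible to $r_0$ is coloured \texttt{done}, $r_0$ computes $t_{n-1}$ (which lies on $y=1$) from its view and reaches it by \textsc{LeaderMove()}, turning \texttt{done}; this move is collision-free by Lemma~\ref{flemmap2col3}, as nothing lies below $y=1$ while it moves. At the resulting time $T$ all $n$ robots occupy $t_0,\dots,t_{n-1}$ with colour \texttt{done}, and since $\{t_0,\dots,t_{n-1}\}$ is the common embedding of the input pattern, $\mathbb{C}(T)$ is similar to it, which is the claim. I expect the \emph{main obstacle} to be making the third-paragraph invariant rigorous against the adversarial asynchronous scheduler --- in particular, that the ascending robot always reads the correct target index from its partial view and that a second robot never starts ascending while one is still in transit, i.e.\ that the ``one robot at a time'' sequencing the algorithm relies on genuinely holds.
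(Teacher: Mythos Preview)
Your proposal is correct and takes essentially the same approach as the paper: the paper's own proof is literally the single sentence ``from the above lemmas and the discussions, we can now finally conclude the following theorem,'' and your plan simply makes that composition explicit by chaining the Phase~1 theorem, Lemma~\ref{flemmap2col11}, the target-formation discussion, and Lemmas~\ref{flemmap2col1}--\ref{flemmap2col3}. The explicit invariant and induction you add are a faithful unpacking of the paper's informal ``discussions,'' and the asynchrony obstacle you flag (the one-at-a-time sequencing of ascents from $y=-1$) is exactly the point the paper leaves to prose rather than proving formally.
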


\section{Conclusion}
The problem of arbitrary pattern formation ($\mathcal{APF}$) is a widely studied area of research in the field of swarm robotics. It has been studied under various assumptions on plane and discrete domain (eg. infinite regular tessellation grid). With obstructed visibility model, this problem has been considered on plane and infinite grid using luminous opaque robots. But using fat robots (i.e. robots with certain dimensions), it is only done in plane. In this paper, we have taken care of this. We have shown that with a swarm of luminous opaque fat robots having one-axis agreement on an infinite grid, any arbitrary pattern can be formed from an initial configuration which is either asymmetric or has at least one robot on the line of symmetry using one light having 9 distinct colours which are less than the number of colours used to form an arbitrary pattern on plane using opaque and fat robots with one-axis agreement. For future courses of research, it would be interesting to see if the same problem can be solved using less number of colours under the same assumptions.

\vspace{1cm}
\textbf{Acknowledgements:} The second author is supported by UGC, the Government of India and the third author is supported by the West Bengal State government Fellowship Scheme. We thank the anonymous reviewers for their valuable comments which helped us to improve the quality and presentation of this paper.

\bibliographystyle{tfnlm}
\bibliography{interactnlmsample}

\end{document}